\newcommand\cyr{%
  \renewcommand\rmdefault{wncyr}%
  \renewcommand\sfdefault{wncyss}%
  \renewcommand\encodingdefault{OT2}%
  \normalfont
  \selectfont
}
\DeclareTextFontCommand{\textcyr}{\cyr}
\newtheorem{Theorem}{Theorem}[section]
\newtheorem{Lemma}{Lemma}[section]
\newtheorem{Proposition}{Proposition}[section]
\newtheorem{Corollary}{Corollary}[section]
\newtheorem{Ex}{Example}[section]
\newtheorem{Remark}{Remark}[section]
\theoremstyle{remark}
\newcommand{\Sym}{\operatorname{Sym}}
\newcommand{\be}{\begin{equation}}
\newcommand{\ee}{\end{equation}}
\newcommand{\R}{\mathbb{R}}\newcommand{\Id}{\textrm{\rm Id}}
\newcommand{\gl}{\mathrm{gl}}
\newcommand{\ddd}{\mathrm{d}}
\newcommand{\pd}[2]{\frac{\partial#1}{\partial#2}}
\newcommand{\dd}{{\mathrm d}\,}
\newcommand{\tr}{\operatorname{tr}}
\newcommand{\trace}{\operatorname{tr}}
\newcommand{\weg}[1]{}
\title{St\"ackel problem for non-diagonal Killing tensors: Yano-Patterson lifts, algebra of strong symmetries  and  quadratic  in momenta integrals}\author{Alexey V.\ Bolsinov\footnote{School of Mathematics,
 Loughborough University,
 LE11 3TU, UK  and Institute of Mathematics and Mathematical Modeling, Almaty, Kazakhstan\ \ \quad {\tt A.Bolsinov@lboro.ac.uk}},    
Andrey Yu.\  Konyaev\footnote{Faculty of Mechanics and Mathematics and Center for Fundamental and Applied Mathematics, Moscow State University, 119992, Moscow, Russia  and Institute of Mathematics and Mathematical Modeling, Almaty, Kazakhstan
 \ \ \quad{\tt  maodzund@yandex.ru}}  \, and 
    Vladimir S.\ Matveev\footnote{
Institut f\"ur Mathematik, Friedrich Schiller Universit\"at Jena,
07737 Jena,  Germany  \ \ \quad {\tt  vladimir.matveev@uni-jena.de}}}
\date{}
\begin{document}

\maketitle

\begin{abstract}
    We construct integrable Hamiltonian systems  such that  functionally independent Poisson commuting integrals  are  quadratic in the momenta. Unlike the classical St\"ackel setting, we allow the associated self-adjoint   $(1,1)$-tensors $K_\alpha$ to be non-diagonalisable and have Jordan blocks and points where the Segre characteristic changes.
Our construction is  covariant and is based on Nijenhuis geometry: starting from a $\gl$-regular Nijenhuis operator $L$ and its symmetry algebra, we obtain a large class of such integrable systems in a coordinate-free and 
signature-independent way; it is explicit once we have chosen a $\gl$-regular Nijnhuis operator.  In the diagonalisable case, our construction  reproduces the St\"ackel construction, and in dimension $n=2$ it recovers all known systems of this type;   for $n\ge 3$ most of our systems are new. 
Finally, we establish applications to infinite-dimensional integrable systems of hydrodynamic type:  namely, we show that for Killing $(1,1)$-tensors $ K_\alpha$ corresponding to our example  the evolutionarly PDE system  of  hydrodynamic type $u_t = K_\alpha(u)u_x$ is integrable. We  describe its   symmetries, and use generalised reciprocal transformations to reduce it  to a system with  constant coefficient matrices.
\end{abstract}

\tableofcontents
\section{Introduction} \label{sect1}

Let
$g=(g_{ij})$ be an
$n$-dimensional pseudo-Riemannian metric of any signature on a connected open subset  $U\subseteq \mathbb{R}^n(u^1,\dots, u^n)$.
We study the existence of    $n$ functionally independent  quadratic in momenta and Poisson commuting  functions 
\begin{equation}\label{eq:formmetric}
f_\alpha:T^*U\to \mathbb{R}, \ \  f_\alpha(p,u) = \tfrac{1}{2} g(K_\alpha p, p) + V_\alpha(u), \ \  \alpha=1,\dots, n.
\end{equation} 
 Here $K_\alpha=K_\alpha(u)$ are (1,1)-tensor fields self-adjoint with  respect to the metric $g$. We assume $K_1=\Id, $  so that  $f_1$ is the sum of the kinetic energy corresponding to $g$ 
 and a potential $V_1$.  \weg{ In the tensorial notations, the function $F= \tfrac{1}{2} g(Kp, p)+ V$ reads $F=\tfrac{1}{2}g^{sj} K_s^ip_ip_j + V$.}

 The condition that the functions Poisson commute, $\{f_\alpha, f_\beta\}=0$,  implies that the corresponding kinetic parts  also Poisson 
 commute:   $\bigl\{g(K_\alpha p, p),  g(K_\beta p, p)\bigr\}=0$. Therefore,  $K_\alpha$ are Killing  tensors of the metric $g$, that is, each of them satisfies the equation 
 \begin{equation} \label{eq:killin}
     \nabla_k K_{ij} + \nabla_i K_{jk}+ \nabla_j K_{ki}=0, \quad  \mbox{where } K_{ij} = K_i^s g_{sj}.
 \end{equation}

We impose the following additional condition: for all $\alpha,\beta=1,\dots, n$, the $(1,1)$-tensor fields $K_\alpha, K_\beta$ commute as matrices: $K_\alpha K_\beta= K_\beta K_\alpha$ ($\Longleftrightarrow   \   (K_\alpha)^{i}_s (K_\beta)^s_j = (K_\beta)^{i}_s (K_\alpha)^s_j  $).

A well-known example of such a situation comes from the so-called St\"ackel construction\footnote{The construction appeared already in \cite[\S\S 13-14]{Liouville}, see also the discussion in \cite[pp. 703--705]{Luetzen}}, which we recall following  \cite{eisenhart,disser,Bbook}. \weg{\cite{eisenhart,disser,stackel1,stackel2,Bbook}} Take an invertible $n\times n$ matrix $S= (S_{ij})$ with $S_{ij}$ being a
function of the $i$-th variable $u^i$ only. Next, consider the functions $f_\alpha$, $\alpha =1,\dots, n$, given by the following system of linear equations
\begin{equation}
\label{eq:St_intro}
S\mathbb{ F}  = \mathbb{P},
\end{equation}
where  $\mathbb{F}= \left(f_1, f_2,\dots ,f_n\right)^\top$ and $\mathbb{P}= \bigl(p_1^2+ U_1(u^1), p_2^2+ U_2(u^2),\dots ,p_n^2+ U_n(u^ n)\bigr)^\top$. Taking one of them (say, the first one, provided the inverse matrix to $S$ has no zeros in the first row) as 
the Hamiltonian, one obtains a Hamiltonian system with independent Poisson commuting integrals $f_\alpha$ of type \eqref{eq:formmetric} satisfying  the above conditions.

The St\"ackel example played and still plays an important role in mathematical physics and differential geometry. The metric coming from this construction admit separation of variables, which allows one to introduce the Hamilton--Jacobi function via an integral formula and hence find and study the solutions of Hamiltonian equations implicitly, and in certain cases explicitly. A spectacular demonstration of this goes back to Jacobi \cite{Jacobi} and Neumann \cite{Neumann}, who used quadratic in momenta integrals and the corresponding separation of variables to describe geodesics on the ellipsoid and trajectories of the so-called Neumann problem in terms of special functions. In fact, the separation of variables is still the only analytic \weg{(i.e., non-numeric and non-Lie-algebraic)} method to find exact solutions for physically related systems. For example, the book \cite{MS1961} on field theory places particular emphasis on coordinate systems useful for studying various field equations; most of them come from separation of variables. An active research direction initiated in \cite{eisenhart} and continued and finalised in e.g. \cite{Kalninsbook, Kress_book, MaBl2015, separation} is the description of coordinate systems for constant curvature spaces in which the metric takes the St\"ackel form.

In differential geometry, Killing tensors appear naturally in various interesting situations. For example, geodesic equivalence of metrics, which was studied by Beltrami, Levi-Civita and other classics, generates Poisson commuting quadratic in momenta integrals \weg{\cite{MT1997, BM2003}} \cite{MT1997}. These integrals turned out to be an effective tool in this theory and allowed one to solve classical conjectures \cite{BMR2021,BMM2008, Matveev2012}.

Killing tensors were also studied within the framework of general relativity. A spectacular result was the description of geodesics of the so-called Kerr solution  \cite{carter1968,Penrose1970}. Later, many results on the search for and applications of Killing tensors in general relativity and celestial mechanics appeared, e.g. \cite{Woodhouse1975,bagrov, Osetrin2016, Obukhov2021, DRW2004}, to cite a few. Similar to the Kerr solution, the metrics studied in these papers are not genuine St\"ackel metrics, but can be reduced to the St\"ackel form after quotienting with respect to a subgroup of the isometry group.

In these papers and also in the St\"ackel construction, there locally exist coordinate systems in which both the metric and Killing tensor are given by diagonal matrices. Moreover, under the assumption that Killing tensors $K_\alpha$ are linearly independent and the corresponding integrals $f_\alpha$ commute, the simultaneously diagonalisability of $K_\alpha$'s on every cotangent fiber implies simultaneously diagonalisability in a suitable coordinate system, which automatically leads to the St\"ackel construction, see \cite{KM1980, Kiyohara}. In contrast to the Riemannian signature, in the pseudo-Riemannian setting there may be linear-algebraic restrictions on the diagonalisation of self-adjoint $(1,1)$-tensors, since they  may have nontrivial Jordan blocks; the latter case was completely missing in the references above.

The study of non-diagonalisable Killing tensors has of course attracted the attention of many mathematicians and physicists, since it would be natural to try to extend a well-known theory with such impressive applications to a more general case of obvious interest for mathematical physics, in particular for general relativity, see e.g. \cite{Hauser1976,Coll2006, Kokkinos2023}.  Integrable geodesic flows of form \eqref{eq:formmetric} with non-diagonalisable Killing tensors $K_\alpha$ have also naturally appeared in the theory of geodesic equivalence:  the local description of geodesically equivalent metrics finalised in \cite{BM2015} provides examples with Killing $(1,1)$-tensors of arbitrary Segre characteristic. \weg{Next, by \cite{BM2003,MT1997}, geodesically equivalent metrics generate Killing tensors by a certain algebraic formula preserving the Segre characteristics. Combining these results, we obtain an integrable system of the form \eqref{eq:formmetric} such that the Killing $(1,1)$-tensors $K_\alpha$ have arbitrary Segre characteristics. Note though that Killing tensors coming from geodesic equivalence do not provide all examples even in the diagonalisable case, as the St\"ackel construction has the freedom of choice of $n^2$ functions of one variable, whereas nontrivial geodesic equivalence allows the choice of at most $n$ functions of one variable only. Moreover, in relation to general relativity, by \cite{kiosak2009}, four-dimensional Einstein metrics of nonconstant curvature do not admit nontrivial geodesic equivalence.}

In this paper we present an explicit construction of systems of the form \eqref{eq:formmetric} satisfying the above conditions (see Section 4 and Theorem \ref{t2'} therein). The construction works in all signatures and allows any Segre characteristic. It also works at singular points where the Segre characteristic changes  (i.e., eigenvalues collide). The freedom of the construction is, as in the St\"ackel example, $n^2$ functions of one variable.
In dimension $n=2$, the construction provides all possible systems  of type \eqref{eq:formmetric}. In dimensions $n\ge 3$, most of the (non-diagonal) systems coming from our construction are new. 

\weg{As explained above, the problem we solved is natural and has been studied by many people.} 
Let us explain the difficulties appearing in the non-diagonal case. Besides some evident algebraic issues related to the fact that non-diagonal tensors have more components, and the components sitting in different places play different roles, there is one more important problem. If we assume that the metric and its Killing tensor are both diagonal in a certain coordinate system, then the Killing equation \eqref{eq:killin} can be solved with respect to all derivatives; in particular, the entries of a Killing tensor at one point determine this tensor at all points \cite{eisenhart,KKM2024}. Moreover, if one Killing tensor, diagonal in these coordinates, has $n$ different eigenvalues, then one can construct $n-2$ additional independent Killing tensors such that the corresponding quadratic in momenta integrals Poisson commute. Examples, even with a flat background metric, demonstrate that this is not true anymore for other Segre characteristics. In particular, in the non-diagonal case, the PDE system on the components of $g$ and $K_\alpha$ is less overdetermined and, therefore, much more difficult to solve.

A new circle of methods has come from Nijenhuis geometry research program initiated in \weg{\cite{nij1,BMMT2018}} \cite{nij1}. The background structure for our construction is a Nijenhuis operator\footnote{Nijenhuis operators are $(1,1)$-tensor fields whose Nijenhuis torsion vanishes, see \cite{nij1} or Section \ref{sect2}} $L$ on $U$. Other principal ingredients are the symmetry algebra $\Sym L$ and a basis $M_1,\dots, M_n$ of this algebra.
We emphasize that the approach via Nijenhuis operators gives a general invariant setup. If we rewrite the construction in these terms, the algebraic type of operators becomes completely irrelevant, and the difference between diagonal and non-diagonal Killing tensors disappears. Moreover, one can work in an arbitrary coordinate system, so even in the diagonalisable case there is no need to pre-diagonalise the Killing tensors.

In the case when $L$ is $\R$-diagonalisable and has a simple spectrum, for instance, $L = \operatorname{diag}(u_1, \dots, u_n)$, our construction is essentially the same as St\"ackel's method. However, in general, the underlying Nijenhuis structure may have arbitrary algebraic type. In particular, $L$ may admit Jordan blocks and points where the Segre characteristic changes. The only algebraic condition on $L$ is the $\gl$-regularity, that is, the operators $\Id, L, L^2, \dots, L^{n-1}$ must be linearly independent at each point, or equivalently, every eigenvalue must have geometric multiplicity one, whereas the algebraic multiplicity can be arbitrary. 

The notion of symmetries of $(1,1)$-tensor fields will be recalled in Section \ref{sect2}. This concept, including the terminology, was actively studied in the theory of infinite-dimensional integrable systems. Note that there are various relations between finite-dimensional integrable systems of type  \eqref{eq:formmetric} and infinite-dimensional integrable systems. One of the most spectacular was observed and used in \cite{Dubrovin75,Dubrovin1975} and later popularized in, e.g., \cite{moser80}: the so-called finite gap KdV  solutions come from certain integrable Hamiltonian systems of type \eqref{eq:formmetric}. \weg{Recent publications in this direction include \cite{BS2023, BKM2025}.}  An important connection between integrable weakly non-linear systems of hydrodynamic type and integrable systems admitting separation of variables was discovered in  \weg{\cite{Ferapontov1991, Ferapontov1991b, FerapontovFordy1997}} \cite{Ferapontov1991}.    
Of key importance for our result was the interaction between Nijenhuis geometry and infinite-dimensional integrable systems of hydrodynamic type discussed in \cite{ml,nij3,nij4}.

The following three new observations have been crucial for our construction. The first one is that strong symmetries of a Nijenhuis operator  $L$ form an algebra (the product of strong symmetries is a strong symmetry). The second is that these symmetries admit natural (Yano-Patterson, \cite{yano}) cotangent lift and, as a result, naturally interact with the Poisson--Nijenhuis structure  on $T^*\mathsf M$ naturally associated with $L$. And the third one is a construction of a special, in some sense canonical, symmetry $P$ of the lifted Nijenhuis operator $\widehat L$, which is linear in momenta and finally leads to quadratic-in-momenta integrals.

In fact, our  construction provides a more general class of integrable systems than that of type \eqref{eq:formmetric}. In particular, it  naturally extends  the so-called ``generalised St\"ackel systems'' from \cite[Chapter 5]{AKN2006} and ``generalised St\"ackel separation relations'' from \cite[\S 4.2.2]{Bbook}, see Example \ref{ex:4.1},  to the non-diagonal case\weg{Note that bi- and multihamiltonian structures are widely used in the theory of finite- and infinite-dimensional integrable systems and are a separate object of study, see e.g. \cite{BMR2017,FP2003, BMMT2018, IMM,MagriSchw, Magri1995, DKV2016}. As was recently shown in \cite{nijapp2, nijapp3}, Nijenhuis geometry is well adapted for the study of multi-Hamiltonian structures in the infinite-dimensional case. The results of the present paper demonstrate that the objects natural for Nijenhuis geometry, i.e., covariantly defined by a Nijenhuis operator $L$ on $U$, such as symmetries and conservation laws of $L$, are directly connected to compatible Poisson structures on $T^*U$.}

As discussed above, the objects typically considered in the theory of infinite-dimensional integrable systems play a key role in our finite-dimensional construction. The results of this paper, however, also have natural applications in the theory of infinite-dimensional integrable systems, more precisely, quasilinear PDE systems of the form
$
u_{t}= Au_x.
$
Here, $u= (u^1(x,t),\dots, u^n(x,t) )^\top$ is an unknown vector function, and $A=A(u)$ is a matrix depending on $u$ and viewed  as a $(1,1)$-tensor field. We show that if $A$ is a Killing (1,1)-tensor coming from our construction (that is, one of $K_\alpha$ from \eqref{eq:formmetric}), then the corresponding system of hydrodynamic type is integrable, and in particular, for almost every initial curve $u(x,0)$ one can find a solution $u(x,t)$ in quadratures. Moreover, we show that the operators $K_\alpha$ are symmetries  of each other, so that the PDE system $ u_{t_\alpha} = K_\alpha(u) u_{t_1}$, where now
$u = \bigl(u^1(t_1,t_2,\dots,t_n),\dots, u^n(t_1,t_2,\dots,t_n)\bigr)^\top$, satisfies the compatibility conditions and, therefore, is consistent). Furthermore, we show that in a neighbourhood of almost every point, by a suitable generalised reciprocal transformation (see details in Section \ref{sect6}), the equation can be linearised, i.e., all the matrices $K_\alpha$ can be made constant (Corollary \ref{c2}). In the diagonal case, this result follows from  \weg{\cite{Ferapontov1991, Ferapontov1991b, FerapontovFordy199}} \cite{Ferapontov1991}. We emphasize that our approach also works at singular points where the Segre characteristic of $K_\alpha$ changes. In this situation, the above PDE system is not linearisable, but one can still  bring $K_\alpha$ to a ``nicer'' form using the techniques of companion forms developed in \cite{nij3}.

\weg{Recall that the theory of diagonalisable integrable systems of hydrodynamic type is well understood \cite{Tsarev1990}, whereas the nondiagonal case is still very far from being understood; see, e.g., \cite{Ferapontov3,Pavlov2018,  LPV2024}.}

The paper is organised as follows.  In Section \ref{sect2}, we recall the notion of symmetries and strong symmetries. The main result of the section is Theorem \ref{t1}, which describes the behaviour of symmetries and strong symmetries of $(1,1)$-tensor fields under the Yano-Patterson cotangent lift.
\weg{ of strong symmetries to the cotangent bundle is a strong symmetry, which, in turn, implies (Corollary \ref{cor2.1}) a construction of compatible Poisson structures on the cotangent bundle.} In Section \ref{sect3}, we recall necessary results from \cite{nij4} and treat two important examples. Theorem \ref{t2} from Section \ref{sect4} gives a general construction of Poisson commuting functions on a Poisson manifold $(\mathsf M, \mathcal P)$ via symmetries of a Nijenhuis operator $N$ that is self-adjoint with respect to $\mathcal P$. Applying this construction to the Yano-Paterson lift $N=\widehat L$ of our initial Nijenhuis operator $L$, we obtain Poisson commuting functions $f_\alpha$ of type \eqref{eq:formmetric} (Theorem \ref{t2'}). We illustrate this construction by two examples. Theorem \ref{t3} of the next Section \ref{sect5} provides a more detailed analysis of  Poisson commuting functions \eqref{eq:formmetric} associated with $L$ and also provides the solution of the Hamilton--Jacobi equation for this choice of integrals. We also give examples explaining that our formulas give all possible integrable systems of type \eqref{eq:formmetric} in dimension 2, and an example with a flat background metric of Lorentz signature, which demonstrates some advantages of the invariant approach. Section \ref{sect6} discusses the relation of the previous results to integrable systems of hydrodynamic type and reciprocal transformations.


\section{Symmetries, strong symmetries and lifts of operator fields}\label{sect2}

First, we recall some basic definitions from Nijenhuis geometry (see \cite{nijenhuis, nij1, nij4}). Let $L$ be a tensor field of type $(1, 1)$ (operator field) on a manifold $\mathsf{M}^n$. The Nijenhuis torsion of $L$ is the tensor field of type $(1, 2)$ defined by
$$
{{\mathcal N}_L}(\xi, \eta) = L^2 [\xi, \eta] + [L\xi, L\eta] - L [L\xi, \eta] - L [\xi, L\eta],    
$$
where $\xi, \eta$ are vector fields, and $[\, , \,]$ denotes the standard Lie bracket of vector fields. An operator field $L$ is said to be a {\it Nijenhuis operator} if its Nijenhuis torsion vanishes.

Let $L, M$ be a pair of operator fields and set (see \cite{nijenhuis}, formula 3.9)
\begin{equation}\label{ii:eq2}
\langle L, M \rangle (\xi, \eta) = LM[\xi, \eta] + [L\xi, M\eta] - L[\xi, M\eta] - M[L\xi, \eta].
\end{equation}
It is a simple exercise to check that the right-hand side of this formula defines a tensor field of type $(1,2)$ if and only if $LM - ML = 0$. Obviously, $\langle L, L \rangle = \mathcal N_L$. Notice that, in general, the tensor field given by \eqref{ii:eq2} is neither symmetric nor skew-symmetric in $\xi$ and $\eta$. 

For our purposes, we will need another version of formula \eqref{ii:eq2}:
\begin{equation}\label{ii:eq2.3}
 \langle L, M \rangle (\xi, \eta) = \Big(\mathcal L_{L\xi} M - L \mathcal L_\xi M\Big) \eta = - \Big(\mathcal L_{M\eta} L - M \mathcal L_\eta L \Big)\xi, 
\end{equation}
where $\mathcal L_\xi$ stands for the Lie derivative.
 
We say that $L$ and $M$ are {\it symmetries} of each other if $LM=ML$ and the symmetric (in lower indices) part of the tensor $\langle L, M \rangle$ vanishes; that is,  $\langle L, M \rangle(\xi, \xi) = 0$ for all vector fields $\xi$. If the entire tensor $\langle L, M \rangle$ vanishes, then we say that $L$ and $M$ are {\it strong symmetries} of each other. For example, $L$ is always a symmetry of itself, but it is a strong symmetry if and only if $L$ is Nijenhuis.

Let $M$ and $R$ be strong symmetries of $L$. In general, they are not strong symmetries of each other (see Example \ref{ex2.1} for $n > 1$). However, $MR$ is a strong symmetry of $L$ (which follows from the third statement of Lemma \ref{lem2.1}). Thus, strong symmetries of $L$ form an associative algebra with respect to matrix multiplication, which will be denoted by $\Sym L$. 

Next, the formula \cite{frolicher1, frolicher2}
\begin{equation} \label{eq:1}
    \begin{aligned}{}
    [[L, M]]_{FN}(\xi, \eta) & = [L\xi, M\eta] + [M\xi, L\eta] - M[L\xi, \eta] - M[\xi, L\eta] - \\
    & - L[M\xi, \eta] - L[\xi, M\eta] + ML[\xi, \eta] + LM[\xi, \eta].
    \end{aligned}
\end{equation}
defines a skew-symmetric tensor of type $(1, 2)$ known as a {\it Fr\"olicher-Nijenhuis bracket} of $M$ and $L$. Unlike $\langle L, M\rangle$ defined by \eqref{ii:eq2}, $[[L, M]]_{FN}$ is a tensor regardless of the commutativity of $L$ and $M$. At the same time,  for commuting $L, M$,  formula  \eqref{eq:1} reads 
$$
[[L, M]]_{FN} (\xi, \eta) = \langle L, M \rangle (\xi, \eta) - \langle L, M \rangle (\eta, \xi).
$$
In other words, the Fr\"olicher-Nijenhuis bracket coincides with the skew-symmetric part of $\langle L, M \rangle$. Thus, the operators  $L$ and $M$ are strong symmetries of each other if and only if they are symmetries of each other and their Fr\"olicher-Nijenhuis bracket vanishes. In particular, $L$ is Nijenhuis if and only if $[[L, L]]_{FN} = 0$.

\begin{Ex}[Strong symmetries of a complex structure]\label{ex2.1}
\rm{
Let $\mathsf J$ be a complex structure on a real manifold $\mathsf{M}^{2n}$. In canonical coordinates $u^1, \dots, u^n, v^1, \dots, v^n$, it takes the form
$$
\mathsf J = \left( \begin{array}{cc}
     0_{n \times n} & - \Id_{n \times n}  \\
     \Id_{n \times n} & 0_{n \times n}
\end{array}\right),
$$
where $0_{n \times n}$ is the $n \times n$ zero matrix and $\Id_{n \times n}$ is the $n \times n$ identity matrix. Let  $\xi_i, \eta_j$ denote basis vector fields so that
$\mathsf J \xi_i = \eta_i$, $\mathsf J\eta_i = - \xi_i$.

The condition $L\mathsf J - \mathsf J L = 0$ yields 
\begin{equation}\label{3.3:jform}
L = \left(\begin{array}{cc}
     A & -B  \\
     B & A
\end{array}\right),    
\end{equation}
where $A, B$ are $n \times n$ submatrices. The condition $\langle L, \mathsf J \rangle = 0$ splits into four conditions
$$
\begin{aligned}
\langle L, \mathsf J \rangle (\xi_i, \xi_j) & = [L\xi_i, \eta_j] - \mathsf J[L\xi_i, \xi_j] = [A_i^s \xi_s + B^m_i \eta_m, \eta_j] - \mathsf J[A_i^s \xi_s + B^m_i \eta_m, \xi_j] = \\
& = - \Bigg(\pd{A^s_i}{v^j} + \pd{B^s_i}{u^j} \Bigg)\xi_s - \Bigg( \pd{B^m_i}{v^j} - \pd{A^m_i}{u^j}\Bigg)\eta_m, \\
\langle L, \mathsf J \rangle (\xi_i, \eta_j) & = - [L\xi_i, \xi_j] - \mathsf J[L\xi_i, \eta_j] = - \mathsf J ([L\xi_i, \eta_j] - \mathsf J[L\xi_i, \xi_j]) = - \mathsf J \langle L, \mathsf J \rangle (\xi_i, \xi_j), \\
\langle L, \mathsf J \rangle (\eta_i, \xi_j) & = [L\eta_i, \eta_j] - \mathsf J[L\eta_i, \xi_j] = [- B^s_i \xi_s + A^m_i \eta_m, \eta_j] - \mathsf J[- B^s_i \xi_s + A^m_i \eta_m, \xi_j] = \\
& = \Bigg( \pd{B^s_i}{v^j} - \pd{A^s_i}{u^j}\Bigg) \xi_s - \Bigg( \pd{A^m_i}{v^j} + \pd{B^m_i}{u^j}\Bigg) \eta_m, \\
\langle L, \mathsf J \rangle (\eta_i, \eta_j) & = - [L\eta_i, \xi_j] - \mathsf J[L\eta_i, \eta_j] = - \mathsf J  \langle L, \mathsf J \rangle (\eta_i, \xi_j).
\end{aligned}
$$
We see that the brackets contain exactly Cauchy-Riemann conditions for the components of $L$. Thus,  the strong symmetries of a complex structure are exactly the holomorphic operator fields. If $n = 1$, the algebra $\Sym L$ is isomorphic to the algebra of complex holomorphic functions on $\mathsf M^2$ as a complex 1-manifold. 
}\end{Ex}

Let $\theta$ be the Liouville form on the cotangent bundle $T^*\mathsf{M}^n$ of a manifold $\mathsf{M}^n$. In canonical coordinates $p_1, \dots, p_n, u^1, \dots, u^n$,  it takes the form $\theta = p_q \ddd u^q$ (here $u^i$ are coordinates on $\mathsf{M}^n$ and $p_i$ are the corresponding momenta). For any operator field $L$ on $\mathsf{M}^n$, the {\it deformed} Liouville form is defined by $\theta_L = p_q L^q_s \ddd u^s$. Obviously, $\theta_{\Id} = \theta$.

Consider the canonical symplectic form $\Omega = \ddd \theta$ on $\mathsf{M}^n$ and another 2-form $\Omega_L = \ddd \theta_L$ constructed from the {\it deformed} Liouville form. In canonical coordinates, we obtain
$$
\Omega = \ddd p_q \wedge \ddd u^q, \quad \Omega_L = p_q \pd{L^q_s}{u^r} \,\ddd u^r \wedge \ddd u^s + L^q_s\, \ddd p_q \wedge \ddd u^s.
$$
The {\it complete lift} of $L$ onto the cotangent bundle is the operator field $\widehat L$ on $T^*\mathsf M$, which is uniquely defined by the relation \cite{yano}  
\begin{equation}\label{2.7:lift}
\Omega (\cdot, \widehat L \cdot ) = \Omega_L(\cdot, \cdot).     
\end{equation}
In canonical coordinates, $\widehat L$ takes the form
\begin{equation}
\label{eq:Llift}
\widehat L = \Omega^{-1} \Omega_L = \left(\begin{array}{cc}
     L^{\top} & R  \\
     0_{n \times n} & L 
\end{array}\right), \quad \text{where} \quad R_{ij} = \Big(\pd{L^q_i}{u^j} - \pd{L^q_j}{u^i}\Big) p_q.      
\end{equation}
From these formulas, it immediately follows that $\Omega_L^{-1}$ defines a Poisson bracket on $T^*\mathsf{M}^n$ as soon as  
$\det L \neq 0$.

\begin{Remark}\label{rem2.2}
\rm{
Let $L$ be a Nijenhuis operator in a neighbourhood of point $\mathsf{p}$ and assume that in coordinates $u^1, \dots, u^n$ we have
\begin{equation}\label{cond}
\ddd (L^* \ddd u^i) = 0.  
\end{equation}
Locally there exist such functions $h^i$ that $\ddd h^i = L^* \ddd u^i$ and, thus
$$
\theta_L = p_q L^q_s \ddd u^s = p_q \ddd h^q.
$$
The complete lift of $L$ in these coordinates takes the form
\begin{equation}\label{block}
\widehat L = \left(\begin{array}{cc}
     L^{\top} & 0_{n\times n}  \\
     0_{n\times n} & L 
\end{array}\right).    
\end{equation}
Conversely, if $\widehat L$ is given by \eqref{block},  then $R^i_j = \pd{L^q_i}{u^j} - \pd{L^q_j}{u^i} = 0$ (see general formula \eqref{eq:Llift}), then
the rows of $L$, which can be understood as $1$-forms $L^* \ddd u^i$, are all closed. Thus, vanishing of the $R$-block in \eqref{eq:Llift} is equivalent to condition \eqref{cond}.

}    
\end{Remark}

Our first theorem explains what happens to the Fr\"olicher-Nijenhuis bracket, symmetries and strong symmetries under complete lifts.

\begin{Theorem}\label{t1}
Consider two operator fields $L, M$ on  $\mathsf{M}^n$ and their lifts $\widehat L, \widehat M$ on $T^*\mathsf M$.
Then:
\begin{enumerate}
    \item If the Fr\"olicher-Nijenhuis bracket of $L$ and $M$ on $\mathsf{M}^n$ vanishes, then the Fr\"olicher-Nijenhuis bracket of $\widehat L$ and $\widehat M$ vanishes on $T^*\mathsf{M}^n$.
    \item If $M$ and $L$ are symmetries of each other, then $\widehat M$ and $\widehat L$ are symmetries of each other.
    \item If $M$ and $L$ are strong symmetries of each other, then $\widehat M$ and $\widehat L$ are strong symmetries of each other. Moreover, $\widehat{LM} = \widehat L \widehat M$.
\end{enumerate}    
\end{Theorem}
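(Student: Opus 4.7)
Our plan is to work in canonical coordinates $(p_1,\dots,p_n,u^1,\dots,u^n)$ on $T^*\mathsf M$ using the explicit block form $\widehat L = \bigl(\begin{smallmatrix} L^\top & R_L \\ 0 & L \end{smallmatrix}\bigr)$ from \eqref{eq:Llift}, so that each assertion reduces to matching $n\times n$ blocks. For (2), we first compute $\widehat L \widehat M - \widehat M \widehat L$ block by block: the bottom-right and top-left blocks give $\pm [L,M]$, which vanish by the symmetry hypothesis, while the top-right block is $L^\top R_M + R_L M - M^\top R_L - R_M L$. A direct expansion via the product rule, combined with the elementary identity
\[
\langle L, M\rangle(\xi,\eta) + \langle M, L\rangle(\eta,\xi) = [L,M][\xi,\eta]
\]
(immediate from \eqref{ii:eq2}), shows that this top-right block equals $p_k\bigl(\langle M, L\rangle - \langle L, M\rangle\bigr)^k_{ij}$ modulo $[L,M]$-terms, and therefore vanishes under the symmetry assumption. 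Commutativity of the lifts is thus established. To conclude, we verify vanishing of the symmetric part of $\langle \widehat L, \widehat M\rangle$ by applying formula \eqref{ii:eq2.3} to pairs of coordinate basis vector fields on $T^*\mathsf M$: the vertical--vertical and vertical--horizontal contributions drop out trivially or by the commutativity just shown, and the horizontal--horizontal contribution reduces, after another product-rule expansion, to the symmetric part of $\langle L, M\rangle$ contracted with momenta, which vanishes by hypothesis.

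An analogous block comparison for $\widehat{LM} - \widehat L \widehat M$ proves the second assertion of (3). The off-diagonal block discrepancy turns out to be
\[
R_{LM} - L^\top R_M - R_L M = p_k\, \langle L, M\rangle^k_{ij},
\]
derived from the product rule together with the substitution $\partial_i(LM) = \partial_i(ML)$ that is available when $[L,M]=0$; the strong-symmetry hypothesis $\langle L, M\rangle = 0$ then kills this block, so $\widehat{LM} = \widehat L \widehat M$. The first assertion of (3) is a corollary of (1) and (2), since strong symmetry of a pair amounts to symmetry together with a vanishing Fr\"olicher--Nijenhuis bracket.

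The main obstacle is (1), which must be proved \emph{without} assuming $[L,M]=0$. Our approach is polarisation: combining the additivity $\widehat{L+tM} = \widehat L + t\widehat M$ (immediate from the explicit form of $R_L$) with the classical identity that, for any single $(1,1)$-tensor $N$, the Nijenhuis torsion $\mathcal N_{\widehat N}$ is a natural, injective lift of $\mathcal N_N$ — a fact one can either invoke from \cite{yano} or verify directly from the block form — gives
\[
\mathcal N_{\widehat L} + t\,[[\widehat L, \widehat M]]_{FN} + t^2\, \mathcal N_{\widehat M} \;=\; \mathcal N_{\widehat{L+tM}} \;=\; \text{(lift of)}\ \bigl(\mathcal N_L + t\,[[L, M]]_{FN} + t^2\, \mathcal N_M\bigr).
\]
Matching the coefficient of $t$ identifies $[[\widehat L, \widehat M]]_{FN}$ with the lift of $[[L, M]]_{FN}$, so vanishing of the latter forces vanishing of the former. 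Should this clean route prove unavailable, the fallback is a direct coordinate expansion of $[[\widehat L, \widehat M]]_{FN}$ from definition \eqref{eq:1} on all pairs of basis vector fields $\partial/\partial p_i, \partial/\partial u^j$; the block structure of $\widehat L, \widehat M$ forces each resulting component to be a linear-in-$p$ combination of components of $[[L,M]]_{FN}$ and their $u$-derivatives, with no extraneous curvature-type terms surviving.
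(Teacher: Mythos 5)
Your proposal reaches the right conclusions but by a genuinely different route. For items 2 and 3 the paper explicitly declines the block computation you propose (calling it a ``gruesome computation'') and instead argues invariantly, evaluating everything on complete lifts $\widehat \xi$ of vector fields via the vertical-lift calculus of Lemma \ref{lem2.1} and Lemma \ref{lem2.3}. Your coordinate identities are nevertheless correct: the top-right block of $\widehat L \widehat M - \widehat M \widehat L$ works out to $-p_q\bigl(\langle L,M\rangle^q_{ij}+\langle L,M\rangle^q_{ji}\bigr)+p_q\bigl(\partial_i [M,L]^q_j+\partial_j [M,L]^q_i\bigr)$, and $R_{LM}-L^\top R_M-R_L M=p_q\langle L,M\rangle^q_{ij}+p_q\,\partial_j [L,M]^q_i$, so both vanish under the respective hypotheses; the coordinate route buys concreteness at the price of bookkeeping, while the paper's route transfers with no extra effort to the symmetric part of $\langle \widehat L,\widehat M\rangle$. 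One misstatement to repair: the mixed vertical--horizontal contributions to that symmetric part do \emph{not} drop out by commutativity alone. A direct computation gives $\langle \widehat L,\widehat M\rangle(\partial_{p_i},\partial_{u^j})+\langle \widehat L,\widehat M\rangle(\partial_{u^j},\partial_{p_i})=\bigl(\langle L,M\rangle^i_{rj}+\langle L,M\rangle^i_{jr}+2\,\partial_j [L,M]^i_r\bigr)\partial_{p_r}$, which needs the full symmetry hypothesis (vanishing of the symmetric part of $\langle L,M\rangle$), not just $[L,M]=0$.

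The one place where more is genuinely required is item 1, which the paper settles by citing Proposition 17 of \cite{yano}. Your polarisation is elegant, but it needs a strictly stronger input than ``$\mathcal N_N=0$ implies $\mathcal N_{\widehat N}=0$'': you must know that $\mathcal N_{\widehat N}=\Lambda(\mathcal N_N)$ holds as a tensor identity for \emph{arbitrary} operator fields $N$, with $\Lambda$ a fixed linear (first-order differential) lift operator independent of $N$. If one only had $\mathcal N_{\widehat N}=\Lambda(\mathcal N_N)+B(N,\mathcal N_N)$ with $B$ bilinear and $N$-dependent, the single-tensor vanishing implication would still hold, yet the coefficient of $t$ in your expansion would contain $B(M,\mathcal N_L)+B(L,[[L,M]]_{FN})$; since statement 1 does not assume $L$ and $M$ are Nijenhuis, the term $B(M,\mathcal N_L)$ need not vanish when $[[L,M]]_{FN}=0$. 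The pure identity is in fact true --- with the lift calculus of Lemma \ref{lem2.3} one finds that $[[\widehat L,\widehat M]]_{FN}(\widehat \xi,\widehat \eta)$ has horizontal part $\widehat{[[L,M]]_{FN}(\xi,\eta)}$ and vertical part $\widetilde{\mathcal L_\eta W_\xi-\mathcal L_\xi W_\eta+W_{[\xi,\eta]}}$ with $W_\xi:=[[L,M]]_{FN}(\xi,\cdot)$ --- but establishing it is precisely the content of the Yano--Patterson proposition, so the polarisation must be presented as a reduction to that identity rather than as a shortcut around it; as written, the key fact is asserted rather than proved. Your fallback coordinate expansion is viable for the same reason, but it, too, is the very computation being avoided.
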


\begin{Corollary}\label{cor2.1}
Let $L, M$ be Nijenhuis operators on $\mathsf{M}^n$, and $\det M \neq 0, \det L \neq 0$. If  $[[L, M]]_{FN} = 0$, then the Poisson structures $\Omega^{-1}_{L^{-1}}$ and $\Omega^{-1}_{M^{-1}}$ are compatible.
\end{Corollary}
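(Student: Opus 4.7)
To prove the corollary, it suffices to show that $\pi_L+\pi_M$ is a Poisson bivector, where $\pi_L:=\Omega_{L^{-1}}^{-1}$ and $\pi_M:=\Omega_{M^{-1}}^{-1}$: the identity $[\pi_L+\pi_M,\pi_L+\pi_M]_{SN}=2[\pi_L,\pi_M]_{SN}$ combined with $[\pi_L,\pi_L]_{SN}=[\pi_M,\pi_M]_{SN}=0$ reduces compatibility to this. The plan is to realise $\pi_L+\pi_M$ as the inverse of a closed non-degenerate $2$-form, which is then symplectic by construction.

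First I transfer the hypotheses to $T^*\mathsf M$. Since $L$ and $M$ are Nijenhuis, they are strong symmetries of themselves, so Theorem \ref{t1}(3) makes $\widehat L$ and $\widehat M$ Nijenhuis on $T^*\mathsf M$, while Theorem \ref{t1}(1) gives $[[\widehat L,\widehat M]]_{FN}=0$. The lift is manifestly $\mathbb R$-linear in the operator from \eqref{eq:Llift}, so $\widehat L+\widehat M=\widehat{L+M}$; bilinearity of the Fr\"olicher--Nijenhuis bracket then makes $L+M$ Nijenhuis as well. Next I claim $\widehat{A^{-1}}=(\widehat A)^{-1}$ for every invertible Nijenhuis $A$: using $\mathcal L_\xi A^{-1}=-A^{-1}(\mathcal L_\xi A)A^{-1}$ in formula \eqref{ii:eq2.3} yields $\langle A,A^{-1}\rangle(\xi,\eta)=-A^{-1}\,\mathcal N_A(\xi,A^{-1}\eta)=0$, so $A$ and $A^{-1}$ are strong symmetries of each other; applying Theorem \ref{t1}(3) to $A\cdot A^{-1}=\Id$ then gives $\widehat A\,\widehat{A^{-1}}=\widehat{\Id}=\Id$, as claimed.

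Combined with $\Omega_{A^{-1}}=\Omega\,\widehat{A^{-1}}$, this gives $\Omega_{A^{-1}}^{-1}=\widehat A\,\Omega^{-1}$. Specialising to $A=L,M$, and, on the open set where $L+M$ is invertible, to $A=L+M$, I obtain
\[
\pi_L+\pi_M \;=\; (\widehat L+\widehat M)\,\Omega^{-1} \;=\; \widehat{L+M}\,\Omega^{-1} \;=\; \Omega_{(L+M)^{-1}}^{-1}.
\]
The $2$-form $\Omega_{(L+M)^{-1}}=\mathrm d\,\theta_{(L+M)^{-1}}$ is exact and hence closed, and is non-degenerate wherever $L+M$ is invertible; it is therefore symplectic there, which makes $\pi_L+\pi_M$ Poisson and gives $[\pi_L,\pi_M]_{SN}=0$ on this open set. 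At a point where $L+M$ fails to be invertible I replace $M$ by $\lambda M$: because $\det L\ne 0$, the polynomial $\lambda\mapsto\det(L+\lambda M)$ is not identically zero at that point, so $L+\lambda M$ is invertible there for generic $\lambda$, and extracting the coefficient of $\lambda$ from the relation $[\pi_L+\lambda\pi_M,\pi_L+\lambda\pi_M]_{SN}=0$ gives $[\pi_L,\pi_M]_{SN}=0$ pointwise everywhere.

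The main pinch point of the plan is the identity $\widehat{A^{-1}}=(\widehat A)^{-1}$, where the Nijenhuis hypothesis is genuinely used via the product formula in Theorem \ref{t1}(3); the remainder is linear bookkeeping with exact $2$-forms together with the elementary fact that any closed non-degenerate $2$-form is symplectic.
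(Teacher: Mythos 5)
Your proof is correct, but it takes a genuinely different route from the paper. The paper transfers the hypotheses to $T^*\mathsf M$ exactly as you do (via Theorem \ref{t1}), but then observes that $\Omega_{L^{-1}}(\cdot,\widehat L\widehat M^{-1}\cdot)=\Omega_{M^{-1}}(\cdot,\cdot)$, so $\widehat L\widehat M^{-1}$ is the recursion operator relating the two symplectic forms; it is Nijenhuis by the quoted Lemma \ref{lem2.4} (that $[[\widehat L,\widehat M]]_{FN}=0$ is equivalent to $\widehat L\widehat M^{-1}$ being Nijenhuis for invertible Nijenhuis operators), and compatibility then follows from Magri's criterion (Lemma \ref{lem2.5}). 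You instead reduce compatibility to $\pi_L+\pi_M$ being Poisson and realise this sum, via the linearity of the lift and the identity $\widehat{A^{-1}}=(\widehat A)^{-1}$, as $\Omega_{(L+M)^{-1}}^{-1}$, the inverse of an exact nondegenerate $2$-form. Your derivation of $\widehat{A^{-1}}=(\widehat A)^{-1}$ from $\langle A,A^{-1}\rangle(\xi,\eta)=-A^{-1}\mathcal N_A(\xi,A^{-1}\eta)$ and the product formula in Theorem \ref{t1}(3) is sound, and the scaling argument $M\mapsto\lambda M$ correctly disposes of the locus where $L+M$ degenerates (the Schouten bracket at a point is determined by the first jet there, and $\det(L+\lambda M)\ne 0$ near that point for generic $\lambda$). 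What each approach buys: the paper's proof is shorter and never has to worry about invertibility of a sum, but it leans on two imported lemmas (the Nijenhuis-quotient criterion from an earlier paper and Magri's recursion-operator theorem); yours is self-contained modulo Theorem \ref{t1} and the elementary fact that a closed nondegenerate $2$-form inverts to a Poisson bivector, at the cost of the extra genericity step and of first proving that $L+M$ is Nijenhuis and that lifts commute with inversion.
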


\begin{Remark}\label{rem2.1}
\rm{
A vector subspace $\mathcal N$ in the space of operator fields on $\mathsf{M}^n$ is called a {\it Nijenhuis pencil}, if it consists of Nijenhuis operators only (see \cite{nijapp2, Konyaev2023}). Equivalently,  for all $L, M \in \mathcal N$ (not necessarily different), we have $[[L, M]]_{FN} = 0$. Corollary \ref{cor2.1} implies that every Nijenhuis pencil on $\mathsf M$ induces a natural Poisson pencil on the cotangent bundle $T^*\mathsf M$. 
}     
\end{Remark}

\begin{proof}[Proof of Theorem \ref{t1}]
All statements can, of course, be checked by gruesome computation based on the explicit formula \eqref{eq:Llift} for the lift $\widehat L$. We, however, prefer an invariant approach and start with the following Lemma, which is useful on it own.

\begin{Lemma}\label{lem2.1}
For arbitrary operators $L, M, R$, where $M$ and $R$ commute with $L$ (but not necessary with each other) and a smooth function $f$, the following tensor relations hold:
    \begin{equation}\label{ii:tensor}
    \begin{aligned}
       & \langle f L, M \rangle = f \langle L, M \rangle + M L \otimes \ddd f - L \otimes M^* \ddd f, \\
       & \langle L, f M \rangle = f \langle L, M \rangle + L^* \ddd f \otimes M - \ddd f \otimes ML \\
    \end{aligned}
    \end{equation}
    \begin{equation}\label{ii:eq3}
    \langle L, MR \rangle (\xi, \eta) = \langle L, M \rangle (\xi, R \eta) + M \langle L, R \rangle (\xi, \eta).    
    \end{equation}
\end{Lemma}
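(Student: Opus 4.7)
The plan is to adopt the Lie-derivative form \eqref{ii:eq2.3} of the bracket rather than the commutator-based definition \eqref{ii:eq2}. Writing
\[ \langle L, M\rangle(\xi,\eta) = (\mathcal L_{L\xi}M-L\,\mathcal L_\xi M)\eta \]
for the first and third identities, and using the dual form
\[ \langle L, M\rangle(\xi,\eta) = -(\mathcal L_{M\eta}L-M\,\mathcal L_\eta L)\xi \]
for the second, reduces each claim to a Leibniz-type computation for the Lie derivative of an operator field.

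The main computational ingredient is the rescaling rule
\[ \mathcal L_{f\xi}A = f\,\mathcal L_\xi A + A\xi\otimes \ddd f - \xi\otimes A^*\ddd f, \]
which is an immediate consequence of expanding $(\mathcal L_{f\xi}A)\eta=[f\xi,A\eta]-A[f\xi,\eta]$ via the ordinary Leibniz rule for the commutator $[f\xi,\,\cdot\,]$. For the identity on $\langle fL,M\rangle$, I substitute $\xi\mapsto L\xi$ with $A=M$ and subtract $fL\,\mathcal L_\xi M$; the $f$-parts assemble into $f\langle L,M\rangle(\xi,\eta)$, and the two correction terms are precisely $ML\otimes \ddd f-L\otimes M^*\ddd f$ evaluated on $(\xi,\eta)$. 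The companion identity on $\langle L,fM\rangle$ is symmetric: I apply the rescaling rule with $\eta\mapsto M\eta$ and $A=L$, collect the $f$-parts into $f\langle L,M\rangle$, and then invoke commutativity $LM=ML$ to rewrite the residual $LM\eta$ as $ML\eta$, yielding $L^*\ddd f\otimes M-\ddd f\otimes ML$.

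For identity \eqref{ii:eq3} the essential tool is the Leibniz rule for the Lie derivative of a composition,
\[ \mathcal L_\xi(MR) = (\mathcal L_\xi M)R + M\,\mathcal L_\xi R, \]
applied once with $\xi$ and once with $L\xi$. Substituting into the definition splits $\langle L,MR\rangle(\xi,\eta)$ into two natural blocks. The first, containing $\mathcal L_{L\xi}M$ and $L\,\mathcal L_\xi M$, acts on $R\eta$ and is recognised as $\langle L,M\rangle(\xi,R\eta)$. The remainder is $M(\mathcal L_{L\xi}R)\eta-LM(\mathcal L_\xi R)\eta$; here commutativity $LM=ML$ is needed once more to factor $M$ out on the left, producing $M\langle L,R\rangle(\xi,\eta)$.

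The only preliminary check is that $\langle\cdot,\cdot\rangle$ is tensorial on each pair to which it is applied, i.e.\ that $fL$ commutes with $M$, that $L$ commutes with $fM$, and that $L$ commutes with $MR$; all three follow at once from the standing hypotheses $LM=ML$ and $LR=RL$. I do not foresee any conceptual obstacle; the main nuisance is bookkeeping --- keeping straight the tensor-product conventions $(A\otimes\omega)(\xi,\eta)=\omega(\eta)A\xi$ in the first identity versus $(\omega\otimes A)(\xi,\eta)=\omega(\xi)A\eta$ in the second, and invoking commutativity exactly in those places where a stray $LM$ appears instead of $ML$.
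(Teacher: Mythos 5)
Your proof is correct and follows essentially the same route as the paper's: both identities in \eqref{ii:tensor} reduce to the Leibniz expansion of $[f\zeta,\,\cdot\,]$ (your rescaling rule for $\mathcal L_{f\xi}A$ is precisely what the paper's direct expansion of $[fL\xi,M\eta]$ and $M[fL\xi,\eta]$ encodes), and \eqref{ii:eq3} rests on the same single cancellation via $LM=ML$, which the paper gets by writing out all three terms of \eqref{ii:eq2} and subtracting, and you get via the derivation property of $\mathcal L_\xi$ on compositions. The only cosmetic difference is that you organise the computation through the Lie-derivative form \eqref{ii:eq2.3} rather than working directly from \eqref{ii:eq2}; the content is identical.
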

\begin{proof}
By  direct computation we have
$$
\begin{aligned}
 \langle f L, M \rangle (\xi, \eta) & = f LM[\xi, \eta] + [f L\xi, M\eta] - f L[\xi, M\eta] - M[f L\xi, \eta] = \\
 & = f \langle  L, M \rangle(\xi, \eta) + \mathcal L_\eta f \cdot ML \xi - \mathcal L_{M\eta} f \cdot L\xi = \\
 & = (\langle L, M \rangle + ML \otimes \ddd f - L \otimes M^*\ddd f) (\xi, \eta).
\end{aligned}
$$
The case $\langle L, fM \rangle$ is treated in a similar way. To prove \eqref{ii:eq3},  we write:
$$
\begin{aligned}
& \langle L, M \rangle (\xi, R\eta) = LM[\xi, R\eta] + [L\xi, MR\eta] - L[\xi, MR\eta] - M[L\xi, R\eta], \\
&  M \langle L, R \rangle (\xi, \eta) = MLR[\xi, \eta] + M[L\xi, R\eta] - ML[\xi, R\eta] - MR[L\xi, \eta], \\
& \langle L, MR \rangle (\xi, \eta) = LMR[\xi, \eta] + [L\xi, MR\eta] - L[\xi, MR\eta] - MR[L\xi, \eta].
\end{aligned}
$$
Subtracting the first and second relations from the third one and keeping in mind that $M$ commutes with $L$, we get zero in the right hand side,  which is equivalent to \eqref{ii:eq3}.
\end{proof}

Now let us proceed to the proof of Theorem. First, fix canonical coordinates $p_1, \dots, p_n, u^1, \dots, u^n$ on $T^* \mathsf{M}^n$. We will need two more definitions, both due to \cite{yano}, concerning the lift of tensor fields onto the cotangent bundle.

For any vector field $\xi$ on $\mathsf{M}^n$, one can define a function $h_\xi = p_q \xi^q$ on the cotangent bundle $T^* \mathsf{M}^n$. The {\it complete lift}  $\widehat \xi$ of the vector field $\xi$ is  
$$
\widehat \xi = \Omega^{-1} \ddd h_\xi.
$$
Equivalently, we can characterise $\widehat \xi$ as the Hamiltonian vector field generated by $h_\xi$.
In canonical coordinates coordinates (there is a summation in repeating indices)
$$
\widehat \xi = - p_q\pd{\xi^q}{u^i} \pd{}{p_i} + \xi^s \pd{}{u^s}.
$$
It is easy to see that at every point $(p, u)\in T^*\mathsf M$, $p\ne 0$, the lifts $\widehat \xi$ span $T_{(p,u)}(T^* \mathsf{M}^n)$ and
$$
[\widehat \xi, \widehat \eta] = \widehat{[\xi, \eta]} \quad   \mbox{for all $\xi$, $\eta$.}
$$

Now consider an operator field $L$ on $\mathsf{M}^n$. The {\it vertical lift} $\widetilde L$ of  $L$ onto the cotangent bundle $T^* \mathsf{M}^n$ is the vector field defined as $\Omega^{-1} \theta_L$. In canonical coordinates, this yields formula
\begin{equation}\label{vert}
 \widetilde L = p_q L^q_1 \pd{}{p_1} + \dots + p_q L^q_n \pd{}{p_n},   
\end{equation}
 where $L^i_j$ are the components of $L$ in coordinates $u^1, \dots, u^n$.  Notice that  $\widetilde L = 0$ if and only if $L = 0$.

The next lemma summarises some basic properties of vertical lifts.
\begin{Lemma}\label{lem2.3}
For arbitrary operator fields $L$, $M$ and vector field $\xi$ we have
    \begin{equation}\label{4.1:formula1}
      \widehat L \widetilde M = \widetilde{ML}.  
    \end{equation}
    \begin{equation}\label{4.1:formula2}
        [\widehat \xi, \widetilde L] = \widetilde{\mathcal L_\xi L}
    \end{equation}
    \begin{equation}\label{4.1:formula3}
        \widehat L \widehat \xi = \widehat{L\xi} + \widetilde{\mathcal L_\xi L}
    \end{equation}
    \begin{equation}\label{4.1:formula4}
        [\widetilde L, \widetilde M] = \widetilde{LM - ML}.
    \end{equation}
\end{Lemma}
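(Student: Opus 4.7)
The plan is to verify all four identities by direct computation in canonical coordinates $(u^1,\dots,u^n,p_1,\dots,p_n)$ on $T^*\mathsf M$, using the block formula \eqref{eq:Llift} for $\widehat L$, the vertical-lift expression \eqref{vert} for $\widetilde L$, and the coordinate form of $\widehat\xi$ recalled above. The common feature that makes three of the four identities almost immediate is that the vertical lifts $\widetilde L$ and $\widetilde M$ have no $\partial/\partial u^i$ components; consequently the off-diagonal $R$-block of $\widehat L$ in \eqref{eq:Llift} never contributes when acting on them.

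For \eqref{4.1:formula1}, applying $\widehat L$ to $\widetilde M$ engages only the upper-left $L^\top$-block, yielding $L^j_i p_q M^q_j\,\partial/\partial p_i = p_q(ML)^q_i\,\partial/\partial p_i = \widetilde{ML}$. For \eqref{4.1:formula4}, both fields are vertical; computing $\widetilde L(p_r M^r_j) = p_q L^q_i M^i_j = p_q(LM)^q_j$ and subtracting the symmetric counterpart gives $[\widetilde L,\widetilde M] = p_q(LM-ML)^q_j\,\partial/\partial p_j = \widetilde{LM-ML}$. For \eqref{4.1:formula2}, the Lie bracket $[\widehat\xi,\widetilde L]$ is again purely vertical; its $\partial/\partial p_i$-coefficient works out, after expansion, to
\[
p_q\!\left(\xi^s\,\frac{\partial L^q_i}{\partial u^s} - L^s_i\,\frac{\partial \xi^q}{\partial u^s} + L^q_s\,\frac{\partial \xi^s}{\partial u^i}\right) = p_q(\mathcal L_\xi L)^q_i,
\]
which is precisely $\widetilde{\mathcal L_\xi L}$.

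The only identity where the $R$-block of $\widehat L$ genuinely contributes, and hence the main (still purely mechanical) bookkeeping, is \eqref{4.1:formula3}. The position ($\partial/\partial u^i$) part of $\widehat L\widehat\xi$ equals $L^i_s\xi^s = (L\xi)^i$, matching the position part of $\widehat{L\xi}$. The momentum ($\partial/\partial p_i$) part of $\widehat L\widehat\xi$ receives two contributions: from $L^\top$ acting on the momentum part of $\widehat\xi$, giving $-L^j_i p_q\,\partial\xi^q/\partial u^j$, and from $R_{is}\xi^s = (\partial L^q_i/\partial u^s - \partial L^q_s/\partial u^i)p_q\xi^s$ acting on its position part. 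Subtracting the momentum part $-p_q\,\partial(L^q_s\xi^s)/\partial u^i$ of $\widehat{L\xi}$ cancels the term involving $\partial L^q_s/\partial u^i$, and the remaining three terms reorganise into $p_q(\mathcal L_\xi L)^q_i$, i.e.\ the vertical lift $\widetilde{\mathcal L_\xi L}$. The only place requiring care is tracking the $R$-block's contribution here; everywhere else verticality makes the computation essentially tautological.
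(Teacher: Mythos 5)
Your proof is correct and takes essentially the same route as the paper: formulas \eqref{4.1:formula2}--\eqref{4.1:formula4} are verified by the same direct computations in canonical coordinates, and your observation that the off-diagonal $R$-block of $\widehat L$ never contributes on vertical fields (so that only \eqref{4.1:formula3} requires genuine bookkeeping) is exactly the structure of the paper's calculation. The only cosmetic difference is \eqref{4.1:formula1}, which the paper derives invariantly from the self-adjointness of $\widehat L$ with respect to $\Omega$ and the identity $\widehat L^* \theta_M = \theta_{ML}$ rather than in coordinates.
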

\begin{proof}
First, notice that for $\widehat L, \theta, \theta_M$ one has
$$
\widehat L^* \theta = \theta_L \quad \text{and} \quad \widehat L^* \theta_M = \theta_{LM}.
$$
Using the relation $L \Omega^{-1} = \Omega^{-1} \widehat L^*$, we obtain
$$
\widehat L \widetilde M = \widehat L \Omega^{-1} \theta_M = \Omega^{-1} \widehat L^* \theta_M = \Omega^{-1} \theta_{LM} = \widetilde{LM},
$$
which proves \eqref{4.1:formula1}.

Now consider formulas for $\widehat \xi$ and $\widetilde L$. Substituting them in the r.h.s. of \eqref{4.1:formula2} we get
$$
\begin{aligned}{}
[- p_q \pd{\xi^q}{u^i} \pd{}{p_i} + \xi^i \pd{}{u^i}, p_s L^s_j \pd{}{p_j}] = p_q \Big( \pd{L^p_j}{u^s} \xi^s - \pd{\xi^q}{u^s} L^s_j + L^q_s \pd{\xi^s}{u^j}\Big) \pd{}{p_j} = p_q (\mathcal L_\xi L)^q_j \pd{}{p_j}
\end{aligned}
$$
We get exactly $\widetilde{\mathcal L_\xi}$ in the r.h.s. and, thus,  \eqref{4.1:formula2} is proved. 

Next, by straightforward computation we get
$$
\begin{aligned}
\widehat L \widehat \xi & = \widehat L \Big( - p_q \pd{\xi^q}{u^i} \pd{}{p_i} + \xi^i \pd{}{u^i} \Big) = - L^s_i \pd{\xi^q}{u^s} p_q \pd{}{p_i} + p_q \pd{L^q_i}{u^s} \xi^s \pd{}{p_i} - p_q \pd{L^q_s}{u^i} \xi^s \pd{}{p_i} + L_s^i \xi^s \pd{}{u^i} = \\
& = \underbrace{p_q \Big(\pd{L^q_i}{u^s} \xi^s + L^q_s \pd{\xi^s}{u^i} -  L^s_i \pd{\xi^q}{u^s} p_q\Big) \pd{}{p_i]}}_{\text{this is $= \widetilde{\mathcal L_\xi L}$}} + \underbrace{\Big( - p_q \pd{L^q_s}{u^i} \xi^s - p_q L^q_s \pd{\xi^s}{u^i}\Big) \pd{}{p_i} + L_s^i \xi^s \pd{}{u^i}}_{\text{this is $ - p_q \pd{L^q_s \xi^s}{u^i} \pd{}{p_i} + L^i_s \xi^s \pd{}{u^i} = \widehat{L\xi}$}}
\end{aligned}
$$
Thus, formula \ref{4.1:formula3} is proved. 

Finally, 
$$
[\widetilde L, \widetilde M] = [p_s L^s_j \pd{}{p_j}, p_q M^q_i \pd{}{p_i}] = p_s (L^s_q M^q_i - M^s_q L^q_i) \pd{}{p_i} = \widetilde{ML - LM},
$$
which gives \eqref{4.1:formula4} and completes the proof of Lemma. \end{proof}

The first statement of Theorem \ref{t1} follows from Proposition 17 in \cite{yano}.

Assume that $L, M$ are symmetries of each other, that is $ML - LM = 0$ and $\langle L, M \rangle (\xi, \eta) + \langle L, M \rangle (\eta, \xi) = 0$ for all $\xi, \eta$. Using formula \eqref{ii:eq2.3}, we rewrite the latter condition as 
\begin{equation}\label{4.1:p1}
0 = \mathcal L_{M\xi} L - \mathcal L_{L\xi} M + L \mathcal L_\xi M - M \mathcal L_\xi L = - \langle L, M\rangle (\xi, \cdot) - \langle L, M\rangle (\cdot, \xi)
\end{equation}
for all vector fields $\xi$.

By direct computation we get
$$
\begin{aligned}
   & (\widehat L \widehat M - \widehat M \widehat L) \widehat \xi = \widehat L \widehat{M\xi} - \widehat M \widehat {L\xi} + \widehat L \widetilde{\mathcal L_\xi M} - \widehat M \widetilde{\mathcal L_\xi L} = \\
   = & \widehat{LM\xi} - \widehat{ML\xi} + \widetilde{\mathcal L_{M\xi} L} - \widetilde{\mathcal L_{L\xi} M} + \widetilde{\mathcal L_\xi M L} - \widetilde{\mathcal L_\xi L M} = \\
   = & \widetilde{\mathcal L_{M\xi} L} - \widetilde{\mathcal L_{L\xi} M} + \widetilde{L \mathcal L_\xi M} - \widetilde{M \mathcal L_\xi L}.
\end{aligned}
$$
Here we used formulas \eqref{4.1:formula1}, \eqref{4.1:formula3} from Lemma \ref{lem2.3}. The r.h.s. of this relation vanishes by \eqref{4.1:p1}. 
Since the lifts $\widehat \xi$ span $T_{(p,u)}(T^* \mathsf{M}^n)$ for $p\ne 0$, we conclude $\widehat L \widehat M - \widehat M \widehat L = 0$.

Now let us show that the symmetric part of $\langle \widehat L, \widehat M \rangle$  vanishes on $T^*\mathsf{M}^n$. Again, for an arbitrary vector field $\xi$ on $\mathsf{M}^n$ we have
$$
\begin{aligned}
     & \widehat L [\widehat \xi, \widehat M \widehat \xi] + \widehat M [\widehat L \widehat \xi, \widehat \xi] - [\widehat L \widehat \xi, \widehat M \widehat \xi] = \widehat L [\widehat \xi, \widehat{M\xi}] + \widehat L [\widehat \xi, \widetilde{\mathcal L_\xi M}] + \widehat M [\widehat {L \xi}, \widehat \xi] + \widehat M [\widetilde{\mathcal L_\xi L}, \widehat \xi] - [\widehat {L\xi}, \widehat{M\xi}] - \\
     - & [\widehat{L\xi}, \widetilde{\mathcal L_\xi M}] - [\widetilde{\mathcal L_\xi L}, \widehat{M\xi}] - [\widetilde{\mathcal L_\xi L}, \widetilde{\mathcal L_\xi M}] = \widehat L \widehat{[\xi, M\xi]} + \widehat M \widehat{[L\xi, \xi]} - \widehat{[L\xi, M\xi]} + \widehat L \widetilde{\mathcal L_\xi \mathcal L_\xi M} - \\
     - & \widehat M \widetilde{\mathcal L_\xi \mathcal L_\xi L} - \widetilde{\mathcal L_{L\xi} \mathcal L_\xi M} + \widetilde{\mathcal L_{M\xi} \mathcal L_\xi L} - \widetilde{\mathcal L_\xi L \mathcal L_\xi M} + \widetilde{\mathcal L_\xi M \mathcal L_\xi L} = \widetilde{\mathcal L_{[\xi, M\xi]} L} + \widetilde{\mathcal L_{[L\xi, \xi]} M} + \widetilde{\mathcal L_\xi \mathcal L_\xi M L} - \\
     - & \widetilde{\mathcal L_\xi \mathcal L_\xi L M} - \widetilde{\mathcal L_{L\xi} \mathcal L_\xi M} + \widetilde{\mathcal L_{M\xi} \mathcal L_\xi L} - \widetilde{\mathcal L_\xi L \mathcal L_\xi M} + \widetilde{\mathcal L_\xi M \mathcal L_\xi L}.
\end{aligned}
$$
The r.h.s. of the formula is the vertical lift of the operator field given by 
$$
\begin{aligned}
   & \mathcal L_{[\xi, M\xi]} L + \mathcal L_{[L\xi, \xi]} M + \mathcal L_\xi \mathcal L_\xi M L - \mathcal L_\xi \mathcal L_\xi L M -\mathcal L_{L\xi} \mathcal L_\xi M + \mathcal L_{M\xi} \mathcal L_\xi L - \mathcal L_\xi L \mathcal L_\xi M + \mathcal L_\xi M \mathcal L_\xi L = \\
   = & \mathcal L_\xi \mathcal L_{M\xi} L - \mathcal L_\xi \mathcal L_{L\xi} M + \mathcal L_\xi \mathcal L_\xi (ML) - \mathcal L_\xi (M \mathcal L_\xi L) - \mathcal L_\xi \mathcal L_\xi (LM) + \mathcal L_\xi (L \mathcal L_\xi M) = \\
   = & - \mathcal L_\xi \big(\langle L, M\rangle(\xi, \cdot) + \langle L, M\rangle(\cdot, \xi)\big).
\end{aligned}
$$
Here we again used \eqref{4.1:p1}  and formulas \eqref{4.1:formula1} -- \eqref{4.1:formula3} from Lemma \ref{lem2.3}. As $L, M$ are symmetries of each other, the expression in the brackets is zero and, therefore,  $\langle \widehat L, \widehat M\rangle (\widehat \xi, \widehat \xi) = 0$.  Since complete lifts $\widehat \xi$ span $T_{(p,u)}(T^* \mathsf{M}^n)$ for $p\ne 0$,  we conclude that the symmetric part of   $\langle \widehat L, \widehat M\rangle$  vanishes identically,  which implies the second statement of Theorem.

Now we proceed to the third. Assume that $L, M$ are strong symmetries of each other. This implies that they are symmetries of each other and $[[L, M]]_{FN} = 0$. Due to the first statement of Theorem \ref{t1}, we get that $[[\widehat L, \widehat M]]_{FN} = 0$ and $\widehat L, \widehat M$ are symmetries of each other due to the second statement. Thus, $\widehat L, \widehat M$ are strong symmetries of each other.

Finally, consider an arbitrary vector field $\xi$ on $\mathsf{M}^n$. By direct computation we get
$$
\begin{aligned}
    \widehat L \widehat M \widehat \xi - \widehat{LM} \widehat \xi = \widehat{LM \xi} + \widetilde{\mathcal L_{M\xi} L} -  \widetilde{\mathcal L_\xi M L} - \widehat{LM \xi} - \widetilde {\mathcal L_\xi (LM)} = \widetilde{\mathcal L_{M\xi} L} - \widetilde{M \mathcal L_\xi L}.
\end{aligned}
$$
Due to \eqref{ii:eq2}, the r.h.s. is the vertical lift of the operator field $- \langle L, M\rangle (\xi, \cdot)$. Similar to the previous, we get that $(\widehat L \widehat M  - \widehat{LM} ) \widehat \xi = 0$ implying $\widehat L \widehat M  = \widehat{LM}$, as  required.
Theorem is proved.
\end{proof}

\begin{proof}[Proof of Corollary \ref{cor2.1}]
 We will use the following statements.

\begin{Lemma}[Lemma 3.3 in \cite{nijapp2}]\label{lem2.4}
Let $L$ and $M$ be invertible Nijenhuis operators. Then $[[L, M]]_{FN} = 0$ if and only if $LM^{-1}$ is Nijenhuis.
\end{Lemma}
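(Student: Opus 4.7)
The plan is to prove the equivalence via a product identity for the Nijenhuis torsion. Set $N = LM^{-1}$, so that $L = NM$. The goal of the main computation is to derive an identity of the form
\begin{equation*}
\mathcal N_L(\xi,\eta) \;=\; \mathcal N_N(M\xi, M\eta) \;+\; N^2\,\mathcal N_M(\xi,\eta) \;+\; \Phi_{N,M}(\xi,\eta),
\end{equation*}
where $\Phi_{N,M}$ is a $(1,2)$-tensor built from the mixed bracket $\langle N, M\rangle$ whose vanishing is equivalent to $[[N,M]]_{FN}=0$. Once this identity is established, the hypothesis $\mathcal N_L = \mathcal N_M = 0$ reduces it to $\mathcal N_N(M\xi, M\eta) = -\Phi_{N,M}(\xi,\eta)$; since $M$ is invertible, $\mathcal N_N$ vanishes precisely when $\Phi_{N,M}$ does, i.e.\ precisely when $[[N,M]]_{FN}=0$.

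First I would derive the product identity by direct expansion. Starting from
\begin{equation*}
\mathcal N_{NM}(\xi,\eta) = (NM)^2[\xi,\eta] + [NM\xi, NM\eta] - NM[NM\xi,\eta] - NM[\xi,NM\eta],
\end{equation*}
the key substitution is $[NM\xi, NM\eta] = [Nu, Nv]$ with $u = M\xi$, $v = M\eta$, expanded via the definition of $\mathcal N_N(u,v)$. The remaining $[M\xi, M\eta]$-type terms unfold using $\mathcal N_M = 0$ together with bilinearity of the Lie bracket; after the cancellations, the residual mixed terms should assemble into a tensor built from $\langle N, M\rangle$ whose skew-symmetric part in $(\xi,\eta)$ is exactly the Frölicher--Nijenhuis bracket $[[N,M]]_{FN}$.

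Second, I would translate the vanishing of $[[N,M]]_{FN}$ into that of $[[L,M]]_{FN}$. Using bilinearity and symmetry of the Frölicher--Nijenhuis bracket together with a Leibniz-style identity for FN under operator composition, I would expand $[[NM, M]]_{FN}$ in terms of $[[N,M]]_{FN}$ composed with $M$ and $[[M,M]]_{FN} = 2\mathcal N_M$. Since $M$ is Nijenhuis, the second term vanishes; since $M$ is invertible, the resulting relation $[[L,M]]_{FN} = M \cdot [[N,M]]_{FN}$ (or a similar invertible transform) yields the equivalence of the two vanishing conditions.

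The hardest part will be the bookkeeping in the first step. The Frölicher--Nijenhuis bracket does not satisfy a clean Leibniz rule under composition of operators, and in the non-commuting setting the bracket $\langle\cdot,\cdot\rangle$ is not fully tensorial, its defect being described by Lemma \ref{lem2.1}. To manage this I would rewrite the calculation using the Lie-derivative expression \eqref{ii:eq2.3}, which localises everything to $\mathcal L_\xi L$, $\mathcal L_\xi M$ and their nested derivatives, and then track the $(\xi,\eta)$-symmetric and skew-symmetric parts separately: the Nijenhuis hypotheses $\mathcal N_L = \mathcal N_M = 0$ should eliminate the symmetric contributions, leaving a skew-symmetric residue that can be identified with $[[N,M]]_{FN}$ up to factors of $M$.
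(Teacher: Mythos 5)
First, a point of comparison: the paper does not actually prove this statement --- it is imported verbatim as Lemma 3.3 of \cite{nijapp2} --- so there is no internal proof to measure your attempt against, and it must be judged on its own. Your overall strategy (expand the Nijenhuis torsion of $N = LM^{-1}$ evaluated on $M\xi, M\eta$, use $\mathcal N_L = \mathcal N_M = 0$ to cancel terms, and identify the residue with a Fr\"olicher--Nijenhuis bracket) is sound and does lead to a short proof. Carrying out exactly the expansion you describe in your first step yields the clean identity
\begin{equation*}
\mathcal N_{LM^{-1}}(M\xi, M\eta) \;=\; \mathcal N_L(\xi,\eta) \;+\; \bigl(LM^{-1}\bigr)^2\,\mathcal N_M(\xi,\eta)\;-\; LM^{-1}\,[[L,M]]_{FN}(\xi,\eta),
\end{equation*}
valid for arbitrary operator fields with $M$ invertible and with no commutativity assumption (one only needs the substitutions $[L\xi,L\eta]=\mathcal N_L(\xi,\eta)-L^2[\xi,\eta]+L[L\xi,\eta]+L[\xi,L\eta]$, the analogous one for $M$, and the definition \eqref{eq:1}; all Lie-bracket terms cancel). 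Under the hypotheses this reduces to $\mathcal N_{LM^{-1}}(M\xi,M\eta) = -LM^{-1}[[L,M]]_{FN}(\xi,\eta)$, and invertibility of $L$ and $M$ gives the equivalence immediately.

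The gap is in how you propose to close the argument. You posit a residue $\Phi_{N,M}$ built from $\langle N, M\rangle$ whose vanishing is equivalent to $[[N,M]]_{FN}=0$, and then a second step converting $[[N,M]]_{FN}=0$ into $[[L,M]]_{FN}=0$ via a ``Leibniz-style identity'' of the shape $[[NM,M]]_{FN} = M\,[[N,M]]_{FN} + (\text{multiple of }\mathcal N_M)$. No such identity is available --- as you yourself concede, the Fr\"olicher--Nijenhuis bracket has no clean Leibniz rule under composition --- and moreover $\langle N,M\rangle$ is not even a tensor unless $N$ and $M$ commute, which is not assumed here (nor is it assumed in the lemma or in its application in Corollary \ref{cor2.1}). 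So step two, as described, cannot be carried out, and the equivalence ``$\Phi_{N,M}=0 \Leftrightarrow [[N,M]]_{FN}=0$'' is asserted rather than established. Fortunately the detour is unnecessary: the residue that actually emerges from your own step-one computation is $-LM^{-1}[[L,M]]_{FN}(\xi,\eta)$, already expressed through $L$ and $M$, manifestly tensorial and skew-symmetric, so the symmetric/skew bookkeeping you worry about at the end takes care of itself. Replace the conjectured form of the residue by the displayed identity and delete the second step, and the proof is complete.
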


\begin{Lemma}[See Section 2 in \cite{magri2}]\label{lem2.5}
Consider a pair of sympletic forms $\Omega$ and $\bar \Omega$. The bivectors $\Omega^{-1}$ and $\bar \Omega^{-1}$ yield  compatible Poisson tensors if and only if the operator $\bar \Omega^{-1} \Omega$ is Nijenhuis.    
\end{Lemma}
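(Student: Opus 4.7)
The plan is to reduce the equivalence to a single identity expressing the Schouten--Nijenhuis bracket of the two bivectors in terms of the Nijenhuis torsion of $N := \bar\Omega^{-1}\Omega$, and then to exploit non-degeneracy of $\Omega$ to read off both directions simultaneously.

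First I would record two preliminary observations. Since $\Omega$ and $\bar\Omega$ are both skew-symmetric and non-degenerate, a direct check gives $\Omega(NX, Y) = \bar\Omega(X,Y) = \Omega(X, NY)$, so $N$ is $\Omega$-self-adjoint. Equivalently, writing $P := \Omega^{-1}$ and $\bar P := \bar\Omega^{-1}$ as maps $T^*\mathsf M \to T\mathsf M$, one has $\bar P = N \circ P$, and the bivector $NP$ is automatically skew-symmetric. Since $\Omega$ and $\bar\Omega$ are closed, the bivectors $P$ and $\bar P$ are themselves Poisson, so $[P,P]_{SN} = [\bar P,\bar P]_{SN} = 0$, and the compatibility question collapses to the single Schouten vanishing $[P, NP]_{SN} = 0$.

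The core of the argument is the Magri--Morosi-type identity
\[
[P, NP]_{SN}(\alpha, \beta, \gamma) = \alpha\bigl(\mathcal{N}_N(P\beta, P\gamma)\bigr) + \text{cyclic in } \alpha,\beta,\gamma,
\]
valid whenever $P$ is Poisson and $NP$ is skew. Granting this, non-degeneracy of $P$ forces $P\beta$ and $P\gamma$ to range over all vector fields as $\beta, \gamma$ range over 1-forms, and $\alpha$ is arbitrary, so the left-hand side vanishes identically if and only if $\mathcal{N}_N \equiv 0$; this yields the equivalence in both directions at once.

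The main obstacle is the verification of the displayed identity. I would do this by direct computation in Darboux coordinates for $\Omega$: there the components of $P$ are constants, so the Schouten bracket collapses to a combination of first derivatives of the components of $N$ contracted with $P$, and pairing terms using the $\Omega$-self-adjointness of $N$ leaves precisely the components of $\mathcal{N}_N$ pushed forward by $P$. Non-degeneracy of $\Omega$ finally converts vanishing of this pushed-forward trivector into vanishing of $\mathcal{N}_N$ itself, closing both implications.
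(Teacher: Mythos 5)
The paper itself does not prove this lemma --- it is imported from Magri's work as a citation --- so I am judging your argument on its own merits. Your reduction is fine up to a point: since $\Omega$ and $\bar\Omega$ are both closed, $P=\Omega^{-1}$ and $\bar P=\bar\Omega^{-1}=NP$ are separately Poisson, so compatibility does collapse to $[P,NP]_{SN}=0$, and $NP=\bar P$ is indeed automatically skew (though your displayed relation should read $\bar\Omega(NX,Y)=\Omega(X,Y)$, not $\Omega(NX,Y)=\bar\Omega(X,Y)$). The genuine gap is the central identity
$$
[P,NP]_{SN}(\alpha,\beta,\gamma)=\alpha\bigl(\mathcal N_N(P\beta,P\gamma)\bigr)+\text{cyclic},
$$
which cannot hold: the Schouten bracket is $\mathbb R$-bilinear, so the left-hand side is homogeneous of degree one in $N$, while the Nijenhuis torsion is quadratic, so the right-hand side is homogeneous of degree two. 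Replacing $\bar\Omega$ by $t^{-1}\bar\Omega$ (still symplectic) sends $N\mapsto tN$ and stays inside the hypotheses of the lemma, so if your identity were true it would force both sides to vanish for \emph{every} pair of symplectic forms, i.e.\ it would ``prove'' that any two symplectic structures are compatible --- which is false. Under the weaker hypotheses you actually state (``$P$ Poisson and $NP$ skew'') there is an explicit counterexample: $N=f\,\Id$ with $f$ non-constant satisfies $\mathcal N_N=0$, yet $[P,fP]_{SN}=\pm\, X_f\wedge P\neq 0$ in dimension $\ge 4$. The Darboux computation you defer to would have revealed that the surviving terms do not assemble into $\mathcal N_N$.

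The identity that actually carries the proof must be quadratic in $N$ on both sides. One correct route: set $M=N^{-1}=\Omega^{-1}\bar\Omega$, so that the two given closed forms are $\Omega$ and $\Omega M=\bar\Omega$, and note that for small $\mu$ one has $(\mu P+\bar P)^{-1}=\sum_{k\ge 0}(-\mu)^k\,\Omega M^{k+1}$; hence compatibility is equivalent to closedness of all the $2$-forms $\Omega(M^j\cdot,\cdot)$, $j\ge 1$. The Magri--Morosi-type identity one then needs expresses $\ddd\bigl(\Omega(M^2\cdot,\cdot)\bigr)(X,Y,Z)$ as $\sum_{\mathrm{cyc}}\Omega\bigl(\mathcal N_M(X,Y),Z\bigr)$ plus terms involving $\ddd\Omega$ and $\ddd\bigl(\Omega(M\cdot,\cdot)\bigr)$, both of which vanish here; non-degeneracy of $\Omega$ then gives $\mathcal N_M=0\Leftrightarrow\ddd\bigl(\Omega(M^2\cdot,\cdot)\bigr)=0$, the forward implication follows by extracting the coefficient of $\mu$, and the converse by induction on $j$ together with the observation that $[\mu P+\bar P,\mu P+\bar P]_{SN}=2\mu[P,\bar P]_{SN}$ is linear in $\mu$. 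Since $\mathcal N_M=0$ iff $\mathcal N_N=0$ for invertible $N$, this yields the lemma; as written, your argument does not.
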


Now assume that $L$ and $M$ are Nijenhuis operators on $\mathsf{M}^n$. Due to Theorem \ref{t1}, the lifts $\widehat L, \widehat M$ are Nijenhuis operators on $T^* M^n$. Moreover, if $[[L, M]]_{FN} = 0$, then $[[\widehat L, \widehat M]]_{FN} = 0$.

For invertible $L, M$, we have that $\det \widehat L \neq 0, \det \widehat M \neq 0$. For $\Omega_L$ and $\Omega_M$ we have
$$
\Omega_{L^{-1}} (\cdot, \widehat L \widehat M^{-1}\cdot) = \Omega (\cdot, \widehat L^{-1} \widehat L \widehat M^{-1} \cdot) = \Omega (\cdot, \widehat M^{-1} \cdot) = \Omega_{M^{-1}}(\cdot, \cdot).
$$
That is we get that the $2$-forms are related via $\widehat L \widehat M^{-1}$. Due to Lemma \ref{lem2.4}, the operator $\widehat L \widehat M^{-1}$ is Nijenhuis. Applying Lemma \ref{lem2.5}, we get that the corresponding Poisson brackets are compatible, as stated.
\end{proof}


\section{Lifts of gl-regular Nijenhuis operators and their symmetries}\label{sect3}

We say that a linear operator $L: V \to V$, $\dim V = n$, is $\gl$-{\it regular} if one of the following equivalent conditions holds:
\begin{enumerate}
    \item The operators $\operatorname{Id}, L, \dots, L^{n - 1}$ form a basis of the centraliser of $L$. In other words, every operator $M$ that commutes with $L$ can be uniquely written as $M= \sum_{k=0}^{n-1} a_k L^k$, $a_k\in\R$.
    \item There exists  $\xi\in V$ such that the vectors $\xi, L\xi, \dots, L^{n - 1} \xi$ are linearly independent.  Such a vector $\xi\in V$ is called {\it cyclic}.
    \item There exists  $\alpha\in V^*$ such that the covectors $\alpha, L^*\alpha, \dots, (L^*)^{n - 1} \alpha$ are linearly independent. Such a covector $\alpha\in V^*$ is called {\it cyclic}.
    \item In the Jordan normal form of $L$ (perhaps, complex),  each eigenvalue of $L$ has a single Jordan block. 
\end{enumerate}
We say that an operator field $L$ on $\mathsf M$ is a $\gl$-{\it regular Nijenhuis} operator, 
if $L$ is $\gl$-regular at every point $\mathsf p\in\mathsf M$ and its Nijenhuis torsion vanishes. The diagonal Nijenhuis operator  $L=\textrm{diag}(u^1,\dots, u^n)$ is $\gl$-regular, if $u_i\ne u_j$.

A closed differential form $\alpha$ is called a {\it conservation law} for an operator field $L$, if $L^* \alpha$ is closed also. Since all  constructions in this paper are local, we assume that $\alpha = \ddd f$ so that $\ddd f$ is a conservation law if and only if  $\ddd (L^* \ddd f) = 0$. We say that a conservation law $\ddd f$  of a $\gl$-regular Nijenhuis operator $L$ is {\it regular},  if
$$
\ddd f, L^* \ddd f, \dots, (L^*)^{n - 1} \ddd f
$$
are linearly independent. In other words, $\ddd f$ is a cyclic covector for $L$.

The symmetries and conservation laws of $\gl$-regular Nijenhuis operators were studied in \cite{nij4}. Here we provide a  short list of the results we will need for further work:
\begin{enumerate}
    \item Every symmetry of a $\gl$-regular Nijenhuis operator is strong  by    \cite[Theorem 1.2]{nij4}. Thus, when we say {\it symmetry} in the context of $\gl$-regular Nijenhuis operators, we mean {\it strong symmetry}.
    \item Every conservation law of a $\gl$-regular Nijenhuis $L$ is, at the same time, a conservation law of every symmetry of $L$ by  \cite[Theorem 1.3]{nij4}.
    \item For a real-analytic $\gl$-regular Nijenhuis operator,  regular conservation law locally exists near any point.  In the smooth case, the existence of a regular conservation law is proved under some genericity assumptions (see comments after Remark 1.1 in \cite{nij4}).
    \item In coordinates $u^1, \dots, u^n$ such that $\ddd u^1$ is a regular conservation law and $\ddd u^i = (L^*)^{i - 1} \ddd u^1$,  by  \cite[Theorem 1.3]{nij4},  the Nijenhuis operator takes the form
    \begin{equation}\label{companion}
    L = \left(\begin{array}{ccccc}
     0 & \!\! 1 &  \!\! 0 & \dots & 0  \\
     0 &  \!\! 0 &  \!\! 1 & \ddots & \vdots  \\
     \vdots & \!\!  \vdots &  \!\! \ddots & \ddots & 0 \\
     0 &  \!\! 0 & \!\!  \dots & 0 & 1  \\
     \sigma_n & \!\!  \sigma_{n - 1} & \!\!   \dots  & \sigma_2 & \sigma_1  \\
    \end{array}\right).    
    \end{equation}
    We refer to \eqref{companion} as {\it second companion form} and $u^1, \dots, u^n$ as {\it second companion coordinates}.
    \item For a real-analytic $\gl$-regular Nijenhuis operator, each symmetry is parameterized by $n$ functions of a single variable. In the smooth case,  the same holds at almost every point, see \cite[Theorems 1.4 and 1.5]{nij4}.
\end{enumerate}

\begin{Remark}\label{rem3.1}
\rm{
If $L$ is a $\gl$-regular Nijenhuis operator, then its lift $\widehat L$ is not $\gl$-regular. Indeed, take second companion coordinates. Then, due to Remark \ref{rem2.2},  $\widehat L$ has exactly two Jordan blocks for each of its eigenvalues. 
}    
\end{Remark}

\begin{Ex}\label{ex3.1}
\rm{
If a Nijenhuis operator $L$ has simple real spectrum, then in an appropriate coordinate system, it can be brought to the diagonal form
$$
L = \left( \begin{array}{ccccc}
     f_1(u^1) & 0  & \dots & 0  \\
     0 & f_2(u^2)  & \dots & 0  \\
     \vdots & \vdots &  \ddots &\vdots  \\
     0 & 0  & \dots & f_n(u^n)
\end{array}\right).
$$
We see that $\ddd (L^* \ddd u^i) = 0$, thus, due to Remark \ref{rem2.2}, we have
$$
\widehat L = \left( \begin{array}{cc}
     L &  0_{n \times n}\\
     0_{n \times n} & L 
\end{array}\right).
$$

Assume now that the derivatives of $f_i$ are all different from zero.  The coordinate change $u^i_{\mathrm{new}} = f_i(u^i)$
brings $L$ to a simpler form $L=\operatorname{diag}(u^1,\dots,u^n)$. \weg{Then the  splitting theorem (Theorem 1.1 in \cite{nij4}) reduces the study of strong symmetry of $\widehat L$ to the study of strong symmetries 
of two-dimensional Nijenhuis operators of the form:
$$
\left( \begin{array}{cc}
     u & 0  \\
     0 & u 
\end{array}\right) = u \cdot\Id.
$$
in the space with coordinates $(u,p)$. 
}
A straightforward computation shows that 
strong symmetries of $\widehat L$ are
$
Q=\begin{pmatrix} R & 0 \\ 0 & R \end{pmatrix}$, where $R = \operatorname{diag}\Bigl( 
r_1(u^1, p_1), \dots, r_n(u^n, p_n)\Bigr)$.   In particular, this implies that every strong symmetry $Q$ of $\widehat L$ is a linear combination of $\widehat{\Id}, \widehat L, \dots, \widehat L^{n-1}$.
}    
\end{Ex}

\begin{Ex}\label{ex3.2}
\rm{
Let $
L = \left( \begin{array}{cc}
     0 & 1 \\
     0 & 0 
\end{array}\right)
$ in some coordinates $u^1, u^2$.
The lift $\widehat L$, in coordinates  $ p_1, p_2, u^1, u^2$, takes the form
$$
\widehat L = \left(\begin{array}{cccc}
     0 & 0 & 0 & 0 \\
     1 & 0 & 0 & 0 \\
     0 & 0 & 0 & 1 \\
     0 & 0 & 0 & 0 \\
\end{array}\right).
$$
Condition $R \widehat L - \widehat L R$ yields
$$
R = \left(\begin{array}{cccc}
     r_{11} & 0 & 0 & r_{21} \\
     r_{12} & r_{11} & r_{21} & r_{22} \\
     r_{32} & r_{31} & r_{41} & r_{42} \\
     r_{31} & 0 & 0 & r_{41} \\
\end{array}\right)
$$
The differential condition $ \langle \widehat L, R \rangle=0$ amounts then to the following relations
$$
\widehat L \,\pd{R}{p_2} = 0, \quad \widehat L \,\pd{R}{u^1}=0, \quad \pd{R}{p_2} =  \widehat L\,\pd{R}{p_1}, \quad
\pd{R}{u^1} =  \widehat L\,\pd{R}{u^2}.
$$
leading to the following general form of strong symmetries of $\widehat L$:
$$
R = \left(\begin{array}{cccc}
     r_{11}  & 0 & 0 & r_{21}  \\
     \pd{r_{11}}{p_1} p_2 + \pd{r_{11}}{u_2} u_1 + m_{11} & r_{11} & r_{21} & \pd{r_{21}}{p_1} p_2 + \pd{r_{21}}{u_2} u_1 + m_{22}\\
     \pd{r_{31}}{p_1}p_2 + \pd{r_{31}}{u_2} u_1 + m_{32} & r_{31}  & r_{41}  & \pd{r_{41}}{p_1} p_2 + \pd{r_{41}}{u_2} u_1 + m_{42}\\
     r_{31} & 0 & 0 & r_{41}  \\
\end{array}\right),
$$
where $r_{ij}, m_{ij}$ are arbitrary functions of $p_1$ and $u_2$.   In contrast to the previous Example \ref{ex3.1}, $R$ cannot be written as a polynomial in $\widehat L$ with functional coefficients.

\weg{
Let $\xi$ be an arbitrary basis coordinate vector field. Then
$$
\begin{aligned}
& \langle \widehat L, R \rangle (\partial_{p_2}, \xi) = - \widehat L\,[\partial_{p_2}, R\xi] = - \widehat L \,\pd{R}{p_2} \,\xi, \\
& \langle \widehat L, R \rangle (\partial_{u^1}, \xi) = - \widehat L\,[\partial_{u^1}, R\xi] = - \widehat L \,\pd{R}{u^1}\, \xi. \\
\end{aligned}
$$
This yields non-trivial conditions only for $\xi = \partial_{p_1}, \partial_{u^2}$. As a result we get that functions $r_{11}, r_{21}, r_{31}, r_{41}$ do not depend on $u^1, p_2$. The rest of components yield restrictions
$$
\begin{aligned}
& \langle \widehat L, R \rangle (\partial_{p_1}, \xi) = [\partial_{p_2}, R\xi] - \widehat L\, [\partial_{p_1}, R\xi] = \pd{R}{p_2}\,\xi - \widehat L\,\pd{R}{p_1}\,\xi, \\
& \langle \widehat L, R \rangle (\partial_{u^2}, \xi) = [\partial_{u^1}, R\xi] - \widehat L\, [\partial_{u^2}, R\xi] = \pd{R}{u^1}\,\xi - \widehat L\,\pd{R}{u^2}\,\xi. \\
\end{aligned}
$$
These equations yield the restrictions
$$
\begin{aligned}
& \pd{r_{12}}{p_2} = \pd{r_{11}}{p_1}, \quad \pd{r_{32}}{p_2} = \pd{r_{31}}{p_1}, \quad \pd{r_{22}}{p_2} = \pd{r_{21}}{p_1}, \quad \pd{r_{42}}{p_2} = \pd{r_{41}}{p_1}, \\
& \pd{r_{12}}{u_1} = \pd{r_{11}}{u_2}, \quad \pd{r_{32}}{u_1} = \pd{r_{31}}{u_2}, \quad \pd{r_{22}}{u_1} = \pd{r_{21}}{u_2}, \quad \pd{r_{42}}{u_1} = \pd{r_{41}}{u_2}. \\
\end{aligned}
$$

The strong symmetries of $\widehat L$ take the form
$$
R = \left(\begin{array}{cccc}
     r_{11}  & 0 & 0 & r_{21}  \\
     \pd{r_{11}}{p_1} p_2 + \pd{r_{11}}{u_2} u_1 + m_{11} & r_{11} & r_{21} & \pd{r_{21}}{p_1} p_2 + \pd{r_{21}}{u_2} u_1 + m_{22}\\
     \pd{r_{31}}{p_1}p_2 + \pd{r_{31}}{u_2} u_1 + m_{32} & r_{31}  & r_{41}  & \pd{r_{41}}{p_1} p_2 + \pd{r_{41}}{u_2} u_1 + m_{42}\\
     r_{31} & 0 & 0 & r_{41}  \\
\end{array}\right),
$$
where $r_{ij}, m_{ij}$ are arbitrary functions of $p_1$ and $u_2$.   In contrast to the previous example from Remark \ref{ex3.1}, $R$ cannot be written as a polynomial in $\widehat L$ with functional coefficients.  
}

}    
\end{Ex}



\section{Completely integrable systems associated with gl-regular Nijenhuis operators}\label{sect4}

We start with a general theorem which provides a method for constructing commuting functions on Poisson--Nijenhuis manifolds based on the concept of a symmetry (strong symmetry) of an operator field.

Let $N$ be a Nijenhuis operator on a Poisson manifold $(\mathsf M, \mathcal P)$.
 Assume that the operators $\Id, N, \dots, N^{n-1}$ are linearly independent at each point, and $n$ is the maximal number with this property\footnote{Equivalently, $n$ is the degree of the minimal polynomial of $N$ so that for every $k\ge n$ we can write $N^k$ as a linear combination of $\Id, N, \dots, N^{n-1}$ with functional coefficients.}.  Consider the space $\mathcal S(N)$ of strong symmetries of $N$  that can be written as linear combinations   
$
Q = h_1 \Id + h_2 N + \dots + h_n N^{n-1}
$ with functional coefficients $h_i\in  C^\infty(\mathsf M)$:  
$$
\mathcal S(N) =\left\{ Q = \sum_k h_k N^{k-1}~|~ \langle Q, N\rangle = 0\right\}.
$$  
 
\begin{Theorem}\label{t2}
Let $Q_1,\dots, Q_n\in \mathcal S(N)$ be a basis of $\mathcal S(N)$ in the sense that every $Q\in \mathcal S(N)$ can be written as a linear combination 
$Q=\sum_i f_i Q_i$, $f_i\in C^\infty (\mathsf M)$. If $N$ is self-adjoint w.r.t. $\mathcal P$, that is, 
 $\mathcal P(\alpha,N^* \beta) = \mathcal P(N^*\alpha, \beta)$  for all $\alpha, \beta \in T^*_x\mathsf M$, $x\in \mathsf M$, then
 for any  $Q\in \mathcal S(N)$, the coefficients $f_i$ Poisson commute.
\end{Theorem}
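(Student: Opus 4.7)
The plan is to first treat the power basis $\Id, N, \dots, N^{n-1}$ of $\mathcal S(N)$, where the argument is cleanest, and then transfer the conclusion to a general basis. Writing $Q=\sum_{k=1}^n h_k N^{k-1}$ and using Lemma \ref{lem2.1} together with the identity $\langle N^{k-1},N\rangle=0$ (any two powers of a Nijenhuis operator being mutual strong symmetries; this follows by induction from \eqref{ii:eq3}), the tensor $\langle Q, N\rangle(\xi,\eta)$ collapses to $\sum_k\bigl(N^k\eta\,dh_k(\xi)-N^{k-1}\eta\,(N^*dh_k)(\xi)\bigr)$. Reducing $N^n$ by Cayley--Hamilton as $\sum c_k N^{k-1}$ (with $c_k\in C^\infty(\mathsf M)$) and using the linear independence of $\Id, N, \dots, N^{n-1}$, the equation $\langle Q, N\rangle=0$ becomes the recursion
\[ N^* dh_k \;=\; dh_{k-1} + c_k\, dh_n, \qquad k=1,\dots,n, \qquad dh_0:=0. \]
Iterating this recursion expresses each $dh_i$ as a polynomial in $N^*$ applied to the single form $dh_n$, say $dh_i=S_{n-i}(N^*)\,dh_n$, with $S_j$ having coefficients in $C^\infty(\mathsf M)$.

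The key identity then follows from self-adjointness: iterating $\mathcal P(N^*\alpha,\beta)=\mathcal P(\alpha,N^*\beta)$ yields $\mathcal P(dh_n,(N^*)^m dh_n)=\mathcal P((N^*)^m dh_n,dh_n)$, and the antisymmetry of $\mathcal P$ then forces $\mathcal P\bigl(dh_n,(N^*)^m dh_n\bigr)=0$ for every $m\ge 0$. Expanding
\[ \{h_i,h_j\}=\mathcal P\bigl(S_{n-i}(N^*)\,dh_n,\;S_{n-j}(N^*)\,dh_n\bigr) \]
as a $C^\infty(\mathsf M)$-bilinear combination of the vanishing quantities $\mathcal P(dh_n,(N^*)^m dh_n)$ gives $\{h_i,h_j\}=0$. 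This proves the theorem when the basis is taken to be the powers of $N$, and, more generally, it shows that for any $Q\in\mathcal S(N)$ the coefficients in its power expansion pairwise Poisson commute.

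For an arbitrary basis $Q_i=\sum_k g_{ik}N^{k-1}$, the two sets of coefficients are related by $h_k=\sum_i f_i g_{ik}$, which has to be inverted to produce the $f_i$. Since each $Q_i\in\mathcal S(N)$, the preceding step applied to $Q_i$ already gives $\{g_{ik},g_{il}\}=0$ for each fixed $i$. To control the cross brackets $\{g_{ik},g_{jl}\}$ and $\{g_{ik},h_l\}$ I would apply the previous conclusion to the family $Q+\lambda Q_i\in\mathcal S(N)$, $\lambda\in\mathbb R$, and collect coefficients by powers of $\lambda$. The main obstacle I anticipate is precisely this last bookkeeping: the clean structural argument for the power basis collapses everything to the single form $dh_n$, whereas a general change of basis scatters the data across the matrix $G^{-1}=(g_{ik})^{-1}$, and a careful verification is needed to show that this transformation preserves involution.
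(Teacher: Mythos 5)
Your treatment of the power basis $\Id, N, \dots, N^{n-1}$ is correct and is essentially the paper's own argument in that special case: the recursion $N^*\ddd h_k = \ddd h_{k-1} + c_k\,\ddd h_n$ says exactly that $\operatorname{Span}(\ddd h_1,\dots,\ddd h_n)$ is $N^*$-invariant and $N^*$-cyclic with generator $\ddd h_n$, and your vanishing $\mathcal P\bigl(\ddd h_n,(N^*)^m\ddd h_n\bigr)=0$ is the paper's Lemma \ref{lem4.2} (a cyclic subspace of a self-adjoint operator is isotropic for the skew form). The gap is in the transfer to a general basis, and it is not mere bookkeeping. Polarisation of $Q+\lambda Q_i$ and $Q_i+\lambda Q_j$ only produces the symmetrised identities $\{h_k,g_{il}\}+\{g_{ik},h_l\}=0$ and $\{g_{ik},g_{jl}\}+\{g_{jk},g_{il}\}=0$; it does not control the individual cross-brackets. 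Since $(f_1,\dots,f_n)$ is obtained from $(h_1,\dots,h_n)$ by multiplying by the inverse of the matrix $(g_{ik})$, the Leibniz rule expresses $\{f_i,f_j\}$ through precisely those uncontrolled individual brackets $\{h_k,g_{jl}\}$ and $\{g_{ik},g_{jl}\}$ with $i\neq j$. Equivalently: your argument proves that $\operatorname{Span}(\ddd h_1,\dots,\ddd h_n)$ is isotropic, but $\ddd h_k=\sum_i g_{ik}\,\ddd f_i+\sum_i f_i\,\ddd g_{ik}$, so this span differs from $\operatorname{Span}(\ddd f_1,\dots,\ddd f_n)$, and isotropy of the former does not imply isotropy of the latter. (For $i\neq j$ the subspaces $\operatorname{Span}_k(\ddd g_{ik})$ and $\operatorname{Span}_k(\ddd g_{jk})$ are each isotropic by your first step, but their sum is $N^*$-invariant without being cyclic, so nothing forces the mutual brackets to vanish.)

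The repair is to drop the detour through the power basis and run your computation directly on the general basis, which is what the paper does. Since $\langle N,Q_i\rangle=0$, Lemma \ref{lem2.1} gives $0=\langle N,\sum_i f_iQ_i\rangle=\sum_i\bigl(N^*\ddd f_i\otimes Q_i-\ddd f_i\otimes NQ_i\bigr)$; writing $NQ_i=\sum_j c_i^jQ_j$ and using the pointwise linear independence of the $Q_i$ yields $N^*\ddd f_j=\sum_i c_i^j\,\ddd f_i$. The matrix $\bigl(c_i^j\bigr)$ is the matrix, in the basis $Q_1,\dots,Q_n$, of multiplication by $N$ on $\operatorname{Span}(\Id,N,\dots,N^{n-1})$, which is a cyclic operator with $\Id$ as a cyclic vector; hence $\operatorname{Span}(\ddd f_1,\dots,\ddd f_n)$ is itself $N^*$-invariant and $N^*$-cyclic, and your isotropy argument applies verbatim to it. Your power-basis case is exactly this computation with $Q_i=N^{i-1}$, where $\bigl(c_i^j\bigr)$ becomes the companion matrix of the minimal polynomial of $N$.
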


\begin{proof}
The condition that $Q=f_1Q_1 + \dots + f_n Q_n$ is a strong symmetry of $N$ is equivalent to
$$
0 = \langle N,  f_1Q_1 + \dots + f_n Q_n \rangle =
\sum_i f_i \langle N , Q_i\rangle + N^* \ddd f_i \otimes Q_i  - \ddd f_i \otimes NQ_i.
$$
Since $\langle N , Q_i\rangle=0$, we conclude that necessary and sufficient condition for $Q=\sum f_i Q_i$ to be a strong symmetry of $N$  can be written in the form 
\begin{equation}
\label{eq:nsc}
N^* \ddd f_i \otimes Q_i  = \ddd f_i \otimes NQ_i 
\end{equation}
Notice that $NQ_i\in \mathcal S(N)$ and, therefore, we can set $NQ_i = \sum_j c_i^j Q_j$    to get
$$
\sum_j \left( N^*\ddd f_j - \sum_i c_i^j \ddd f_i\right) \otimes Q_j =0. 
$$ 
Since $Q_1,\dots, Q_n$ are linearly independent, we conclude that
\begin{equation}
\label{eq:QN}
N^*\ddd f_j = \sum_{i} c_i^j \ddd f_i.
\end{equation}   
In particular,  $\operatorname{Span} (\ddd f_1,\dots, \ddd f_n)$ is an $N^*$-invariant subspace.

\begin{Lemma}\label{lem4.1}
The subspace $\operatorname{Span} (\ddd f_1,\dots, \ddd f_n)$ is $N^*$-cyclic in the sense that  there exists  $\alpha = \sum_j \alpha_j \ddd f_j$ such that $\operatorname{Span} (\ddd f_1,\dots, \ddd f_n) =
\operatorname{Span} \bigl(\alpha, N^*\alpha, \dots, (N^*)^{n-1}\alpha\bigr)$.
\end{Lemma}

\begin{proof}
Take $\alpha = \sum_j \alpha_j \ddd f_j \in \operatorname{Span} (\ddd f_1,\dots, \ddd f_n)$. Then
$$
N^* \alpha = N^* \left(\sum_j \alpha_j \ddd f_j\right) = \sum_{i} \widetilde \alpha_i \ddd f_i = \widetilde \alpha, \quad \mbox{where $\widetilde \alpha_i =
\sum_{j} \alpha_j c_i^j $}
$$
Equivalently, in matrix form we have $\alpha \overset{N^*}{\mapsto} \widetilde \alpha = \alpha \, C$,  where $\alpha$ and $\widetilde \alpha$ are understood as row-vectors with $n$ components and  $C =  \left( c_i^j \right)$. Thus, it is sufficient to show that the matrix $C$ is cyclic.  

Recall that the entries of $C$ are defined from the relations $NQ_j = \sum_j c_i^j Q_j$, which means that 
$C$ is the matrix of the linear operator $\mu_N:\mathcal Z(N)  \to  \mathcal Z(N)$, where $\mathcal Z(N) = \operatorname{Span}\left(\Id, N, \dots, N^{n-1}\right)$ and $\mu_N(Q) = NQ$.  It remains to notice that $\mu_N$ is obviously cyclic with $\Id$ being one of its cyclic vectors.
\end{proof}

Finally, we use another well-known fact from Linear Algebra. 

\begin{Lemma}\label{lem4.2}
Let $\mathcal P$ be a skew-symmetric bilinear form on a vector space $V$ and $A:V\to V$ be a linear operator such that 
\begin{equation}
\label{eq:selfadjoint}
\mathcal P(A\xi , \eta) = \mathcal P(\xi, A\eta), \quad \mbox{for all $\xi,\eta\in V$}.
\end{equation} 
Then every $A$-cyclic subspace of $V$ is isotropic w.r.t. $\mathcal P$. 
\end{Lemma}

\begin{proof}  Let $W = \operatorname{Span}\left(A^{k}\xi, \ k=0,1,2,\dots \right) \subset V$ be an $A$-cyclic subspace.
We need to check that $\mathcal P(A^s\xi, A^k \xi)=0$ for all $k$ and $s$.  First of all, notice that  $\mathcal P_A (\xi, \eta)\overset{\mathrm{def}} = \mathcal P(A\xi, \eta)$ is skew-symmetric. Indeed, 
$$
\mathcal P_A(\xi, \eta) = \mathcal P(A\xi, \eta) = \mathcal P(\xi, A\eta) = 
- \mathcal P(A\eta, \xi) = - \mathcal P_A(\eta, \xi).
$$
Without loss of generality we assume that $s \ge k$. Then using \eqref{eq:selfadjoint}, we get
$$
\mathcal P(A^s\xi, A^k \xi) = \mathcal P (A^{s - 1}\xi, A^{k + 1} \xi) = \dots = 
\begin{cases}
\mathcal P(A^p\xi, A^p \xi) = 0,  & \mbox{if $k\!+\!s = 2p$},\\
\mathcal P(A^{p+1}\xi, A^p \xi) = \mathcal P_A (A^p\xi, A^p \xi) = 0, & \mbox{if $k\!+\!s = 2p\!+\!1$,}
\end{cases}
$$
as required.  \end{proof}

Lemmas \ref{lem4.1} and \ref{lem4.2}  (for $A=N^*$)  imply that  the subspace $\operatorname{Span} (\ddd f_1,\dots, \ddd f_n)$ is isotropic w.r.t. the Poisson structure $\mathcal P$, that is, $f_1,\dots, f_n$ Poisson commute, as required.
\end{proof}

We now return to the settings of Sections \ref{sect2} and \ref{sect3} and apply this theorem in the situation where $\mathcal P$ is the canonical Poisson structure $\Omega^{-1}$ on the cotangent bundle $T^*\mathsf M$ and $N$ is the complete lift $\widehat L$ of a $\gl$-regular Nijenhuis operator $L$ defined on $\mathsf M$.  In particular, $n = \dim M$, since $L$ is $\gl$-regular. 

\begin{Remark}\label{rem:4.1}{\rm
Notice that in this case the pair $(\Omega^{-1}, \widehat L)$ defines a Poisson--Nijenhuis structure on $T^*\mathsf M$ in the sense of Magri {\it et al} \cite{MagriSchw}, that is,  the bilinear form $\Omega^{-1} \bigl(\cdot,  {\widehat L}^* \cdot \bigr)$ is not only skew-symmetric (as the form $\mathcal P(\cdot, N^*\cdot)$ from Theorem \ref{t2}) but also satisfies the Jacobi identity and, as a result,  defines a Poisson structure compatible with $\Omega^{-1}$.
}\end{Remark}

Our next goal is to explain that the space 
$$
\mathcal S(\widehat L) = \left\{ Q =  h_1 \widehat \Id + h_2 \widehat L + \dots + h_n \widehat L^{n-1} ~|~ \langle Q, \widehat L\rangle = 0\right\}   
$$ 
is, in fact, very large. This  allows us to construct many examples of integrable Hamiltonian systems on $T^* \mathsf M$, both classical and new.

\begin{Ex}\label{ex:4.1}
{\rm If  $L=\operatorname{diag}(u^1,\dots, u^n)$, then as explained  in Example \ref{ex3.1},  every strong symmetry $Q$ of $\widehat L$ is a linear combination of \  $\widehat{\Id}, \widehat L, \dots , \widehat L^{n-1}$.
More specifically, in canonical coordinates  $(p, u)$,  we have $\widehat L = \begin{pmatrix} L & 0 \\ 0 & L\end{pmatrix}$, and the
operators $Q\in \mathcal S (\widehat L)$ have the form
$Q=\begin{pmatrix} R & 0 \\ 0 & R\end{pmatrix}$, where $R = \operatorname{diag} (r^1(p_1, u^1),\dots, r^n(p_n, u^n))$.

If we choose basis elememts $Q_i = \begin{pmatrix} R_i & 0 \\ 0 & R_i\end{pmatrix} \in \mathcal S(\widehat L)$, with $R_i = 
\operatorname{diag} (s^1_i(p_1, u^1),\dots, s^n_i(p_n, u^n))$,  then the relation $\sum f_i Q_i = Q$ from Theorem \ref{t2} amounts to the matrix equation
$$
\begin{pmatrix}
s^1_1 & \dots & s^1_n\\
\vdots & \ddots & \vdots \\
s^n_1 & \dots & s^n_n\\
\end{pmatrix} \begin{pmatrix} f_1 \\ \vdots \\ f_n  \end{pmatrix} = \begin{pmatrix}
r^1 \\ \vdots \\ r^n
\end{pmatrix}
$$ 
with $s^i_j = s^i_j(p_i, u^i)$ and $r^i = r^i(p_i, u^i)$ known as generalised  St\"ackel separation relations \cite[ \S  4.2.2]{Bbook}. If we assume that $s^i_j$ depends on $u^i$ only and $r^i = p_i^2 + V_i(u^i)$, then we obtain  classical St\"ackel separable systems.   
}\end{Ex}

Notice that in this example,  operators $Q\in \mathcal S(\widehat L)$ are parametrised by $n$ functions  $r_1,\dots, r_n$ of two variables.
Let us explain that this is a general fact.  

First of all, notice that $\mathcal S(\widehat L)$ is an infinite-dimensional real vector space that contains the lifts $\widehat M$ of all symmetries $M\in \Sym L$.  Indeed, if $M=\sum f_i L^{i-1}$ is a symmetry of $L$, then $\widehat M =\sum f_i \widehat L^{i-1}$.  This follows from the fact that every $\gl$-regular operator $L$ admits a coordinate system $u^1,\dots, u^n$  such that $\ddd u^i$ is a conservation law for $L$ and, therefore, for each  $M\in  \Sym L$ (see the list of properties of $\gl$-regular operators in Section \ref{sect3}). In the corresponding canonical coordinate systems $(p,u)$, the lift of symmetries $M$ is a purely algebraic operation that does not involve any derivatives (see Remark \ref{rem2.2}), and the above statement becomes obvious.  
Moreover, $\widehat M$ is a strong symmetry of $\widehat L$  by item 3 of Theorem \ref{t1}. Hence $\widehat M\in \mathcal S(\widehat L)$. 

The next observation is that for $Q_1, Q_2 \in  \mathcal S(\widehat L)$, we have $Q_1Q_2 \in  \mathcal S(\widehat L)$, i.e., $Q$ is a commutative associative algebra under pointwise multiplication.   

Next, we describe an important example of an operator  $P\in  \mathcal S(\widehat L)$, which is not a complete lift of any $M\in \Sym L$ and which will be then used to generate other operators $Q\in  \mathcal S(\widehat L)$ suitable for the construction from Theorem \ref{t2}. 

Let $\ddd f$ be a regular conservation law of $L$ and consider the second companion coordinates  $u^1,\dots, u^n$ generated by $\ddd f$, i.e.,  such that $\ddd  u^k = L_{k-1}^* \ddd f$.   Let $p_1,\dots, p_n$ denote the canonical momenta corresponding to the coordinates $u^i$ and introduce the operator $P$ on $T^*\mathsf M$  (i.e. $(1,1)$-tensor field on $T^*\mathsf M$) by setting
\begin{equation}
\label{eq:defP}
P =  p_1 \widehat\Id +  p_2 \widehat L + \dots +  p_n \widehat L^{n-1}.
\end{equation}
Notice that at each point $(p,u)\in T^*\mathsf M$,  the operator $P: T_{(p,u)} (T^*\mathsf M) \to T_{(p,u)} (T^*\mathsf M)$ can be equivalently defined by two conditions
  \begin{itemize}
  \item[\rm(i)] $P \in \operatorname{Span}\left(\Id, \widehat L, \widehat L^2 , \dots, \widehat L^{n-1}\right)$; 
  \item[\rm(ii)] $P^*\ddd f = \theta$, where $\theta$ is the canonical Liouville form\footnote{Notice that $\ddd f$ can be treated as a $1$-form not only on $\mathsf M$ but also on $T^*\mathsf M$, so that the expression $P^*\ddd f$ makes sense.}. 
\end{itemize}
Indeed, if $P = \sum f_i \left(\widehat L^{i-1}\right)$,  then $P^* \ddd f = \sum f_i \left(\widehat L^{i-1}\right)^*\ddd f = \sum f_i \ddd u^i$. Hence  $P^*\ddd f = \theta$ if and only if $f_i = p_i$.


\begin{Proposition}\label{prop4.1}   
$P\in \mathcal S(\widehat L)$.
\end{Proposition}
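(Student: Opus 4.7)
Since $P=\sum_{k=1}^{n} p_k \widehat L^{k-1}$ is a polynomial in $\widehat L$ with functional coefficients, it commutes with $\widehat L$ automatically, so verifying $P\in\mathcal S(\widehat L)$ reduces to showing $\langle \widehat L, P\rangle = 0$. The plan is: (i) apply the Leibniz identity from Lemma \ref{lem2.1} to each term $p_k\widehat L^{k-1}$, reducing the problem to a sum involving only $\widehat L^{*}\ddd p_k$ and powers of $\widehat L$; (ii) pass to second companion coordinates in which $\widehat L$ becomes block-diagonal, so $\widehat L^{*}\ddd p_k$ can be read off explicitly; (iii) close the telescoping sum by Cayley--Hamilton.

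For step (i), the second identity in \eqref{ii:tensor} gives
\begin{equation*}
\langle \widehat L,\, p_k\widehat L^{k-1}\rangle = p_k\,\langle \widehat L, \widehat L^{k-1}\rangle + \widehat L^{*}\ddd p_k\otimes \widehat L^{k-1} - \ddd p_k\otimes \widehat L^{k}.
\end{equation*}
By Theorem \ref{t1}, $\widehat L$ is Nijenhuis, so every $\widehat L^{k-1}$ is a strong symmetry of $\widehat L$ and the first term vanishes. Summing over $k$,
\begin{equation*}
\langle \widehat L, P\rangle = \sum_{k=1}^{n}\bigl(\widehat L^{*}\ddd p_k\otimes \widehat L^{k-1} - \ddd p_k\otimes \widehat L^{k}\bigr).
\end{equation*}

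For step (ii), take second companion coordinates $u^1,\dots,u^n$ generated by $\ddd f$, so that $\ddd u^k=(L^{*})^{k-1}\ddd f$ and $L$ has the shape \eqref{companion}. The key observation is that every $\ddd u^k$ is a conservation law of $L$: $\ddd f$ is such by hypothesis, each $L^{k-1}$ is a strong symmetry of $L$, and by item 2 of the list in Section \ref{sect3} every conservation law of a $\gl$-regular Nijenhuis operator is a conservation law of each of its strong symmetries. Hence $\ddd(L^{*}\ddd u^i)=0$ for all $i$, which by Remark \ref{rem2.2} forces the off-diagonal block $R$ in \eqref{eq:Llift} to vanish. Thus in canonical coordinates $(p,u)$ the lift is block-diagonal, $\widehat L=\begin{pmatrix} L^\top & 0 \\ 0 & L\end{pmatrix}$, and a direct reading of the columns of \eqref{companion} yields $\widehat L^{*}\ddd p_k=\ddd p_{k-1}+\sigma_{n-k+1}\ddd p_n$ with the convention $\ddd p_0=0$.

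Substituting and reindexing, step (iii) becomes
\begin{equation*}
\langle \widehat L, P\rangle = \sum_{k=1}^{n-1}\ddd p_k\otimes \widehat L^{k} + \ddd p_n\otimes \Bigl(\sum_{k=1}^{n}\sigma_{n-k+1}\widehat L^{k-1}\Bigr) - \sum_{k=1}^{n}\ddd p_k\otimes \widehat L^{k}.
\end{equation*}
The inner sum equals $\widehat L^{n}$ by the Cayley--Hamilton identity $L^n=\sum_k\sigma_{n-k+1}L^{k-1}$ for $L$ in companion form, which also holds for $\widehat L$ since $\widehat L$ shares the same minimal polynomial as $L$ (its Jordan blocks merely double, as in Remark \ref{rem3.1}). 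All three sums then cancel, and $\langle \widehat L, P\rangle=0$. The step I expect to require the most care is the block-diagonal form of $\widehat L$ in second companion coordinates; this is not automatic from \eqref{companion} and rests on the propagation of conservation laws along strong symmetries. Once it is in place, what remains is routine tensor algebra closed by Cayley--Hamilton.
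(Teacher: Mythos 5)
Your proof is correct and follows essentially the same route as the paper's: both reduce the claim via the Leibniz identity of Lemma \ref{lem2.1} to the condition $\sum_k\bigl(\widehat L^{*}\ddd p_k\otimes\widehat L^{k-1}-\ddd p_k\otimes\widehat L^{k}\bigr)=0$, compute $\widehat L^{*}\ddd p_k$ from the block-diagonal form of $\widehat L$ in second companion coordinates, and close the sum with Cayley--Hamilton. Your explicit justification of the block-diagonal form (propagation of the conservation law $\ddd f$ through the powers of $L$ plus Remark \ref{rem2.2}) fills in a step the paper leaves implicit, but the argument is the same.
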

\begin{proof} To show that $P$ is a strong symmetry of $\widehat L$ we use formula \eqref{eq:nsc} as a necessary and sufficient condition (see proof of Theorem \ref{t2})  assuming that in our setting $N = \widehat L$ and $Q_i = \widehat L^{i-1}$. 
We use the second companion coordinate system $u^1,\dots, u^n$  constructed from $\ddd f $, that is,
$\ddd u^k = (L^{k-1})^* \ddd f$. By definition, 
$
P = p_1 \widehat \Id + p_2 \widehat L + \dots + p_n \widehat L^{n-1}, 
$
where $p_1,\dots,p_n$ are the momenta dual to $u^1,\dots, u^n$.  

Then formula \eqref{eq:nsc} becomes 
\begin{equation}
\label{eq:condition}
\widehat L^* \ddd p_1 \otimes \widehat \Id + \widehat L^* \ddd p_2 \otimes \widehat L + \dots + \widehat L^* \ddd p_n \otimes \widehat L^{n-1} = 
\ddd p_1 \otimes \widehat L + \dots +  \ddd p_n \otimes \widehat L^n.
\end{equation}

Since $L$ takes the second companion form \eqref{companion} and $\widehat L = \begin{pmatrix} L^\top & 0 \\ 0 & L \end{pmatrix}$, it is easy to see that
$\widehat L^* \ddd p_1 = \sigma_n \ddd p_n$  and $\widehat L^* \ddd p_k =  \ddd p_{k-1} + \sigma_{n-k+1} \ddd p_n$, $k=2,\dots,n$.
Hence, the left hand side of \eqref{eq:condition} becomes
$$
\sigma_n \ddd p_n \otimes \widehat \Id + (\ddd p_1 + \sigma_{n-1} \ddd p_n)\otimes \widehat  L + \dots + (\ddd p_{n-1} + \sigma_1\ddd p_n) \otimes \widehat L^{n-1}  =
$$
$$
\ddd p_1 \otimes \widehat L + \dots + \ddd p_{n-1}\otimes \widehat L^{n-1} + \ddd p_n \otimes \Bigl(  \sigma_n \widehat \Id + \sigma_{n-1} \widehat L + \dots + \sigma_1 \widehat L^n    \Bigr)  =
$$
$$
\ddd p_1 \otimes \widehat L + \dots + \ddd p_{n-1}\otimes \widehat L^{n-1} + \ddd p_n \otimes \widehat L^n, 
$$
as required. This completes the proof of Proposition \ref{prop4.1}. \end{proof}

Now let $U \in \Sym L$.  Then $\widehat U$ and $P$ both belong to $ \mathcal S(\widehat L)$ and therefore any polynomial $f(\widehat U, P)$ (with constant coefficients) belongs to
$ \mathcal S(\widehat L)$ also, i.e., $Q=f(\widehat U, P)$ is suitable for construction from Theorem \ref{t2}.  Moreover, this is still true for any matrix analytic function $f(\widehat U, P)$ (provided the spectra of $\widehat U$ and $P$ agree, in the natural sense, with the domain of $f$). 
More generally,  every operator of the form    
\begin{equation}
\label{eq:Q}
    Q = f_1 (\widehat U, P) \, \widehat \Id  +  f_2 (\widehat U, P) \, \widehat L + \dots +  f_n (\widehat U, P)\, \widehat L^{n-1},
\end{equation}
where $f_i(\cdot,\cdot)$ are arbitrary polynomials  or appropriate analytic functions,  belongs to $\mathcal S(\widehat L)$.  Thus, $\mathcal S(\widehat L)$ contains a family of operators \eqref{eq:Q} parametrised by $n$ functions of two variables. Examples below show that each of these functions is essential.  

Thus, we conclude that integrable systems constructed in Theorem \ref{t2} are parametrised by $n(n+1)$ functions of two variables, which are responsible for the choice of  $Q_1,\dots, Q_n$ and  $Q$.    

If we are interested in Hamiltonian systems with first integrals polynomial in momenta,  this can be easily achieved by an appropriate choice of $Q_1, \dots , Q_n$ and $Q$.   The next statement is a straightforward corollary of Theorem \ref{t2}.

\begin{Corollary}\label{cor:4.1}  
Let $M_1, \dots , M_n \in \Sym (L)$ be pointwise linearly independent symmetries of $L$ and  $\widehat M_1, \dots, \widehat M_n \in \mathcal S(\widehat L)$ their complete lifts considered as a basis of $\mathcal S(\widehat L)$.    
Set  $Q = \widehat U_0 + \widehat U_1 P +  \dots 
+ \widehat U_k P^k \in \mathcal S(\widehat L)$, where $U_i\in \Sym L$, and consider the coefficients  $h_1(p,u), \dots, h_n(p,u)$  of the expansion of $Q$ with respect to the basis  $\widehat M_1, \dots, \widehat M_n$:
\begin{equation}
\label{eq:MPU}
h_1 \widehat M_1 + \dots + h_n \widehat M_n = \widehat U_0 + \widehat U_1 P +  \dots 
+ \widehat U_k P^k.
\end{equation}
Then $h_1,\dots, h_n$ Poisson commute and are polynomial in $p_1,\dots, p_n$ of degree $k$.    If $Q= \widehat U_k P^k$,  then these polynomials are homogeneous.  
\end{Corollary}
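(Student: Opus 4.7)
The Poisson-commutativity of $h_1,\dots,h_n$ is immediate from Theorem \ref{t2}: by construction, $\widehat U_i\in\mathcal S(\widehat L)$ (complete lifts of symmetries are strong symmetries by item 3 of Theorem \ref{t1}), $P\in\mathcal S(\widehat L)$ by Proposition \ref{prop4.1}, and $\mathcal S(\widehat L)$ is closed under both sums and products; hence $Q\in\mathcal S(\widehat L)$, and Theorem \ref{t2} applied to the basis $\widehat M_1,\dots,\widehat M_n$ yields $\{h_i,h_j\}=0$. So the entire content of the corollary lies in establishing the polynomial/homogeneous structure of the coefficients $h_i$ in $p$.

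To analyse this structure, my plan is to work in the second companion coordinates $u^1,\dots,u^n$ associated with the regular conservation law $\ddd f$ that was used to define $P$ in \eqref{eq:defP}. In such coordinates, $\ddd(L^*\ddd u^i)=0$ for every $i$, so by Remark \ref{rem2.2} the lift of \emph{any} $W\in\Sym L$ is block-diagonal and $p$-independent:
\begin{equation*}
\widehat W=\begin{pmatrix} W^\top & 0 \\ 0 & W\end{pmatrix},
\end{equation*}
with entries depending only on $u^1,\dots,u^n$. In particular, the entries of the matrices $\widehat M_j$, $\widehat U_j$, and $\widehat L^j$ are functions of $u$ alone. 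Since $P=\sum_{s=1}^{n} p_s\,\widehat L^{s-1}$, its entries are linear in $p$ with $u$-dependent coefficients, so $\widehat U_j P^j$ has entries that are homogeneous polynomials in $p$ of degree $j$, and $Q=\sum_{j=0}^k \widehat U_jP^j$ has entries polynomial in $p$ of degree at most $k$.

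Next I would examine the expansion $Q=\sum_i h_i\widehat M_i$. The lifts $\widehat M_1,\dots,\widehat M_n$ are pointwise linearly independent on $T^*\mathsf M$ because their lower-right blocks $M_1,\dots,M_n$ are, by hypothesis, pointwise linearly independent on $\mathsf M$; hence the coefficients $h_i$ are uniquely determined. Restricting the identity $Q=\sum h_i\widehat M_i$ to the lower-right $n\times n$ block gives the pointwise linear system
\begin{equation*}
Q_{\mathrm{lr}}(p,u)=\sum_{i=1}^n h_i(p,u)\,M_i(u),
\end{equation*}
whose coefficient matrices $M_i(u)$ depend on $u$ only. Solving by Cramer's rule expresses each $h_i$ as a $u$-dependent linear combination of entries of $Q_{\mathrm{lr}}$. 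Since those entries are polynomials in $p$ of degree at most $k$, so are the $h_i$. In the special case $Q=\widehat U_k P^k$, the lower-right block $Q_{\mathrm{lr}}=U_k(p_1\Id+p_2L+\dots+p_nL^{n-1})^k$ is homogeneous of degree $k$ in $p$, and therefore so is each $h_i$.

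There is no serious obstacle: the whole argument relies on the fact that in second companion coordinates the machinery of the cotangent lift becomes purely algebraic (Remark \ref{rem2.2}), which separates the $p$-dependence (carried entirely by $P$) from the $u$-dependence (carried by $\widehat M_i$, $\widehat U_j$ and $\widehat L^j$). The only point requiring slight care is to note that polynomial degree and homogeneity in the fibre variables are invariantly defined on $T^*\mathsf M$, so that the conclusion, verified in one coordinate system, holds globally.
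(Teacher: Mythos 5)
Your proposal is correct and follows the paper's intended route: the paper offers no separate argument for this corollary beyond declaring it a straightforward consequence of Theorem \ref{t2}, with the polynomiality left implicit in the structure of $P$ and the second companion coordinates. Your write-up simply makes explicit the two facts the paper relies on --- that $Q\in\mathcal S(\widehat L)$ because $\mathcal S(\widehat L)$ is an algebra containing the lifts $\widehat U_i$ and the operator $P$, and that in second companion coordinates the $p$-dependence of $Q$ sits entirely in $P$ while the $\widehat M_i$ are $p$-independent, so Cramer's rule transfers the degree-$k$ (and homogeneity) property to the coefficients $h_i$.
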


In the examples below, as well as in the next two sections,  we assume that a basis of $\mathcal S(\widehat L)$ is formed by complete lifts $\widehat M_1, \dots, \widehat M_n$, $M_i\in\Sym L$.   As we shall see, such a choice allows us to describe commuting Hamiltonians $h_1,\dots, h_n$ in terms of the original operator $L$ and its symmetry algebra $\Sym L$, i.e., without referring to their lifts.   

\begin{Ex}\label{ex4.2}{\rm
In the simplest case, when $L=\operatorname{diag} (u^1, \dots, u^n)$,  we necessarily have 
$$
\begin{aligned}
Q=\begin{pmatrix} R & 0 \\ 0 & R\end{pmatrix} \in \mathcal S(\widehat L), \quad  
&R = \operatorname{diag}\bigl(r^1(p_1, u^1), \dots, r^n(p_n, u^n)\bigr) \\
\widehat M_i = \begin{pmatrix} M_i & 0 \\ 0 & M_i\end{pmatrix}\in \mathcal S(\widehat L), \quad 
&M_i=\operatorname{diag} \bigl(s^1_i (u^1),\dots, s^n_i(u^n)\bigr),
\end{aligned}
$$
so that the relation $Q=\sum f_i \widehat M_i$ from Theorem \ref{t2} becomes simply
$R = \sum f_i M_i$
and immediately leads, as explained in Example \ref{ex:4.1}, to St\"ackel systems.   We also notice that for $\ddd f = \ddd \tr L =\ddd u^1 + \dots + \ddd u^n$, the operator $P$ takes a very simple form
$$
P = \begin{pmatrix} \widecheck P & 0 \\ 0 & \widecheck P
\end{pmatrix},  \quad \mbox{where $\widecheck P=\operatorname{diag}(p_1,\dots, p_n)$}.   
$$ 
In particular,  for $Q = P^2$ the matrix relation  $Q = \sum h_i(p,u) \widehat M_i$ amounts to St\"ackel orthogonal separability relations
$$
\begin{pmatrix}
S^1_1(u^1) & \dots & S^1_n (u^1) \\
\vdots & \ddots & \vdots \\
S^n_1(u^n) & \dots & S^n_n (u^n) \\
\end{pmatrix} \begin{pmatrix}  h_1(p,u) \\ \vdots \\ h_n(p,u)  \end{pmatrix} =
\begin{pmatrix}   p_1^2 \\ \vdots \\ p_n^2 \end{pmatrix}
$$ 
for geodesic flows.    
}\end{Ex}

The next example is a new integrable system, which corresponds to $L$ being a nilpotent Jordan block.

\begin{Ex}\label{ex4.3}
\rm{
Fix coordinates $u_1, u_2, u_3$ (we use lower indices for coordinates specifically in this example) and consider the Nijenhuis operator
$$
L = \left( \begin{array}{ccc}
     0 & 1 & 0 \\
     0 & 0 & 1 \\
     0 & 0 & 0 \\
\end{array}\right).
$$
The symmetries  $M_i$ for $L$ take the form  (see \cite[Theorem 1.4]{nij4})
$$
M_i = \left( \begin{array}{ccc}
     s_{1i} & u_2 s_{1i}' + s_{2i} & u_1 s_{1i}' + \frac{1}{2} u_2^2 s_{1i}'' + u_2 s_{2i}' + s_{3i} \\
     0 &  s_{1i} & u_2 s_{1i}' + s_{2i} \\
     0 & 0 &  s_{1i}\\
\end{array}\right), \quad i = 1, 2, 3.
$$
Here functions $s_{ij}$ are arbitrary functions of $u_3$.  Choosing the regular conservation law to be $\ddd u^1$ leads to the operator $P$ of the form
$$
P=\begin{pmatrix} \widecheck P^\top & 0 \\ 0 & \widecheck P \end{pmatrix}, \quad\mbox{where} \ \widecheck P= 
\begin{pmatrix}
   p_1 & p_2 & p_3  \\
      0 & p_1 & p_2  \\
      0 & 0 & p_1  \\
\end{pmatrix}.
$$
To find the explicit form for the operator $Q$ given by \eqref{eq:Q},  we choose $U=\begin{pmatrix} u_3 & u_2 & u_1 \\ 0 & u_3 & u_2 \\ 0 & 0 & u_3\end{pmatrix}$ and take three arbitrary functions $g_i(\cdot, \cdot)$ of two variables. Then 
$
Q = g_1(\widehat U, P) \widehat \Id +  g_2(\widehat U, P) \widehat L +  g_3(\widehat U, P) \widehat L^2  
$
takes the form  $
Q = \left(\begin{array}{cc}
     V^{\top} & 0  \\
     0 &  V
\end{array}\right),
$
where
$$
\begingroup\makeatletter\def\f@size{9}
V = \left( \begin{array}{ccc}
     g_1 & \pd{g_1}{u_3} u_2 + \pd{g_1}{p_1} p_2 + g_2 &  \frac{\partial^2 g_1}{\partial u_3 \partial p_1} u_2 p_2 + \frac{1}{2} \frac{\partial^2 g_1}{\partial p_1^2} p_2^2 + \frac{1}{2} \frac{\partial^2 g_1}{\partial u_3^2}  u_2^2  + \pd{g_1}{p_1} p_3 + \pd{g_1}{u_3} u_1 + \pd{g_2}{p_1} p_2 + \pd{g_2}{u_3} u_2 + g_3\\
     0 & g_1 & \pd{g_1}{u_3} u_2 + \pd{g_1}{p_1} p_2 + g_2 \\
     0 & 0 & g_1 \\
\end{array}\right)
\endgroup
$$
and $g_i = g_i(u_3,p_1)$.
The matrix relation $\sum h_i \widehat M_i = Q$ or, equivalently, $\sum h_i M_i = V$  then amounts to the following linear equation system
$$
S \left( \begin{array}{c}
     h_1\\
     h_2\\
     h_3 \\
\end{array}\right) = \left( \begin{array}{c}
     \frac{\partial^2 g_1}{\partial u_3 \partial p_1} u_2 p_2 + \frac{1}{2} \frac{\partial^2 g_1}{\partial p_1^2} p_2^2 + \frac{1}{2} \frac{\partial^2 g_1}{\partial u_3^2}  u_2^2+ \pd{g_1}{p_1} p_3 + \pd{g_1}{u_3} u_1 + \pd{g_2}{p_1} p_2 + \pd{g_2}{u_3} u_2 + g_3\\
     \pd{g_1}{u_3} u_2 + \pd{g_1}{p_1} p_2 + g_2\\ 
     g_1 \\
\end{array}\right),
$$
where $S$ (analog of the St\"ackel matrix $\bigl( S^i_j\bigr)$ from Example \ref{ex4.2})  is the $3\times 3$ matrix whose $i$-th column coincides with the last column of the matrix $M_i$ and, therefore, depends on three arbitrary functions $s_{1i}(u_3)$, $s_{2i}(u_3)$ and $s_{3i}(u_3)$.   The functions $h_1(u,p)$, $h_2(u,p)$ and $h_3(u,p)$ so obtained Poisson commute and are independent if $g_1 \neq 0$. 
}    
\end{Ex}

As an important particular case of Corollary \ref{cor:4.1}, we finally present a method for producing natural Hamiltonian systems on $T^*\mathsf M$ with $n$ Poisson commuting first integrals \eqref{eq:formmetric} quadratic in momenta.  First of all notice that relation \eqref{eq:MPU}  from Corollary \ref{cor:4.1} can be written in terms of the original operators $M_i$ and $U_j$ as follows
$$
h_1 M_1 + \dots + h_n  M_n =  U_0 +  U_1 \widecheck P +  \cdots 
+  U_k \widecheck P^k,
$$
where $\widecheck P$ can be understood as the ``descent'' of $P$ from $T^*\mathsf M$ to $\mathsf M$ in the following sense. If $P = \sum f_i(p, u) \widehat L^{i-1}$, then $\widecheck P = f_i(p,u) L^{i-1}$. In particular, if we consider the second companion coordinates $u^1,\dots, u^n$ such that $\ddd u^i = (L^{i-1})^* \ddd f$, then $\widecheck P  = p_1\Id + p_2 L + \dots + p_n L^{n-1}$.  Notice that $\widecheck P $ cannot be treated as an operator on $\mathsf M$ since the functions $f_i(p,u)$ depend not only on $u$, but also on $p$.  In other words, $\widecheck P $ should be formally understood as a linear combination of operators on $M$ with coefficients being linear functions in momenta. We also notice that in arbitrary coordinate system $u^1,\dots, u^n$, at each point $u\in \mathsf M$, $\widecheck P $ can be equivalently defined as
$$
\widecheck P  = p_1 B_1 + \dots + B_n p_n,   
$$
where $B_i \in \operatorname{Span}(\Id, L, \dots, L^{n-1})$ is the unique operator such that $B_i^* \ddd f = \ddd u^i$.  This rule provides an easy method to  express $\widecheck P $ in any coordinate system.   Examples \ref{ex4.2} and \ref{ex4.3} give explicit formulas for $\widecheck P$ in two important and somehow opposite cases when $L$ is either diagonal or nilpotent. Similar formulas can be easily derived for $L$ of arbitrary algebraic type. After these preparations, we are ready to describe a construction of $n$ Poisson commuting functions \eqref{eq:formmetric} quadratic in momenta.

Let $L$ be a $\gl$-regular Nijenhuis operator,   $\Sym L$  its symmetry algebra, and $\widecheck P$  be the operator constructed above from a regular conservation law $\ddd f$. Notice that in a neighbourhood of a non-singular point\footnote{Recall that this condition means that locally the eigenvalues of $L$ have constant multiplicities},  an explicit description of symmetries and conservation laws of $L$ 
is given in \cite{nij4}. In particular, it is shown that the symmetries and conservation laws of $L$ are both parametrised by $n$ arbitrary functions of one variable.    

\begin{Theorem}\label{t2'}
Choose a basis $M_1, \dots, M_n$  in $\Sym L$ and an arbitrary symmetry $U\in \Sym L$.   Then the functions $h_\alpha(p,u)$, $\alpha=1,\dots,n$, defined from the relation 
\begin{equation}
\label{eq:P2N_new}
h_1 M_1 + \dots + h_n  M_n = \widecheck P^2 + U,  
\end{equation}
have the form  
$$
h_\alpha (p,u) = \tfrac{1}{2}  g_\alpha(p, p) + V_\alpha (u),  \quad     g_\alpha(p,p)= (g_\alpha)^{qs}(u)p_qp_s,
$$ 
and Poisson commute on $T^*\mathsf M^n$ with respect to the canonical Poisson structure. 
\end{Theorem}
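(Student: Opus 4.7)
The plan is to derive the statement by applying Corollary \ref{cor:4.1} with the specific choice $Q = P^2 + \widehat U$, where $P = \sum_k p_k\, \widehat L^{k-1}$ is the canonical strong symmetry of $\widehat L$ produced by Proposition \ref{prop4.1} from the regular conservation law $\ddd f$. By Proposition \ref{prop4.1} we have $P \in \mathcal S(\widehat L)$; since $U\in \Sym L$, Theorem \ref{t1} and the discussion preceding Proposition \ref{prop4.1} give $\widehat U \in \mathcal S(\widehat L)$; and $\mathcal S(\widehat L)$ is closed under composition, so $Q \in \mathcal S(\widehat L)$. Corollary \ref{cor:4.1} (applied with $\widehat U_0 = \widehat U$, $\widehat U_1 = 0$, $\widehat U_2 = \widehat \Id$) then furnishes functions $h_\alpha(p,u)\in C^\infty(T^*\mathsf M)$, polynomial in the momenta with only degree-$0$ and degree-$2$ parts, such that
$$h_1\widehat M_1 + \dots + h_n\widehat M_n \;=\; P^2 + \widehat U,$$
and which pairwise Poisson commute on $T^*\mathsf M$.

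The key step is to descend this lifted identity to the identity \eqref{eq:P2N_new} on $\mathsf M$ with the same $h_\alpha$. For this I would pass to a second companion coordinate system $u^1,\dots,u^n$ adapted to $\ddd f$, which exists by property 3 of Section \ref{sect3}. In such coordinates Remark \ref{rem2.2} gives the block form $\widehat L = \bigl(\begin{smallmatrix} L^\top & 0 \\ 0 & L\end{smallmatrix}\bigr)$, and the same block-diagonal shape persists for each $\widehat L^{k-1}$, each $\widehat M_\alpha$, and for $\widehat U$. Consequently $P = \sum_k p_k\widehat L^{k-1}$ is itself block-diagonal with lower-right block $\widecheck P = \sum_k p_k L^{k-1}$, so reading off the lower-right block of the lifted identity produces \eqref{eq:P2N_new} on the nose. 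Since $M_1,\dots,M_n$ are pointwise linearly independent, \eqref{eq:P2N_new} determines the $h_\alpha$ uniquely, so these are the same functions as above and hence Poisson commute; being defined by a tensorial relation, they are coordinate-invariant and can be computed in any chart via \eqref{eq:P2N_new}.

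It remains to read off the structural form. Expanding $\widecheck P^2 = \sum_{k,l} p_k p_l\, L^{k+l-2}$, which is symmetric in $k$ and $l$, and writing each $L^j = \sum_\alpha c^j_\alpha(u)\, M_\alpha$ and $U = \sum_\alpha V_\alpha(u)\, M_\alpha$ in the basis $M_1,\dots,M_n$, one converts \eqref{eq:P2N_new} into
$$h_\alpha(p,u) \;=\; \sum_{k,l=1}^n c^{k+l-2}_\alpha(u)\, p_k p_l \,+\, V_\alpha(u),$$
which is of the claimed form $\tfrac12 (g_\alpha)^{qs}(u) p_q p_s + V_\alpha(u)$ after symmetrisation in $q,s$. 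The components $(g_\alpha)^{qs}$ define a genuine symmetric contravariant $2$-tensor because $h_\alpha$ and $p_q p_s$ are coordinate-invariant.

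The main obstacle is the descent step, where one must confirm that the off-diagonal $R$-block in the general lift formula \eqref{eq:Llift} is absent for every operator appearing in the lifted identity. This vanishing is equivalent to the closedness condition \eqref{cond} of Remark \ref{rem2.2}, which precisely characterises second companion coordinates; outside those charts, both sides of the lifted identity would acquire nontrivial, momentum-dependent $R$-blocks that must still match, making the descent notationally heavier. Once the identity is established in one chart, its tensorial nature guarantees the statement in arbitrary coordinates.
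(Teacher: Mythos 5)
Your proposal is correct and follows essentially the same route as the paper, which derives Theorem \ref{t2'} as the particular case $Q=\widehat U + \widehat\Id\, P^2$ of Corollary \ref{cor:4.1} and descends the lifted identity to \eqref{eq:P2N_new} via the block-diagonal form of $\widehat L$, $\widehat M_\alpha$, $\widehat U$ and $P$ in second companion coordinates. Your write-up merely makes explicit the descent and the extraction of the quadratic-plus-potential form that the paper leaves as a remark preceding the theorem.
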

      
This statement is a particular case of Corollary \ref{cor:4.1}.  Notice that $M_1,\dots, M_n$ and $U$ serve as parameters of this construction, and each of them depend on $n$ functions of one variables, so that similar to St\"ackel construction for integrable geodesic flows with potential (see formula \eqref{eq:St_intro} in Introduction), our construction is also parametrised by $n(n+1)$ functions of one variable.   Further properties of $h_\alpha$'s will be discussed in the next section in the context of the Hamilton--Jacobi integration method.

 \weg{     
Theorem \ref{t2} states that the functions $h_i$ from \eqref{eq:P2N} Poisson commute.  Example \ref{ex4.2}  shows that for $L=\operatorname{diag} (u^1,\dots, u^n)$ and $\ddd f = \ddd \tr L = \ddd u^1 +\dots + \ddd u^n$,  the corresponding Hamiltonian system will be orthogonally separable and of standard St\"ackel type (see Remark \ref{rem:5.1}).   The next theorem describes the properties of such systems corresponding to arbitrary $\gl$-regular operators $L$, not necessarily diagonalisable.  
}





\section{Hamilton--Jacobi for geodesic flows  with potentials}\label{sect5}

The concept of separation of variables is related to the Hamilton-Jacobi equation. In general, it provides one of the equivalent formulations of classical mechanics, which dates back to the XIX century \cite{nakane}. 

We will limit our discussion to a very specific class of systems. Namely, we consider {\it geodesic flows with potential}, i.e.,  Hamiltonian systems on $T^*\mathsf{M}^n$ with the canonical Poisson structure and Hamiltonians of the form
\begin{equation}\label{eq:hamH}
H(p, u) = \frac{1}{2} \sum_{q,s}g^{qs}(u) p_q p_s + V(u), \quad \operatorname{det} (g^{qs}) \neq 0,
\end{equation}
where $(p,u)=(p_1, \dots , p_n, u^1, \dots, u^n)$ are canonical coordinates on $T^*\mathsf M$.
Recall that $u^1, \dots, u^n$ are said to be {\it orthogonal separating} coordinates (or that $H$ admits  {\it orthogonal separation} in these coordinates), if 
\begin{enumerate}
    \item the matrix $g^{qs}(u)$ is diagonal; 
    \item the Hamilton--Jacobi equation
    \begin{equation}
    \label{eq:HJ}
    \frac{1}{2} g^{qs} \pd{W}{u^q} \pd{W}{u^s} + V(u)= c_1
    \end{equation}
    has a solution $W(u,c)=W(u^1,\dots, u^n, c_1,\dots, c_n)$ of the form
    $$
    W(u, c) = W_1(u^1, c) + \dots + W_n(u^n, c), \quad \det \Bigg( \frac{\partial^2 W}{\partial u^i \partial c_\alpha}\Bigg) \neq 0.
    $$
    that is, each function $W_i$ depends on $n + 1$ variables, $n$ parameters $c_1,\dots,c_n$ and $u^i$.
\end{enumerate}

For our purposes in this section, recall some basic properties of the square root of a linear operator $M: \R^n \to \R^n$. The square root of $M$ is an operator $R:\R^n \to \R^n$, such that $R^2 = M$. Depending on the Jordan normal form of $M$, there can be no such roots, finite set or even continuous families of such matrices $R$. 

We say that $R$ is a {\it good} root of $M$ if there exists a polynomial  $p(t)$ (Sylvester-Lagrange polynomial, see Chapter 8, \S 6 in \cite{gantmaher}) with constant coefficients such that $R = p(M)$. For example, if the eigenvalues of $M$ are either complex of real and positive, then such root does exist.  Moreover,  in a neighbourhood  $\mathcal U(M)$ of $M$ in $\gl (\R^n)$ there exists a unique real analytic matrix function $f : \mathcal U (M)  \to \gl(n,\R)$ such that $\widetilde R = f (\widetilde M)$, where $\widetilde R$ is a good root of $\widetilde M$ and $f(M) = R$. Throughout the rest of the paper,  $\sqrt{M}$ denotes a good root of $M$.

\weg{
Let $L$ be a $\gl$-regular Nijenhuis operator on $\mathsf M$ and $M_1, \dots, M_n$ a basis in $\Sym L$.  Consider a regular conservation law  $\ddd f$ of $L$ and define the operator $P$  on $T^*\mathsf M$ defined by \eqref{eq:defP}.   In the setting of Theorem \ref{t2},  take $Q=
 P^2 + \widehat U$,
where $U \in \Sym L$, and consider the functions $h_i(p,u)$ defined from the relation $\sum_i h_i \widehat M_i = P^2 + \widehat U$.  It is easy to see that this relation can be rewritten in terms of the {\it original} operators $M_i$, $U$ and $L$ as follows
\begin{equation}
\label{eq:P2N}
h_1 M_1 + \dots + h_n  M_n = \widecheck P^2 + U,  
\end{equation}
where $\widecheck P$ can be understood as the ``descent'' of $P$ from $T^*\mathsf M$ to $\mathsf M$ in the following sense. If $P = \sum f_i(p, u) \widehat L^{i-1}$, then $\widecheck P = f_i(p,u) L^{i-1}$. In particular, if we consider the second companion coordinates $u^1,\dots, u^n$ such that $\ddd u^i = (L^{i-1})^* \ddd f$, then $\widecheck P  = p_1\Id + p_2 L + \dots + p_n L^{n-1}$.  Notice that $\widecheck P $ cannot be treated as an operator on $\mathsf M$ since the functions $f_i(p,u)$ depend not only on $u$, but also on $p$.  In other words, $\widecheck P $ should be formally understood as a linear combination of operators on $M$ with coefficients being linear functions in momenta. 
We also notice that in arbitrary coordinate system $u^1,\dots, u^n$, at each point $u\in \mathsf M$, $\widecheck P $ can be equivalently defined as
$$
\widecheck P  = p_1 B_1 + \dots + B_n p_n,   
$$
where $B_i \in \operatorname{Span}(\Id, L, \dots, L^{n-1})$ is the unique operator such that $B_i^* \ddd f = \ddd u^i$.  This rule provides an easy method to  express $\widecheck P $ in any coordinate system. 
}

Our goal in this section is to study Poisson commuting functions $h_\alpha$ constructed in Theorem \ref{t2'}.       
Example \ref{ex4.2}  shows that for $L=\operatorname{diag} (u^1,\dots, u^n)$ and $\ddd f = \ddd \tr L = \ddd u^1 +\dots + \ddd u^n$,  the corresponding Hamiltonian system will be orthogonally separable and of standard St\"ackel type.   The next theorem describes properties of such systems in the case when $L$ is an arbitrary $\gl$-regular Nijenhuis operator, not necessarily diagonalisable.

\begin{Theorem}\label{t3}  
Let 
$h_\alpha(p, u) = \frac{1}{2} \sum_{s,q} (g_\alpha)^{sq}(u) p_s p_q + V_{\alpha}(u)$,  $\alpha=1,\dots, n$, 
be the functions obtained from relation \eqref{eq:P2N_new}. Then  
\begin{enumerate}
 \item  These functions are functionally independent on $T^*\mathsf M$ and generically $\det (g_1)^{sq}\ne 0$ so that  we obtain an integrable geodesic flow\footnote{In fact, if it happens that $\det (g_1)^{sq}= 0$, then we can always change the basis $M_1,\dots, M_n$ using a suitable transition matrix with constant entries in such a way that the new $g_1$ will be non-degenerate and will correspond to a certain (pseudo-)Riemannian metric.  Equivalently,  if $g^{ij} = \sum a_k g_k^{ij}$ then $\det g^{ij} \ne 0$ for generic coefficients $a_i\in \R$, so that a generic linear combination of the commuting Hamiltonians $h_i$ always correspond to a geodesic flow with potential.} with a potential.
    \item  $L$ is self-adjoint with respect to all contravariant symmetric forms $g_\alpha$, that is,
    $$
    L^j_p (g_\alpha)^{pk} = (g_\alpha)^{jp} L_p^k.
    $$
    \item Assume that the good root
    $R(c,u)=\sqrt{ c_1 M_1(u) + \dots + c_n M_n(u) - U(u)}$ exists, perhaps locally, as a smooth function of $u=(u^1,\dots,u^n)$ and real parameters $c=(c_1,\dots,c_n)$, and $\det R(c,u)\ne 0$.
    Then   
    the differential form 
    $
     R^*\, \ddd f
    $
    is closed for all values of parameters $c_1,\dots,c_n$, that is, there exists a function $W(u^1,\dots, u^n; c_1,\dots, c_n)$ such that 
    \begin{equation}
    \label{eq:dW}
    \ddd W = \left(\sqrt{ c_1 M_1 + \dots + c_n M_n - U} \right)^*\, \ddd f,
    \end{equation}
where the differential $\dd =\dd_u$ is taken only in coordinates $u^i$ and $c_i$ are treated as parameters. This function $W(x,c)$ is a solution of the Hamilton-Jacobi equation, moreover, $\det \Bigl( \frac{\partial^2 W}{\partial u^i \partial c_\alpha}\Bigr) \neq 0$.
\end{enumerate}
\end{Theorem}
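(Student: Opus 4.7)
All three items rest on the following identification: at each $u\in\mathsf M$, the map $p\mapsto\widecheck P(p)$ is a linear isomorphism from $T^*_u\mathsf M$ onto the centraliser $\mathcal Z(L)=\operatorname{Span}(\Id, L, \ldots, L^{n-1})$, uniquely characterised by $\widecheck P(p)^*\,\ddd f = p$; existence is built into the definition of $\widecheck P$ and injectivity follows from the regularity of $\ddd f$. Throughout, it suffices to work with the canonical basis $M_\alpha = L^{\alpha-1}$, since any other basis is related to it by an invertible matrix of functions on $\mathsf M$, under which Poisson commutativity, functional independence, and the self-adjointness claim are all preserved.

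For item (2), the key identity is $\widecheck P(L^*p) = L\,\widecheck P(p)$: both sides lie in $\mathcal Z(L)$ because $L$ commutes with $\widecheck P(p)$, and both pull $\ddd f$ back to $L^*p$, so they coincide by uniqueness. Polarising $\widecheck P(p)^2 + U = \sum_\alpha h_\alpha L^{\alpha-1}$ in the momentum variable, the form $g_\alpha$ is, up to a universal factor of two, the coefficient of $L^{\alpha-1}$ in the operator $\widecheck P(p)\widecheck P(q)$; both $g_\alpha(L^*p,q)$ and $g_\alpha(p,L^*q)$ then equal the coefficient of $L^{\alpha-1}$ in the single element $L\,\widecheck P(p)\,\widecheck P(q)$, which is precisely the tensor equality $L^j_p g_\alpha^{pk} = g_\alpha^{jp} L^k_p$.

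For item (3), the plan has four short steps. First, because $R$ is a good root it is a polynomial with constant coefficients in $c_1 M_1 + \cdots + c_n M_n - U$; since $\Sym L$ is a commutative algebra (strong symmetries form an algebra, and $\gl$-regularity identifies $\Sym L$ pointwise with $\mathcal Z(L)$) containing all $M_\alpha$ and $U$, this gives $R\in\Sym L\cap\mathcal Z(L)$. Second, by the fact that every conservation law of $L$ is also a conservation law of every symmetry of $L$, the form $R^*\,\ddd f$ is closed, hence locally $R^*\,\ddd f = \ddd W$. Third, by the uniqueness characterisation of $\widecheck P$, $\widecheck P(\ddd W) = R$, and substituting into the defining relation for the $h_\alpha$ yields
\[
\sum_\alpha h_\alpha(\ddd W, u)\,M_\alpha \;=\; \widecheck P(\ddd W)^2 + U \;=\; R^2 + U \;=\; c_1 M_1 + \cdots + c_n M_n,
\]
so $h_\alpha(\ddd W, u) = c_\alpha$ and in particular $W$ solves the Hamilton--Jacobi equation for $h_1$. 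Fourth, differentiating $R^2 = \sum_\beta c_\beta M_\beta - U$ inside the commutative algebra $\mathcal Z(L)$ gives $\partial R/\partial c_\alpha = \tfrac{1}{2} R^{-1} M_\alpha$, so $\partial^2 W/\partial u^i\partial c_\alpha = \tfrac{1}{2}\bigl((R^{-1}M_\alpha)^*\,\ddd f\bigr)_i$; since $R^{-1}M_1,\ldots,R^{-1}M_n$ still form a pointwise basis of $\mathcal Z(L)$ and the map $Q\mapsto Q^*\,\ddd f$ is an isomorphism on $\mathcal Z(L)$ by regularity of $\ddd f$, this matrix is non-singular.

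For item (1), a direct computation in second companion coordinates with $M_\alpha = L^{\alpha-1}$ handles both parts. Differentiating $\sum_\alpha h_\alpha L^{\alpha-1} = \widecheck P(p)^2 + U$ with respect to $p_k$ and evaluating at $p = (1, 0, \ldots, 0)$ turns the Jacobian $(\partial h_\alpha/\partial p_k)$ into $2\,\Id$, establishing functional independence on an open set of $T^*\mathsf M$ and hence generically. For non-degeneracy, the Hankel structure $g_\alpha^{ij} \propto c_{i+j-2,\alpha}$, where $L^m = \sum_\alpha c_{m,\alpha}L^{\alpha-1}$, shows directly that $g_n$ is always non-degenerate, so a constant permutation of $M_1,\ldots,M_n$ realises the footnote's basis change. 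The step I expect to be the main obstacle is item (2): converting the operator identity $L\,\widecheck P(p)\,\widecheck P(q) = \widecheck P(p)\,\widecheck P(q)\,L$ into the componentwise self-adjointness $L g_\alpha = g_\alpha L^\top$ requires careful bookkeeping between $L$ acting on operators and $L^*$ acting on covectors, and is the only place where the non-diagonalisable structure of $L$ genuinely intervenes.
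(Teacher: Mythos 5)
Your proposal is correct in substance and, for item 3, follows essentially the paper's route: identify the momenta on the level set $\{h_\alpha=c_\alpha\}$ with the coefficients of the good root $R$ expanded in powers of $L$ (your $\widecheck P(\ddd W)=R$ is a cleaner packaging of the paper's ``$p_i=s_i(u,c)$''), get closedness of $R^*\ddd f$ from the conservation-law property of symmetries, and get $\det\bigl(\partial^2W/\partial u^i\partial c_\alpha\bigr)\ne 0$ from $\partial R/\partial c_\alpha=\tfrac12 R^{-1}M_\alpha$ being a transition matrix between bases of $\mathcal Z(L)$. Where you genuinely diverge is in items 1 and 2. The paper derives both from the strong-symmetry relation established in the proof of Theorem \ref{t2}: it shows $\widehat L^*\ddd h_\beta=\sum_\alpha c^\alpha_\beta\,\ddd h_\alpha$, reads off that $\operatorname{Span}(\partial h_1/\partial p,\dots,\partial h_n/\partial p)$ is $L$-invariant (independence then follows from non-degeneracy of $g_1$ plus $\gl$-regularity), and gets self-adjointness because the right-hand side of the resulting identity $L^s_q(g_\alpha)^{qj}=\sum_\beta c_\alpha^\beta (g_\beta)^{sj}$ is visibly symmetric. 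You instead argue directly from the algebra: evaluating the momentum Jacobian at $p=(1,0,\dots,0)$ gives $2\,\Id$ (which, by polynomiality in $p$, yields generic independence — the same conclusion the paper reaches), and the polarisation identity $\widecheck P(L^*p)=L\,\widecheck P(p)$ gives self-adjointness without ever invoking the symmetry equation. Both of your replacements are valid and arguably more elementary; what they lose is the structural information (the $L$-invariance of the span of the $\ddd h_\alpha$) that the paper reuses elsewhere.

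One step needs repair. In step 1 of item 3 you assert that, being a good root, $R$ is a polynomial \emph{with constant coefficients} in $c_1M_1+\dots+c_nM_n-U$, and deduce $R\in\Sym L$ from the algebra structure of $\Sym L$. The Sylvester--Lagrange polynomial is constant-coefficient only pointwise: its coefficients depend on the matrix, hence on $u$ (and $c$). A polynomial in a symmetry with \emph{functional} coefficients is generally not a symmetry, so $R\in\Sym L$ does not follow from closure of $\Sym L$ under sums and products alone; only the pointwise conclusion $R(u)\in\mathcal Z(L)$ (which you need for $\widecheck P(\ddd W)=R$) comes for free. The paper closes this gap by invoking that a real-analytic matrix function of a symmetry is again a symmetry, and also offers an independent argument: the level sets $\{h_\alpha=c_\alpha\}$ of Poisson-commuting independent functions are Lagrangian, and writing such a level set as a graph $p_i=p_i(u,c)$ forces $\sum_i p_i(u,c)\,\ddd u^i=R^*\ddd f$ to be closed. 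Either patch makes your step 2 sound; as written, the inference is not justified.
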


\begin{proof}[Proof of Theorem \ref{t3}]
As before, we will use a special canonical coordinate system $(p, u)$ such that 
$\ddd u^k = (L^*)^{k-1} \ddd f$.     In this coordinate system, $L$ takes the second companion form, and 
the commuting Hamiltonians $h_i(p,u)$ are defined from the matrix identity
\begin{equation}
\label{eq:mainid}
\sum_\alpha h_\alpha  M_\alpha = \left(p_1\Id + p_2  L + \dots + p_n L^{n-1}\right)^2 +  U.
\end{equation}
This immediately implies that $h_\alpha = \frac{1}{2} g_\alpha(p,p) + V_\alpha(u)$, where $g_\alpha(p,p) = \sum (g_\alpha)^{sq} p_s p_q$ are some quadratic forms in momenta.

The statements 1 and 2 concern, in fact, linear-algebraic properties of the collection $g_1, \dots, g_n$  of quadratic forms obtained from the relation 
$$
\sum \tfrac{1}{2} g_\alpha(p,p) \, M_\alpha = \left(p_1\Id + p_2  L + \dots + p_n L^{n-1}\right)^2,
$$
and the dependence on the $u$-variables can be ignored.  We also notice that under changing the basis $M_1,\dots, M_n$,  
the collection of $h_\alpha$ will undergo a certain invertible linear transformation $h_\alpha \mapsto h_\alpha^{\mathrm{new}} = \sum c_\alpha^\beta h_\beta$, which does not affect the properties we need to verify.  Therefore, without loss of generality we may assume that  $M_1=L^{n-1} , M_2 =L^{n-2}, \dots, M_n =\Id$.  Then the above relation can be rewritten in the from 
$$
\tfrac{1}{2} g_1(p,p) L^{n - 1} + \dots + \tfrac{1}{2} g_n(p,p) \Id =
\Big(p_n L^{n - 1} + \dots + p_1 \Id\Big)^2,
$$
which immediately implies that the matrix $\Bigl( g_1^{sq} \Bigr)$ of $g_1$ is
$$
\Bigl( g^{sq}_1\Bigr) = \left( \begin{array}{ccccc}
     0 & \dots & 0 & 0 & 1  \\
     0 & \dots & 0 & 1 & * \\
     0 & \dots & 1 & * & *  \\
     & \iddots & & & \\
     1 & \dots & * & * & * \\
\end{array}\right),
$$
and, thus, $\det g_1^{sq}\ne 0$, as stated. The same will be true for a generic linear combination of the forms $g_\alpha$.

To prove that $h_\alpha(p,u)$ are functionally independent, it is sufficient to show the linear independence of the differentials 
of $h_i$ with respect to the $p$-variables only.  In other words, it is sufficient to verify that for a generic $p=(p_1,\dots,p_n)$ the vectors 
$
\xi_1 = \frac{\partial h_1}{\partial p} , \ \ 
\xi_2 = \frac{\partial h_2}{\partial p} , \ \dots \ , \   
\xi_n = \frac{\partial h_n}{\partial p} 
$
are linearly independent. 

In the proof of Theorem \ref{t2} we derived a necessary and sufficient condition \eqref{eq:QN} for a linear combination $\sum f_i Q_i$ to be a strong symmetry of the Nijenhuis operator $N$.  In our situation,  $Q_i = \widehat M_i$, $f_i=h_i$ and $N = \widehat L$,  and this formula can be written as  $\widehat L^* \ddd h_\beta  = \sum_\alpha c^\alpha_\beta \ddd h_\alpha$, where $c^\alpha_\beta$ are defined from the relations $L M_\beta  = \sum_j c^\alpha_\beta M_\alpha$.    
Taking into account the block-diagonal form of $\widehat L=\begin{pmatrix} L^\top & 0 \\ 0 & L  \end{pmatrix}$,  we get
\begin{equation}
\label{eq:b11}
L ^s_q \frac{\partial h_\beta}{\partial p_q} = c_\beta^1 \frac{\partial h_1}{\partial p_s} + c_\beta^2 \frac{\partial h_2}{\partial p_s} + \dots + c_\beta^n \frac{\partial h_n}{\partial p_s},  \quad \mbox{where $\frac{\partial h_\beta}{\partial p_s} = (\xi_\beta)^s = (g_\beta)^{sj} p_j$}
\end{equation}
or, shortly, 
$$
L\xi_\beta = c_\beta^1 \xi_1 + \dots + c_\beta^n \xi_n.
$$
This shows that  $\operatorname{Span}(\xi_1, \dots, \xi_n)$ is $L$-invariant.   It remains to use the fact that $g_1$ is non-degenerate so that $\xi_1$ can be an arbitrary tangent vector, if $p$ is appropriately chosen.  Thus,  if $\xi_1$ is $L$-cyclic, then the subspace $\operatorname{Span}(\xi_1, \dots, \xi_n)\subset T_u\mathsf M$ contains $\xi_1, L\xi_1, L^2\xi_1, \dots , L^{n-1}\xi_1$ and therefore coincides with the whole of the tangent space. Hence $\dim\operatorname{Span}(\xi_1, \dots, \xi_n) = n$ and $\xi_1, \dots, \xi_n$ are linearly independent, as required.   

Next, let us show that $L$ is self-adjoint w.r.t. each $g_\alpha$.  Indeed, \eqref{eq:b11}  holds identically for all $p=(p_1,\dots,p_n)$ and collecting the terms with $p_j$ we get  
\begin{equation}
\label{4.3:lava}
L^s_q (g_\alpha)^{qj} = c_\alpha^1 (g_1)^{s j} + \dots + c_\alpha^n (g_n)^{s j}.    
\end{equation}

For each $\alpha$, the r.h.s is symmetric in $s$ and $j$, which is equivalent to the third statement of  Theorem \ref{t3}.

Now let us move to the third statement. Fix the values of the integrals $h_\alpha$ by setting $h_\alpha = c_\alpha$. Then  \eqref{eq:mainid} becomes
\begin{equation}
\label{eq:to_resolve}
c_1 M_1 + \dots + c_n M_n - U = \left( p_1  \Id + p_2  L + \dots + p_n  L^{n-1} \right)^2.
\end{equation}

We need to locally resolve this relation to express $p_i$ as functions of $c=(c_1,\dots, c_n)$ and $u=(u^1,\dots, u^n)$. 
It easily follows from the implicit function theorem that locally (i.e., in a neighbourhood of every triple $(p,u,c)$ satisfying this relation) such a resolution  $p_i=p_i(u,c)$ always exist and is unique provided $\det (p_1  \Id + p_2  L + \dots + p_n  L^{n-1} )\ne 0$.

If $R=\sqrt{c_1 M_1 + \dots + c_n M_n - U}$ is a {\it good} root, then  by definition, $R$ is a polynomial of $c_1 M_1 + \dots + c_n M_n - U$ with some coefficients depending on $u$ and $c$.   But each $M_i$ and $U$ are polynomials in $L$ (with functional coefficients), hence $R$ itself  is a polynomial in $L$ and, therefore, we have
$$
R = s_1(u, c) \Id + s_2(u, c) L + \dots + s_n(u,c) L^{n-1}. 
$$
for some smooth functions $s_i=s_i(u,c)$.  Since resolving \eqref{eq:to_resolve} is locally unique, we conclude that $p_i=s_i(u,c)$ is a suitable resolution of  \eqref{eq:to_resolve}.

Now consider the $1$-form  
$$
\left(\sqrt{\sum c_\alpha M_\alpha  - U} \right) \ddd f = R^* \ddd f = 
s_1 \ddd f + s_2 L^*\ddd f + \dots + s_n( L^*)^{n-1} \ddd f = \sum_i s_i(u,c) \ddd u^i.
$$
The fact that this form is closed can be explained in two different ways. First notice that $\sum c_\alpha M_\alpha  - U$ is a symmetry of $L$, therefore $R$ is a symmetry also (since a real analytic function of a symmetry is a symmetry).  Then every conservation law of $L$ is, at the same time, a conservation law of $R$ and, therefore, $R^*\ddd f$ is closed.

Alternatively, we may use the fact that each level surface $\mathcal X_c = \{ h_\alpha(p,u)=c_\alpha, \ \alpha=1,\dots, n\}$ is Lagrangian.  If we resolve these equations w.r.t. $p$ and represent  $\mathcal X_c$ as a graph, that is, $\mathcal X_c =\{ p_i = p_i (u, c), \ i=1,\dots,n\}$, then the {\it Lagrangian property} is equivalent to the closedness of the $1$-form $\sum_i p_i(u,c) \ddd u^i$.  In our case by construction, $s_i(u,c)=p_i(u,c)$ and hence  $R^*\ddd f$ is closed.

Finally,  if $W(u,c)$ is such that $\ddd W(u,c) = S^*\ddd f = \sum_i s_i(u,c) \ddd u^i$, then  $\frac{\partial W}{\partial u^i} = s_i(u,c) = p_i$ and
$$
\frac{1}{2} (g_1)^{ij} \frac{\partial W}{\partial u^i} \frac{\partial W}{\partial u^j} + V_1(u)
= \frac{1}{2} (g_1)^{ij} p_i p_j + V_1 (u) = h_1(p,u) = c_1.
$$ 
Thus, $W(u,c)$ satisfies the Hamilton--Jacobi equation.  Also we notice that  $\frac{\partial^2 W}{\partial u^i \partial c_\alpha} = \frac{\partial s_i}{\partial c_\alpha}$ and by differentiating 
$\sqrt {\sum c_\alpha M_\alpha - U} = \sum_i s_i(u,c) L^{i-1}$ with respect to all $c_\alpha$ we come to the relations
$$
\frac{1}{2} M_\alpha R^{-1} = \sum_i \frac{\partial s_i}{\partial c_\alpha}  L^{i-1},
$$
which mean that $\Bigl( \frac{\partial s_i}{\partial c_\alpha}\Bigr)$ is a transition matrix between the bases 
$\frac{1}{2} M_1 R^{-1},\dots, \frac{1}{2} M_n R^{-1}$ and $\Id, L, \dots, L^{n-1}$. Hence, $\det \Bigl(\frac{\partial^2 W}{\partial u^i \partial c_\alpha} \Bigr) \ne 0$,  completing the proof of Theorem \ref{t3}.
\end{proof}


\begin{Ex}\label{ex5.1}
\rm{
In dimension two,  any Hamiltonian $H$ of form \eqref{eq:hamH}, which admits a quadratic first integral $F$,  can be brought, near a generic point, to one of the three normal forms (\cite{BMP2012}):
\begin{enumerate}
    \item Liouville case:
 $$
 \begin{aligned}
  H(u,p) & = \frac{p_1^2 - \varepsilon p_2^2}{a(u^1) - b(u^2)} + \frac{v_1(u^1) - \varepsilon v_2(u^2)}{a(u^1) - b(u^2)}, \\
  F(u,p) & = \frac{\varepsilon a(u^1) p_2^2 - b(u^2) p_1^2}{a(u^1) - b(u^2)} + \frac{\varepsilon a(u^1) v_2(u^2) - b(u^2) v_1(u^1)}{a(u^1) - b(u^2)},   
 \end{aligned}
 $$
 where $a, b, v_1, v_2$ are arbitrary smooth functions and $\varepsilon= \pm 1$.
    \item Complex Liouville case:
    $$
    \begin{aligned}
    H(u,p) & = \frac{2 p_1 p_2}{a(u^1, u^2)} + \frac{v_2(u^1, u^2)}{a(u^1, u^2)}, \\
    F(u,p) & = p_1^2 - p_2^2 - 2 \frac{b(u^1, u^2) p_1 p_2}{a(u^1, u^2)} - b(u^1, u^2)\frac{v_2(u^1, u^2)}{a(u^1, u^2)} - v_1(u^1, u^2),    
    \end{aligned}
    $$
    where $a(u^1, u^2) + \mathrm i \, b(u^1, u^2)$ and $v_1(u^1, u^2) +\mathrm i \, v_2(u^1, u^2)$ are holomorphic functions of the complex variable $z = u^1 + \mathrm i\, u^2$.
    \item Jordan block case:
    $$
    \begin{aligned}
     H(u,p) & = \frac{2p_1 p_2}{1 + u^1 b'(u^2)} + \frac{u^1 v_1'(u^2) + v_2(u^2)}{1 + u^1 b'(u^2)}, \\
     F(u,p) & = p_1^2 - \frac{2 b(u^2) p_1 p_2}{1 + u^1 b'(u^2)}  - b(u^2) \frac{u^1 v_1'(u^2) + v_2(u^2)}{1 + u^1 b'(u^2)} + v_1(u^2),   
    \end{aligned}
    $$
    where $b, v_1, v_2$ are smooth functions of one indicated varaible, and $b'$ and $v'_1$ denote the derivatives.
\end{enumerate}
These normal forms correspond to the following choice of parameters in our construction
\begin{enumerate}
    \item Taking $L, M_1, M_2$ and $U$ in the form
    $$
    L = \begin{pmatrix}  u^1 & 0 \\ 0 & u^2  \end{pmatrix},   \quad 
     M_1 = \begin{pmatrix}
    a(u^1) & 0  \\
    0 & \varepsilon b(u^2) 
    \end{pmatrix}, 
    \quad       
    M_2 =\begin{pmatrix}
    1 & 0  \\
    0 & \varepsilon 
    \end{pmatrix}, 
    \quad      U =\begin{pmatrix}
    v_1(u^1) & 0 \\
    0 & v_2(u^2) 
    \end{pmatrix},
    $$
    and the regular conservation law  $\ddd f = \ddd u^1 + \varepsilon \ddd u^2$, we obtain  the Liouville case.
    \item Taking $L, M_1, M_2, U$ in the form
    $$
    L = \begin{pmatrix}
     0 & 1  \\
     - 1 & 0 
     \end{pmatrix}, \quad 
     M_1 =\begin{pmatrix}
    \! -b(u^1, u^2) & -a(u^1, u^2)  \\
     \,  a(u^1, u^2) & -b(u^1, u^2) 
     \end{pmatrix}, 
     $$
     $$
     M_2 = \begin{pmatrix}
     - 1 & 0  \\
     0 & - 1 
     \end{pmatrix},   \quad 
   U = \begin{pmatrix}
    v_1(u^1,u^2),  & -v_2(u^1, u^2)  \\
     v_2(u^1, u^2)  & v_1(u^1, u^2) 
     \end{pmatrix} 
    $$
(such that  $a+\mathrm i\,  b $   and $v_1 + \mathrm i \, v_2$ are holomorphic with respect to $z=u^1 + \mathrm i \,  u^2$)  and  the   regular conservation law   $\ddd f = \ddd u^2$, we obtain   the complex Liouville case.
    \item  Taking $L, M_1, M_2, U$ in the form 
    $$
    L = \begin{pmatrix}
     0 & 1  \\
     0 & 0 
     \end{pmatrix}, \quad  
     M_1 = \begin{pmatrix}
     b(u^2) & b'(u^2) u^1 + 1  \\
     0 & b(u^2) 
     \end{pmatrix}, \quad 
     M_2 =  \begin{pmatrix}
     1 & 0  \\
     0 & 1 
     \end{pmatrix}, \quad  
     U =\begin{pmatrix}
     v_1(u^2) & v_1'(u^2) u^1 + v_2(u^2)  \\
     0 & v_1(u^2) 
     \end{pmatrix}.
    $$
    and the regular conservation law $\ddd f = \ddd u^1$, we obtain  the Jordan block case.
\end{enumerate}
Thus, all the normal forms are obtained by our construction. 
}    
\end{Ex}

\begin{Ex}\label{ex5.2}{\rm
By a clever choice of an operator $L$, symmetries $M_1,\dots, M_n$ and conservation law $\ddd f$, one can make one of the Poisson commuting functions $h_i$ to be the kinetic energy of the flat  space of signature (3,1). 
We take the $\gl$-regular  Nijenhuis operator $L$ given by (this operator comes from  \cite[Theorem 3]{nijapp2}):
\begin{equation} 
\label{eq:V1} 
L=\begin{pmatrix}
0 & 0 & x_{1} & 0 
\\
 1 & 0 & x_{2} & 0 
\\
 x_{2} & x_{1} & \!\! 1{+}2 x_{3} \!\! & x_{4} 
\\
 0 & 0 & x_{4} & 1 
\end{pmatrix}. \end{equation}
We choose the conservation law $\ddd f$  with $f = \tfrac{1}{2} \trace L= x_3+1$ and set
$$
M_1=L^3(L^3 -L^2)^{-1}, \quad M_2=L^2(L^3 -L^2)^{-1}, \quad M_3=L(L^3 -L^2)^{-1}, \quad M_4=(L^3 -L^2)^{-1}.
$$ 
Then the commuting functions $h_i$ obtained from the relation  
$\sum _i h_i  M_i =  \widecheck P ^2$  
are given by 
\begin{eqnarray*}
h_{4} & =&  
-p_{1}^{2} x_{1}^{2}+2 p_{1} p_{2} x_{1} x_{2}-p_{2}^{2} x_{2}^{2}-p_{2}^{2} x_{4}^{2}+2 p_{2} p_{4} x_{1} x_{4}-p_{4}^{2} x_{1}^{2}+2 p_{2}^{2} x_{3}-2 p_{2} p_{3} x_{1}+p_{2}^{2}
, \\ h_{3}  &= & 
p_{1}^{2} x_{1}^{2}-2 p_{1} p_{2} x_{1} x_{2}-2 p_{1} p_{2} x_{4}^{2}+2 p_{1} p_{4} x_{1} x_{4}+p_{2}^{2} x_{2}^{2}+2 p_{2} p_{4} x_{2} x_{4}-2 p_{4}^{2} x_{1} x_{2}\\ &+& 4 p_{1} p_{2} x_{3}-2 p_{1} p_{3} x_{1}-2 p_{2}^{2} x_{3}+2 p_{2} p_{3} x_{1}-2 p_{2} p_{3} x_{2}+2 p_{2} p_{1}-2 p_{2}^{2}
, \\ h_{2} &=& 
-4 p_{1} p_{2} x_{3}+2 p_{1} p_{3} x_{1}+2 p_{2} p_{3} x_{2}+2 p_{3} p_{4} x_{4}-2 p_{4}^{2} x_{3}-4 p_{2} p_{1}+p_{2}^{2}-p_{3}^{2}-p_{4}^{2}
, \\ h_{1} &= &2 p_{2} p_{1}+p_{3}^{2}+p_{4}^{2}. 
\end{eqnarray*}
We see that the function $h_1$ is (twice)  the Hamiltonian of the standard flat metric 
$$
g = \left(\begin{array}{cccc}
     0 & 1 & 0 & 0  \\
     1 & 0 & 0 & 0 \\
     0 & 0 & 1 & 0 \\
     0 & 0 & 0 & 1
\end{array}\right)
$$
of signature $(1, 3)$. 
Of course, Theorem \ref{t2'} also allows us to construct   ``potential energies''
$U_1,\dots, U_4$ such that  the functions $f_i = h_i+ U_i$ Poisson commute. It is sufficient to resolve the relations 
$\sum f_i  M_i = \widecheck P ^2 + U$ for a certain symmetry $U$ of $L$.  For example, the choice $U=L^4(L^3 - L^2)^{-1}$ gives us the functions 
$$
U_{4} = -x_{1}^{2}, \quad U_{3} = x_{1}^{2}-2 x_{1} x_{2}, \quad U_{2} = 
2 x_{1} x_{2}+x_{4}^{2}-2 x_{3}-1, \quad U_{1} = 2 x_{3}+2. 
$$

}
\end{Ex}



\section{Killing tensors, integrable quasilinear systems and reciprocal transformations}\label{sect6}

In the case of the basis $L^{n-1}, \dots , L, \Id$ in $\Sym L$,  and $U = 0$, Theorem \ref{t2'} (see also Theorem \ref{t3})  yields quadratic in momenta Hamiltonians $h_\alpha$ from the relation
\begin{equation}
\label{eq:37}
h_1  L^{n-1} + \dots + h_{n-1} L + h_n\Id = \widecheck P^2.
\end{equation}
These Hamiltonians play a crucial role in the theory of geodesically equivalent metrics \cite{nijappl5, nijapp2} and can be written in the form
\begin{equation}
\label{eq:38}
h_1(u,p) = \tfrac{1}{2} \, g(p,p), \quad h_\alpha(u,p)  = \tfrac{1}{2} \, g(A_\alpha^*p,p), \quad \alpha=2,\dots,n,
\end{equation}
where the operators $A_\alpha$ are understood as Killing $(1,1)$-tensors of the metric $g$, $A_1=\Id$.  For the above choice of the basis these operators can be found from the relation
$$
\det (\Id - \lambda L) (\Id - \lambda L)^{-1} = A_1 + \lambda A_2 + \dots + \lambda^{n - 1} A_n. 
$$
Here $A_1 = \Id$ and $A_i A_j = A_j A_i$.   Equivalently, they can be obtained from the following recursion formula. Consider $\chi_L(t) = \det(t \, \Id - L ) = t^{n} - \sigma_1 t^{n - 1} - \dots - \sigma_n$. The recursion relation is
\begin{equation}\label{rec}
A_1 = \Id, \quad A_i = L A_{i - 1} - \sigma_{i - 1}\Id, \quad i = 2, \dots, n.  
\end{equation}
Notice that the definition of $A_i$ does not depend on the choice of the conservation law $\ddd f$, but this choice will affect the metric $g$.

The operators $A_i$ possess remarkable properties in the context of evolutionary equations of hydrodynamic type, the main of which is the integrability of the quasilinear PDE system
\begin{equation}
\label{eq:canonA}
u_{t_i} = A_i u_x,    \quad u=\begin{pmatrix} u^1(t_1,\dots,t_n)\\ \vdots \\ u^n(t_1,\dots,t_n) \end{pmatrix}, \ \ x=t_1.
\end{equation}
If $L$ is diagonal, then this system belongs to the class of integrable weakly nonlinear systems studied by Ferapontov \cite{Ferapontov1991,Ferapontov1991b}.  For an arbitrary Nijenhuis operator $L$,  this system can be obtained as a particular case of the cohomological construction of integrable hierarchies of hydrodynamic type suggested by Lorenzoni and Magri \cite{ml}.  

Let us summarise other properties of $A_i$ which, in particular, justify the integrability of \eqref{eq:canonA},  for details see \cite{nij4}, Theorem 1.6.

\begin{enumerate}

\item[P1:] Operator fields $A_i$ are symmetries of one another, and their Haantjes torsions vanish.

\item[P2:]  For  any $M \in \Sym L$, consider its expansion w.r.t. the standard basis $L^{n-1}, \dots, L,\Id$: 
    $$
    M = g_1 L^{n - 1} + \dots + g_n \Id.    
    $$
    Then $A^*_i \ddd g_1 = \ddd g_i$. In particular, $\ddd g_1$ is a common conservation law of $A_i$'s.  Moreover, locally  every common conservation law of $A_i$'s can be obtained in this way for an appropriately chosen symmetry $M$.

\item[P3:]  For any conservation law $\ddd f_1$ of $L$, consider  $\ddd f_k = (L^*)^k \ddd f_1$, $k = 0, 1, \dots, n - 1$. Then the operator field 
    $$
    A = f_1 A_1 + \dots + f_n A_n    
    $$
is a common symmetry of $A_i$.  Moreover, locally every common symmetry of $A_i$'s can be obtained in this way for an appropriately chosen conservation law $\ddd f_1$.
\end{enumerate}

The above construction can be applied to an arbitrary basis $M_1,\dots, M_n$ of $\Sym L$.  Then by Theorem \ref{t2'},  the quadratic in momenta  functions $\bar h_\alpha(u,p): T^*\mathsf M \to \R$  obtained from the relation
\begin{equation}
\label{eq:hiMi}
\bar h_1  M_1 + \dots + \bar h_n  M_n = \widecheck P ^2,
\end{equation}
Poisson commute.  Similar to the case of the basis $L^{n-1},\dots,L,\Id$ (see \eqref{eq:37} and \eqref{eq:38}), these functions can be written as
\begin{equation}
\label{eq:hiKi}
\bar h_1(u,p) =  \tfrac{1}{2} \bar g(p,p), \quad h_\alpha(u,p) =  \tfrac{1}{2} \bar g(K_\alpha^*p,p), \quad  \alpha=2,\dots,n,
\end{equation}
where $K_\alpha$ are Killing $(1,1)$-tensors of the metric $\bar g$, $K_1 = \Id$.  Thus, given a basis $M_1,\dots, M_n$ of  $\Sym L$,  we produce a certain metric $\bar g$ together with $n$ independent Killing $(1,1)$-tensors $K_1,\dots, K_n$.  The next theorem describes these Killing tensors and shows that they possess the same properties as the operators $A_1,\dots, A_n$.    

\begin{Theorem}\label{t4}
Let $M_1, \dots, M_n$ be a basis of $\Sym L$  and 
$$
M_i = c^i_1 L^{n - 1} + \dots + c^i_n \Id.
$$ 
The functions $c^i_j$ form an invertible matrix (transition matrix from the standard basis to $M_1,\dots, M_n$) and we denote $ \mathcal C^{-1} = \widetilde {\mathcal C} = \Bigl( \widetilde c\,^i_j\Bigr)$.  Consider the operator fields {
$ R_j = \widetilde c\,^1_j A_1 + \dots + \widetilde c\,^n_j A_n$}, where $A_i$ are defined by \eqref{rec}. Then
\begin{enumerate}
    \item The Killing $(1,1)$-tensors $K_i$ related to the basis $M_1,\dots, M_n$ via \eqref{eq:hiMi}, \eqref{eq:hiKi} are given by the formula
    $$
    K_i = R_i R_1^{-1}, \quad i = 1, \dots, n.
    $$
    In particular, these Killing tensors depend only on the choice of a basis in $\Sym L$, but not of the conservation law $\ddd f$ used in the definition of $ \widecheck P$.
    \item $K_i$ are symmetries of one another, and their Haantjes torsions vanish;
    
    \item For  any $M \in \Sym L$, consider its expansion w.r.t. the basis $M_1,\dots, M_n$: 
   $$
    M = g_1 M_1 + \dots + g_n  M_n.    
   $$
    Then $K^*_i \ddd g_1 = \ddd g_i$. In particular, $\ddd g_1$ is a common conservation law of $K_i$'s.  Moreover, locally   every common conservation law of $K_i$'s can be obtained in this way for an appropriately chosen symmetry $M$.

\item For any conservation law $\ddd r_1$ of $L$, consider  $\ddd r_k = M_k^* \ddd r_1$, $k = 0, 1, \dots, n - 1$. Then the operator field 
    $$
    K = r_1 K_1 + \dots + r_n K_n    
    $$
    is a common symmetry of $K_i$'s.  Moreover, locally every common symmetry of $K_i$'s can be obtained in this way for an appropriately chosen conservation law $\ddd r_1$.
 \end{enumerate}
\end{Theorem}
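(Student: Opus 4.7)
The plan is to derive statement (1) by directly comparing the two expansions of $\widecheck P^2$, and then obtain statements (2)--(4) by transporting the analogous properties P1--P3 of the canonical operators $A_j$ (taken as input from \cite{nij4}) through the algebraic identity $K_i = R_i R_1^{-1}$ and the invertibility of the transition matrix $\mathcal C$.

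For statement (1), I would begin from
$$
\sum_j h_j L^{n-j} \;=\; \widecheck P^2 \;=\; \sum_i \bar h_i M_i \;=\; \sum_j \Bigl(\sum_i c^i_j \bar h_i\Bigr) L^{n-j},
$$
where the last equality uses $M_i = \sum_j c^i_j L^{n-j}$. Pointwise linear independence of $\Id, L, \dots, L^{n-1}$ (from $\gl$-regularity) gives $h_j = \sum_i c^i_j \bar h_i$, equivalently $\bar h_i = \sum_j \widetilde c^j_i h_j$. Translating this scalar identity into the associated $(2,0)$-tensors via $h_j = \tfrac12 g(A_j^* p, p)$ and $\bar h_i = \tfrac12 \bar g(K_i^* p, p)$ produces the pointwise operator identity $K_i \bar g = R_i g$ with $R_i = \sum_k \widetilde c^k_i A_k$. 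Setting $i=1$ (using $K_1=\Id$) yields $\bar g = R_1 g$, and substituting back gives $K_i = R_i R_1^{-1}$. Independence from the choice of $\ddd f$ is then transparent: the formula involves only the intrinsic $A_k$ and the transition matrix.

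A common preliminary for (2)--(4) is that $R_1,\dots,R_n\in\Sym L$ by constant-coefficient linearity, and that $R_1^{-1}\in\Sym L$: applying \eqref{ii:eq3} to $\langle L, R_1 R_1^{-1}\rangle = \langle L,\Id\rangle = 0$ together with $\langle L, R_1\rangle = 0$ and invertibility of $R_1$ forces $\langle L, R_1^{-1}\rangle = 0$. Hence each $K_i=R_iR_1^{-1}\in\Sym L$ and is a polynomial in $L$ pointwise by $\gl$-regularity, so the $K_i$ commute pairwise. For (2), the pairwise-symmetry and Haantjes-vanishing properties follow by lifting P1 for the $A_k$ through the two operations ``constant-coefficient linear combination'' (handled by bilinearity of the Frölicher-Nijenhuis bracket) and ``multiplication by $R_1^{-1}$'' (handled by iterated use of \eqref{ii:eq3}). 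For (3), the relation $\bar g_i = \sum_j \widetilde c^j_i g_j$ combined with P2 for $A_j$ yields $\ddd\bar g_i = R_i^*\,\ddd g_1$; specialising to $i=1$ gives $\ddd\bar g_1 = R_1^*\,\ddd g_1$, whence $K_i^*\,\ddd\bar g_1 = (R_1^{-1})^* R_i^* R_1^*\,\ddd g_1 = R_i^*\,\ddd g_1 = \ddd\bar g_i$ by commutativity of $R_i, R_1$. The converse of (3) follows by reversing the construction. For (4), the forms $\ddd r_k = M_k^*\,\ddd r_1$ are closed because conservation laws of $L$ are conservation laws of every element of $\Sym L$; relating $K = \sum r_k K_k$, via the transition matrix and P3, to a common symmetry of the $A_i$'s multiplied by $R_1^{-1}$ yields the required common-symmetry property, with both converses secured by parameter counting, since each description is parameterised by a single conservation law of $L$.

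The main obstacle is statement (2): when the non-constant factor $R_1^{-1}$ is involved, the bilinearity arguments that work for constant coefficients break down, and one must carefully track the Frölicher-Nijenhuis and Haantjes contributions from variable coefficients. I expect the cleanest route is to establish, once and for all for any $\gl$-regular Nijenhuis operator $L$, a general closure property of $\Sym L$ --- namely, that any two elements of $\Sym L$ are pairwise strong symmetries of each other with vanishing Haantjes torsions --- which would make (2) automatic and illuminate the structural picture underlying this whole family of results.
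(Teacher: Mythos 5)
Your derivation of item 1 is correct and is essentially the paper's argument. The rest of the proposal, however, rests on a false premise: the transition coefficients $c^i_j$ and $\widetilde c\,^i_j$ are \emph{functions} of $u$, not constants (they are the functional coefficients of $M_i$ in the basis $\Id, L,\dots, L^{n-1}$). So $R_j=\sum_i \widetilde c\,^i_j A_i$ is a linear combination with variable coefficients, and by \eqref{ii:tensor} one has $\langle L, fM\rangle=f\langle L,M\rangle+L^*\ddd f\otimes M-\ddd f\otimes ML$: the extra $\ddd f$-terms do not vanish, so neither ``$R_j\in\Sym L$ by constant-coefficient linearity'' nor the bilinearity argument for the Fr\"olicher--Nijenhuis bracket in item 2 goes through. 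The same loss of derivative terms invalidates your step $\ddd\bar g_i=R_i^*\,\ddd g_1$ in item 3: since $\bar g_i=\sum_j\widetilde c\,^j_i g_j$, you are silently discarding $\sum_j g_j\,\ddd\widetilde c\,^j_i$, which is nonzero in general. Essentially all of the differential content of items 2--4 is concentrated in exactly these discarded terms. Your proposed rescue --- a closure property asserting that any two elements of $\Sym L$ are strong symmetries of each other --- is explicitly contradicted by the paper for general $L$ (see the remark before Example \ref{ex2.1}), and even restricted to the $\gl$-regular case it is an unproven assertion that the paper neither states nor uses. Finally, ``parameter counting'' does not establish the converse (``every common conservation law/symmetry arises this way'') claims in items 3 and 4.

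The paper's actual mechanism, which your proposal misses, is that the passage $A_i\mapsto K_i=R_iR_1^{-1}$ is precisely a generalised \emph{reciprocal transformation} of the quasilinear systems $u_{t_i}=A_iu_x$: by property P2 the functions $c^i_1$ are densities of common conservation laws of the $A_j$ with $A_j^*\,\ddd c^i_1=\ddd c^i_j$, so $\mathcal C=(c^i_j)$ is exactly the matrix defining such a transformation. Items 2--4 are then obtained by transporting P1--P3 through the known covariance of symmetries and conservation laws under reciprocal transforms; this requires two auxiliary general lemmas (Lemmas \ref{lem6.1} and \ref{lem6.2}) and, for item 4, an extended $(n+1)\times(n+1)$ reciprocal transformation, with the converses following from invertibility of the transform rather than parameter counting. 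The one piece of items 2--4 your proposal gets right for the right reason is the Haantjes statement: $K_i$ commutes with the $\gl$-regular $L$, hence is a polynomial in $L$ with functional coefficients, and the Haantjes torsion of such an operator vanishes because $L$ is Nijenhuis.
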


\begin{Remark}\label{rem6.1}
\rm{
In general the following statement is true (follows from calculations in \cite{80}).  Let $A$ and $B$ be Killing $(1,1)$-tensors of a (contravariant) metric $g$ such that $AB=BA$.  Then the following two conditions are equivalent: 
\begin{enumerate}
    \item the quadratic integrals $f_A = \tfrac{1}{2} g(A^*p, p)$ and $f_B = \tfrac{1}{2} g(B^*p, p)$ associated with $A$ and $B$ Poisson commute;
    
   \item $A$ and $B$ are symmetries of each other.
\end{enumerate}
In our case we derive the same property for $K_i$ (item 2 of Theorem \ref{t4}) in a different way. 
}    
\end{Remark}

The proof of Theorem \ref{t4} will be based on the concept of a (generalised) reciprocal transformation in the context of quasilinear systems. We briefly recall this concept following \cite{Ferapontov1991c,Ferapontov1989d}.  Consider a collection of operator fields $B_i$, $i = 1, \dots, n$,  and the corresponding system of quasilinear equations:
\begin{equation}\label{sys}
   u_{t_i} = B_i u_x, \quad i = 1, \dots, n. 
\end{equation}
As before, we assume that $B_1 = \Id$ and $t_1 = x$. Suppose that $B_i$'s admit $n$ common conservation laws $\ddd g^1, \dots, \ddd g^n$ so that locally, there exist functions $c^i_j$ such that
$B_j^* \ddd g^i = \ddd c^i_j$, $1 \leq i, j \leq n$. 

We assume that the matrix $\mathcal C$ with entries $c^i_j$ is non-degenerate. Notice that, by construction, the functions $c^i_j$ are defined up to addition of arbitrary constants; thus, this condition is in no way restrictive. It holds even if $\ddd g^i \equiv 0$ for each $i=1,\dots,n$; in this case, the entries of $\mathcal C$ are just constants. Denote $\mathcal C^{-1} = \widetilde {\mathcal C}= \Bigl(\widetilde c\,^i_j\Bigr)$ and fix a collection of auxiliary operators $R_i = \sum_j \widetilde c\,_i^j B_j$. The {\it reciprocal transform}  of \eqref{sys} is a new quasilinear system 
\begin{equation}\label{sys2}
u_{\tau_i} = K_i u_x, \quad i = 1, \dots, n,
\end{equation}
where $K_i = R_i R^{-1}_1$. The main property of the reciprocal transformation is that it {\it maps} solutions to solutions. This procedure works as follows: 
\begin{enumerate}
    \item Let $u(t)=u(t_1,\dots,t_n)$ be an arbitrary solution of  \eqref{sys} and define $1$-forms
    $$
    \alpha^i = c^i_1\bigl(u(t)\bigr) \ddd t_1 + \dots c^i_n \bigl(u(t)\bigr) \ddd t_n.
    $$
    \item By construction,  the forms $\alpha^i$ are closed. Hence, locally there exists functions $\tau_i (t)$ such that $\ddd \tau_i = \alpha^i$, $i=1,\dots,n$. 
    \item The matrix $\mathcal C(t)= \Bigl( c^i_j\bigl(u(t)\bigr)  \Bigr)$ coincides with $\pd{\tau}{t}= \Bigl( \pd{\tau_i}{t_j}\Bigr)$. Thus, since $\mathcal C(t)$ is invertible, locally there exists a well-defined inverse map $t(\tau)$. Then the function $\widetilde u(\tau) = u\bigl(t(\tau)\bigr)$ is a solution of \eqref{sys2}.
\end{enumerate}
In other words, knowing a solution of the initial system, we can find a solution of its reciprocal transform \eqref{sys2} in quadratures. Note that the reciprocal transformation is invertible, that is by construction the forms
$$
\widetilde \alpha^i = {\widetilde c\,}^i_1\bigl( \widetilde u(\tau)\bigr) \ddd \tau_1 + \dots + {\widetilde c\,}^i_n(\tau) \ddd \tau_n
$$
are closed (simply because the elements of $\mathcal C^{-1}$ are $\pd{t}{\tau}$). 

\begin{proof}[Proof of Theorem \ref{t4}]

We will use notation $M^i$ instead of $M_i$ for the basis of $\mathrm{Sym}\, L$. Let  $M^i = c^i_s L^{n - s}$ so that  $\mathcal C =\Bigl( c^i_s\Bigr)$ is the transition matrix between the bases $L^{n-1}, L^{n-2}, \dots, L, \Id$ and $M^1,\dots, M^n$.  The collections of commuting Hamiltonians $h_1, \dots, h_n$ and $\bar h_1, \dots,  \bar h_n$ are related by 
$$
\sum_s h_s L^{s-1} = \sum_i \bar h_i M^i = \widecheck P ^2
$$
which implies $h_s = \sum_s \bar h_i c^i_s$ or, equivalently, $\bar h_i = \sum_s \widetilde c\,^s_i \, h_s$.  In the above notation,   $h_s = g(A^*_s p, p)$ and $R_i =  \sum_s \widetilde c\,^s_i  A_s$. Hence, $\bar h_i  = g(R_i^* p, p)$ and, in particular, $\bar h_1 = \bar g(p,p)=g(R_1^* p, p)$ so that finally, 
$$
\bar h_i  = \bar g \left(\left( R_i R_1^{-1}\right)^* p, p\right),
$$
that is, 
$$
K_i = R_i R_1^{-1}, 
$$
which completes the proof of item 1.

We now recall an important property of operators $A_i$ that for any symmetry $M = \sum_s g_s L^{n-s}$, the function $g_1$ is a density of a common conservation law of $A_i$. Moreover, $\ddd g_s = A_s^*\ddd g_1$ (see property P2).
This means that in our case $c_1^i$, $i=1,\dots,n$ are densities of conservation laws of $A_i$, and the above transition matrix $\mathcal C = \Bigl( c^i_s\Bigr)$ coincides with the matrix $\mathcal C$ in the definition of a reciprocal transform generated by the densities $g^1=c^1_1, \dots, g^n=c^n_1$.  In other words, the Killing $(1,1)$-tensors $K_1,\dots, K_n$ are obtained from $A_1,\dots , A_n$ by a reciprocal transform.   

Since $A_1,\dots, A_n$ are symmetries of one another, the same is true for $K_1,\dots, K_n$  (this is a well known general property of a reciprocal transform, which, of course, can be easily checked).  Next, $K_i$ is a Haantjes operator. Indeed,  $K_i$  algebraically commutes with $L$ and therefore can be written as a polynomial of $L$ with functional coefficients. Since $L$ is Nijenhuis,  by the well-known property (see \cite{ob}, for example), the Haantjes torsion of $K_i$ vanishes. Thus, the second statement is proved in full.

Since $K_1=\Id, K_2,\dots, K_n$ and $A_1=\Id, A_2, \dots, A_n$ are connected  by a reciprocal transform,  we can use this fact to establish a natural relationship between the conservation laws of these two collections of operators\footnote{Or, equivalently, between the conservation laws of the corresponding quasilinear PDE systems $u_{t_i}= K_i u_x$ and $u_{\tau_s} = A_s u_x$ for $i,s=1,\dots, n$.}.  Namely,  the following general property holds.  If  $\mathcal C = \Bigl(c^i_s\Bigr)$ is the matrix of the corresponding reciprocal transformation and $\ddd g$ is a common conservation law for $K_i$ with $ K_i^* \ddd g=\ddd g_i$, then the functions $f_s = \sum_i g_i c^i_s$ satisfy
\begin{equation}
\label{eq:aboutA}
A_s^*\ddd f_1 = \ddd f_s
\end{equation}
and, in particular, $\ddd f = \ddd f_1$ is a common conservation law for $A_1, \dots, A_s$. Notice that the functions $g=g_1,g_2,\dots, g_n$ are connected with $f_1=f, f_2, \dots , f_n$ by means of the inverse matrix $\mathcal C^{-1}=\widetilde{\mathcal C}$, which defines the inverse reciprocal transform, so that this correspondence is bijective. 

Now it remains to apply property P3 of the operators $A_i$'s, which states that  \eqref{eq:aboutA} is equivalent to the fact that $M= \sum f_s L^{n-s}$ is a symmetry of $L$. It remains to note that $\sum_s f_s L^{n-s} = \sum_{s, i}  g_i c^i_s L^{n-s} = \sum_i g_i M^i$, which completes the proof of the third statement of Theorem \ref{t4}.

Let us proceed to the fourth statement. We start with two general lemmas. Let $A_1, \dots, A_n$ be operator fields that satisfy the following properties:
\begin{enumerate}
    \item $A_i$ pointwise span a commutative associative algebra, that is, $A_i A_j = \sum_k a_{ij}^k A_k$;
    \item For all $\mathsf p\in\mathsf M$ and almost all $\xi \in T_{\mathsf p} \mathsf M$, the vectors $A_i \xi$ are linearly independent;
    \item $A_i$ are symmetries of one another.
\end{enumerate}

\begin{Lemma}\label{lem6.1}
Let $A_i$ satisfy the above conditions. Consider the operator field $A = g_1 A_1 + \dots + g_n A_n$ and define
$T_A = \ddd g_1 \otimes A_1 + \dots + \ddd g_n \otimes A_n$. Then $A$ is a common symmetry of $A_i$ if and only if
\begin{equation}\label{qq2}
T_A(A_i\xi, \eta) = T_A(\xi, A_i\eta)    
\end{equation}
for all $A_i$ and $\xi, \eta \in T_{\mathsf p} \mathsf M$.
\end{Lemma}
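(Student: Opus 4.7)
The proof is an equivalence that I would organize around the $(1,2)$-tensor
\begin{equation*}
S_i(\xi,\eta) \;:=\; T_A(A_i\xi,\eta) - T_A(\xi,A_i\eta).
\end{equation*}
The strategy is to manipulate $\langle A_i,A\rangle$ so that $S_i$ appears explicitly, reduce both directions of the iff to statements about $S_i$, and then close the hard direction (``$S_i$ skew $\Longrightarrow S_i\equiv 0$'') using the algebraic rigidity of $T_A$.

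\textbf{Setup and reduction to a statement about $S_i$.} Starting from $A=\sum_k g_k A_k$, I would apply the second identity of Lemma \ref{lem2.1} (formula \eqref{ii:tensor}) term by term and sum over $k$. The two derivative terms $\sum_k A_i^*\ddd g_k\otimes A_k - \sum_k \ddd g_k\otimes A_kA_i$, evaluated at $(\xi,\eta)$, collapse into exactly $T_A(A_i\xi,\eta)-T_A(\xi,A_i\eta) = S_i(\xi,\eta)$. This yields
\begin{equation*}
\langle A_i,A\rangle(\xi,\eta) \;=\; \sum_k g_k\langle A_i,A_k\rangle(\xi,\eta) \;+\; S_i(\xi,\eta).
\end{equation*}
By hypothesis 3 each $\langle A_i,A_k\rangle$ has vanishing symmetric part, and commutativity of $A$ with $A_i$ is automatic from hypothesis 1, so ``$A$ is a symmetry of $A_i$'' reduces to: the symmetric part of $S_i$ vanishes, i.e.\ $S_i$ is skew-symmetric in $(\xi,\eta)$. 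The backward direction of the Lemma is immediate from this, since $S_i\equiv 0$ is certainly skew.

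\textbf{Forward direction: upgrading skew to zero.} Using hypothesis 1, $A_kA_i = \sum_l a_{ki}^l A_l$, a short calculation rewrites $S_i$ in the rigid form
\begin{equation*}
S_i(\xi,\eta) \;=\; \sum_l c_l^i(\xi)\, A_l\eta, \qquad c_l^i(\xi) \;:=\; \ddd g_l(A_i\xi) - \sum_k a_{ki}^l\,\ddd g_k(\xi),
\end{equation*}
so the $\eta$-dependence is carried entirely by the vectors $A_l\eta$. Fix a generic $\xi$ at which $\{A_l\xi\}_{l=1}^n$ is a basis of $T_{\mathsf p}\mathsf M$ (hypothesis 2). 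Writing a general $\eta = \sum_k \lambda_k A_k\xi$, substituting into the skew relation $S_i(\xi,\eta)+S_i(\eta,\xi)=0$, expanding $A_lA_k = \sum_m a_{lk}^m A_m$, and reading off coefficients of $A_m\xi$ gives the purely algebraic identity
\begin{equation*}
\sum_l c_l^i(\xi)\,a_{lk}^m \;+\; c_m^i(A_k\xi) \;=\; 0 \qquad (\text{for all } i,k,m,\xi).
\end{equation*}

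\textbf{Finishing via the identity in the algebra.} The concluding step uses that $\mathcal A := \operatorname{Span}(A_1,\dots,A_n)$ contains $\Id$. This is forced by hypothesis 2: since $\{A_l\xi\}$ spans $T_{\mathsf p}\mathsf M$, there are scalars $\mu^k$ with $\sum_k \mu^k A_k\xi = \xi$, and commutativity makes $E:=\sum_k\mu^k A_k$ act as the identity on the entire cyclic orbit $\mathcal A\xi = T_{\mathsf p}\mathsf M$; hence $E=\Id$, which gives $\sum_k \mu^k a_{lk}^m = \delta_l^m$. Contracting the displayed identity against $\mu^k$ collapses it to $2c_m^i(\xi)=0$, whence $S_i\equiv 0$ at generic points, and by continuity everywhere. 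The main obstacle I anticipate is precisely this last passage from skew-symmetry to full vanishing: skew-symmetry alone does not force a $(1,2)$-tensor to be zero, and it is only the conspiracy of the three hypotheses --- commutative associative algebra, pairwise symmetries, and the cyclicity which secretly delivers $\Id\in\mathcal A$ --- that rigidifies $S_i$ enough to kill it.
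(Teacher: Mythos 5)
Your proof is correct and follows essentially the same route as the paper: the same decomposition $\langle A_i,A\rangle=\sum_k g_k\langle A_i,A_k\rangle+S_i$ obtained from Lemma \ref{lem2.1}, the same reduction of the claim to the skew-symmetry of $S_i$, and the same use of the pointwise linear independence of $A_1\xi,\dots,A_n\xi$. The one remark worth making is that your forward direction is more elaborate than necessary: since $S_i(\xi,\eta)=\sum_l c^i_l(\xi)\,A_l\eta$, skew-symmetry evaluated on the diagonal already gives $S_i(\xi,\xi)=\sum_l c^i_l(\xi)\,A_l\xi=0$, and hypothesis 2 then forces $c^i_l(\xi)=0$ for generic $\xi$, hence identically --- which is exactly what the paper does. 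Your detour through $\eta=A_k\xi$, the structure constants $a_{lk}^m$, and the identification of $\Id$ inside $\operatorname{Span}(A_1,\dots,A_n)$ is valid (contracting with the coefficients $\mu^k$ of $\Id$ simply recovers the substitution $\eta=\xi$), but the ``conspiracy of the three hypotheses'' that you single out as the main obstacle is not actually needed; linear independence of the vectors $A_l\xi$ suffices.
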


\begin{proof}
By direct computation, we have
\begin{equation}\label{qq1}
\begin{aligned}
& \langle A_i, A \rangle (\xi, \eta) = AA_i [\xi, \eta] + [A_i \xi, A \eta] - A_i [\xi, A \eta] - A [A_i\xi, \eta] = \\
& = \sum_{j = 1}^n\Big( g_j A_j A_i [\xi, \eta] + [A_i \xi, g_j A_j \eta] - A_i [\xi, g_j A_j \eta] - g_j A_j [A_i\xi, \eta] \Big) = \\
& = \sum_{j = 1}^n g_j \langle A_i, A_j\rangle (\xi, \eta) 
 + \sum_{j = 1}^n \Big( A_i^* \ddd g_j \otimes A_j - \ddd g_j \otimes A_i A_j \Big) (\xi, \eta) = \\
& = \sum_{j = 1}^n \Big(A^*_i \ddd g_j - \ddd g_1 a_{i1}^j - \dots - \ddd g_n a_{in}^j\Big) \otimes A_j (\xi, \eta) + 
 \sum_{j = 1}^n g_j \langle A_i, A_j\rangle (\xi, \eta).
\end{aligned}    
\end{equation}
First, assume that $A$ is a common symmetry. Substituting $\eta = \xi$, we get that $\langle A_i, A_j \rangle (\xi, \xi)$ vanishes. At the same time, all $A_i\xi$ are linearly independent for almost all $\xi$. By continuity, we obtain
$$
A^*_i \ddd g_j - a_{i1}^j \ddd g_1 - \dots - a_{in}^j \ddd g_n = 0, \quad j = 1, \dots, n.
$$
At the same time
$$
\begin{aligned}
T_A(A_i\xi, \eta) - T_A(\xi, A_i\eta) = \sum_{j = 1}^n \Big(A_i^* \ddd g_j \otimes A_j - a_{1i}^1 \ddd g_1 - \dots - a_{ni}^n \ddd g_n \Big) = 0.
\end{aligned}
$$
Thus, if $A$ is a common symmetry, then \eqref{qq2} holds. Now assume that \eqref{qq2} holds. Going back to formula \eqref{qq1}, we get
$$
\langle A_i, A \rangle (\xi, \eta) = \sum_{j = 1}^n g_j \langle A_i, A_j\rangle (\xi, \eta).
$$
Thus, we get that $A$ is a common symmetry of $A_i$, as stated.
\end{proof}

\begin{Lemma}\label{lem6.2}
Let $A_i$, $i=1,\dots,n$, satisfy the above conditions, and $\ddd f$ be a common conservation law of them. Then $\ddd f$ is a conservation law for every common symmetry $A$ of the form $A = g_1 A_1 + \dots + g_n A_n$.
\end{Lemma}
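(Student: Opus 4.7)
The plan is to reduce the statement to vanishing of a single $2$-form on $\mathsf M$. Since each $A_j^* \ddd f$ is closed by assumption,
\[
\ddd (A^* \ddd f) \;=\; \ddd\!\Bigl(\sum_j g_j\, A_j^* \ddd f\Bigr) \;=\; \sum_j \ddd g_j \wedge A_j^* \ddd f,
\]
so it suffices to prove the right-hand side vanishes. The assumption that $A$ is a common symmetry enters through the just-proved Lemma \ref{lem6.1}: expanding $T_A(A_i\xi,\eta) = T_A(\xi, A_i\eta)$ by means of $A_j A_i = \sum_k a_{ij}^k A_k$ and comparing coefficients against the pointwise linearly independent vectors $A_k \eta$, one obtains the clean algebraic relation
\[
A_i^* \ddd g_j \;=\; \sum_k a_{ik}^j\, \ddd g_k, \qquad i,j=1,\dots,n.
\]
This is the only consequence of the common-symmetry hypothesis I would need.

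Next, I would evaluate the $2$-form $\sum_j \ddd g_j \wedge A_j^* \ddd f$ on a pair of the form $(A_i\xi, A_\ell\xi)$ with $\xi$ arbitrary and $i,\ell \in \{1,\dots,n\}$. Direct substitution of the displayed identity for $A_i^*\ddd g_j$ together with $A_j A_\ell = \sum_m a_{j\ell}^m A_m$ collapses the computation to
\[
\sum_{k,m} \ddd g_k(\xi)\, \ddd f(A_m \xi) \sum_j \bigl(a_{ik}^j a_{j\ell}^m - a_{\ell k}^j a_{ji}^m\bigr).
\]
The inner sum over $j$ is the difference between the $A_m$-components of $(A_i A_k)A_\ell$ and $(A_\ell A_k) A_i$ in the basis $A_1,\dots,A_n$, which vanishes by commutativity and associativity of the algebra spanned by the $A_i$. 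By assumption~2, at a generic point the vectors $A_1\xi,\dots,A_n\xi$ form a basis of the tangent space for a generic choice of $\xi$, so the antisymmetric $2$-form $\sum_j \ddd g_j \wedge A_j^* \ddd f$ is forced to vanish identically at that point, and then everywhere by continuity.

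I expect the main obstacle to be purely index bookkeeping: one must track carefully which slot of $a_{ij}^k$ is being used when translating the common-symmetry condition via Lemma \ref{lem6.1} and again when invoking $(A_i A_k)A_\ell = (A_\ell A_k) A_i$. Conceptually the argument is clean --- the common-symmetry condition forces the covectors $\ddd g_j$ to transform under $A_i^*$ by exactly the same structure constants as the operators $A_j$ themselves, which is precisely what is needed for the associativity rearrangement to kill the $2$-form $\ddd(A^*\ddd f)$.
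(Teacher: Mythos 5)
Your proof is correct and follows essentially the same route as the paper's: both reduce $\ddd(A^*\ddd f)$ to $\sum_j \ddd g_j\wedge A_j^*\ddd f$, feed in the consequence of Lemma \ref{lem6.1}, and evaluate on the frame $A_1\xi,\dots,A_n\xi$ to conclude by commutativity. The only (cosmetic) difference is that the paper keeps the cancellation at the level of the tensor $T_A$, writing the $2$-form as $\ddd f(T_A(\xi,\eta))-\ddd f(T_A(\eta,\xi))$ and moving $A_i$, $A_\ell$ across its arguments, whereas you unwind the same identity into the structure constants $a_{ij}^k$ and invoke $(A_iA_k)A_\ell=(A_\ell A_k)A_i$.
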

\begin{proof}
By our assumptions, there exist functions $f_i$ such that $\ddd f_i = A_i \ddd f$. By direct computation, for an arbitrary pair of vectors $\xi, \eta$, we get
$$
\begin{aligned}
& \ddd(A^*\ddd f) (\xi, \eta) = \ddd (g_1 \ddd f_1 + \dots + g_n \ddd f_n) (\xi, \eta) = \ddd g_1 \wedge \ddd f_1 (\xi, \eta) + \dots + \ddd g_n \wedge \ddd f_n (\xi, \eta) = \\
= & \ddd f(T_A(\xi, \eta)) - \ddd f(T_A(\eta, \xi)).
\end{aligned}
$$
Substituting $A_i\xi, A_j \xi$ instead of $\eta$ and $\xi$ and using Lemma \ref{lem6.1}, we get
$$
\ddd f(T_A(A_i \xi, A_j \xi)) - \ddd f(T_A(A_j\xi, A_i \xi)) = \ddd f(T_A(A_j A_i \xi, \xi)) - \ddd f(T_A(A_i A_j\xi, \xi)) = 0.
$$
Since $A_1\xi, \dots, A_n\xi$ form a basis, we conclude that $\ddd(A^*\ddd f) (\xi, \eta)=0$, as required.
\end{proof}

As before, let  $g^1, \dots, g^n$ be the densities of common conservation laws for $A_i$. Consider a~symmetry $A = f_1 A_1 + \dots + f_n A_n$ and take the extended reciprocal transformation \footnote{This transformation exists due to Lemma \ref{lem6.2}} with the new $n + 1 \times n + 1$ matrix $\mathcal C_{\mathrm{ex}}$ from
$$
\begin{aligned}
\ddd \tau_1 &=   c_1^1 \ddd t_1 + \dots +  c_n^1 \ddd t_n + c_{n+1}^1 \ddd t_{n+1} ,\\
& \dots \\
\ddd \tau_n &=   c_1^n \ddd t_1 + \dots +  c_n^n \ddd t_n + c_{n+1}^n \ddd t_{n+1} ,\\
\ddd \tau_{n+1} & = \phantom{c_1^n \ddd t_1 + \dots +  c_n^n \ddd t_n +  c_{n+1}^n} \ddd t_{n+1}.
\end{aligned}
$$ 
The new extended matrices $\mathcal C_{\mathrm{ex}}$ and $\widetilde {\mathcal C}_{\mathrm{ex}}=\mathcal C^{-1}_{\mathrm{ex}}$ will take the form
$$
\mathcal C_{\mathrm{ex}} = \begin{pmatrix}
\mathcal C & \begin{matrix}  c_{n+1}^1 \\ \vdots \\ c_{n+1}^n \end{matrix} \\
0 \ \dots \ 0 & 1
\end{pmatrix}, \quad 
\widetilde {\mathcal C}_{\mathrm{ex}} = \mathcal C^{-1}_{\mathrm{ex}} =  \begin{pmatrix}
\widetilde{\mathcal C} & -\widetilde {\mathcal C}\begin{pmatrix}  c_{n+1}^1 \\ \vdots \\ c_{n+1}^n \end{pmatrix} \\
0 \ \dots \ 0 & 1
\end{pmatrix}
$$

The matrices $K_i$ of the corresponding reciprocal transform will be defined by the standard formulas
$$
K_i = R_i R_1^{-1},  \quad  i = 1,\dots, n, n+1. 
$$
It is easy to check that the matrices $R_1, \dots, R_n$ remain exactly the same so that $K_1, \dots, K_n$ do not change. We are interested in the additional operator $K=K_{n+1}$ which will be a common symmetry of $K_1, \dots, K_n$  (corresponding to the symmetry $A$ of the original system). This operator has the form
$$
K_{n+1} = \left( \sum_{i=1}^n\widetilde c\,^i_{n+1} A_i + A\right) R_1^{-1}
$$
where
$$
\widetilde c\,^i_{n+1} = - \sum_{j=1}^n\widetilde c\,^i_j  c^j_{n+1}
$$
Finally, the symmetry $K$ obtained from the symmetry $A$ has the form
\begin{equation}\label{arj}
\begin{aligned}
K & = \left(A - \sum_{i,j=1}^n\widetilde c\,^i_j  c^j_{n+1} A_i\right) R_1^{-1} = \left(A - \sum_{j=1}^n \left(\sum_{i=1}^n \widetilde c\,^i_j A_i\right) c^j_{n+1} \right) R_1^{-1}  = \\
& = \left(A - \sum_{j=1}^n R_j c^j_{n+1} \right) R_1^{-1}  = AR^{-1}_1 - \sum_{j=1}^n c^j_{n+1} K_j.        
\end{aligned}
\end{equation}
Thus, any common symmetry $A$ for $A_i$ generates a common symmetry $K$ for $K_i$. Notice that $K_i$ satisfies the conditions of Lemmas \ref{lem6.1} and \ref{lem6.2}. Thus, applying the inverse reciprocal transformation, we find that all common symmetries $K$ are obtained from some common symmetry $A$ of $A_i$. In other words, we have a bijection of the common symmetries. 

At the same time $A_i = c_i^s R_s$, $A = \sum_j f_j A_j$, so 
$$
AR_1^{-1} = \sum_{j = 1}^n f_j c_j^s K_s. 
$$
Substituting this into \eqref{arj}, we get
$$
K = \Big(\sum_{j = 1}^n f_j c_j^1 - c^1_{n + 1}\Big) K_1 + \dots + \Big(\sum_{j = 1}^n f_j c_j^n - c^n_{n + 1}\Big) K_n = r_1 K_1 + \dots + r_n K_n.
$$
Denoting the equations in the brackets as $r_i$, we get
$$
\begin{aligned}
\ddd r_i & = c_1^i \ddd f_1 + \dots + c_n^i \ddd f_n + f_1 \ddd c_1^i + \dots + f_n \ddd c_n^i - \ddd c_{n + 1}^i  = \\
& = \Big(c_1^i L^{n - 1} + \dots + c_n^i \Id \Big)^* \ddd f + \Big(f_1 A_1 + \dots + f_n A_n\Big)^* \ddd g_1 - A^* \ddd g_1 = \\
& = M^{i*} \ddd f.
\end{aligned}
$$
This completes the proof of  item 4 of Theorem \ref{t4}.  \end{proof}

Since the composition of reciprocal transforms is a reciprocal transform,    quasilinear systems related to different bases of the symmetry algebra $\Sym L$  can be obtained from each other by a suitable reciprocal transformation. Namely, we have the following corollary of Theorem \ref{t4}.

\begin{Corollary}\label{c1}
Let  $M_1,\dots, M_n$  and $\widetilde M_1,\dots, \widetilde M_n$ be two bases of $\Sym L$, and $K_1=\Id, K_2, \dots, K_n$, and $\widetilde K_1=\Id, \widetilde K_2, \dots, \widetilde K_n$ be the corresponding collection of Killing $(1,1)$-tensors constructed from these bases via  \eqref{eq:hiMi}, \eqref{eq:hiKi} (see Theorem  \ref{t4}). Define $c^i_j$ as $\widetilde M_i = c^i_1 M_1 + \dots + c^i_n M_n$. Then the (integrable) quasilinear systems
$$
u_{t_i} = K_i u_x \quad (i=1,\dots, n)\quad\mbox{and}\quad u_{\tau_i} = \widetilde K_i u_x \quad (i=1,\dots, n)
$$
are related by the reciprocal transformation defined by the conservation laws $\ddd g_i = \ddd c^i_1$.   
\end{Corollary}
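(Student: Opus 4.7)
The plan is to derive the corollary directly from Theorem~\ref{t4}, in two steps. First, identify the relevant conservation laws of $K_1,\dots,K_n$ and check that the matrix of the reciprocal transformation they define coincides with the transition matrix $\mathcal{C}=(c^i_j)$ between the two bases of $\Sym L$. Second, verify that the reciprocal image of $K_1,\dots,K_n$ under this transformation is exactly $\widetilde K_1,\dots,\widetilde K_n$.

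For the first step, I would apply item~3 of Theorem~\ref{t4} to each basis element $\widetilde M_i \in \Sym L$, expanded as $\widetilde M_i = c^i_1 M_1 + \dots + c^i_n M_n$. That item asserts $K_j^* \ddd c^i_1 = \ddd c^i_j$, which simultaneously says (a) that each $\ddd c^i_1$ is a common conservation law of $K_1,\dots,K_n$, and (b) that the functions $c^i_j$ produced by the reciprocal construction with input densities $g^i = c^i_1$ agree (up to the irrelevant additive constants inherent in the reciprocal transform) with the entries of $\mathcal{C}$. Invertibility of $\mathcal{C}$ is automatic since $M_1,\dots,M_n$ and $\widetilde M_1,\dots,\widetilde M_n$ are both bases of $\Sym L$.

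For the second step, I would apply item~1 of Theorem~\ref{t4} to both bases to write $K_j = R_j R_1^{-1}$ and $\widetilde K_j = \widetilde R_j \widetilde R_1^{-1}$, where $R_j$ and $\widetilde R_j$ are built from $A_1,\dots,A_n$ via the inverse transition matrices to the standard basis $L^{n-1},\dots,\Id$. Denoting these by $\mathcal{C}_M$ and $\mathcal{C}_{\widetilde M}$, the hypothesis $\widetilde M_i = c^i_j M_j$ yields $\mathcal{C}_{\widetilde M} = \mathcal{C}\,\mathcal{C}_M$, hence $\mathcal{C}_{\widetilde M}^{-1} = \mathcal{C}_M^{-1}\mathcal{C}^{-1}$. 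Reading off the rows gives the identity $\widetilde R_j = \sum_k \widetilde c^k_j\, R_k$, with $(\widetilde c^k_j)=\mathcal{C}^{-1}$. By definition of the reciprocal transform applied to $K_1,\dots,K_n$ with matrix $\mathcal{C}$, the new operators are $\widehat K_j = S_j S_1^{-1}$, where $S_j := \sum_k \widetilde c^k_j K_k = \widetilde R_j R_1^{-1}$. A one-line cancellation then gives
$$
\widehat K_j \;=\; \widetilde R_j R_1^{-1}\cdot R_1 \widetilde R_1^{-1} \;=\; \widetilde R_j \widetilde R_1^{-1} \;=\; \widetilde K_j,
$$
which is the desired equality.

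There is no serious obstacle: the corollary is essentially an algebraic unwinding of Theorem~\ref{t4}. The only real care required is the bookkeeping with three transition matrices simultaneously---$\mathcal{C}_M$ and $\mathcal{C}_{\widetilde M}$ connecting each basis to the standard one $L^{n-1},\dots,\Id$, and $\mathcal{C}$ between $\{M_i\}$ and $\{\widetilde M_i\}$---together with checking that the two meanings of the symbol $c^i_j$ (as basis-change coefficients on the one hand, and as the entries defining the reciprocal transformation on the other) agree; this agreement is precisely what item~3 of Theorem~\ref{t4} supplies.
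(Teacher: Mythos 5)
Your proposal is correct and follows essentially the same route as the paper, which treats the corollary as an immediate consequence of Theorem \ref{t4}: both $K$-systems are reciprocal transforms of the $A_i$-system, and composing one transform with the inverse of the other yields the transform with matrix $\mathcal{C}_{\widetilde M}\,\mathcal{C}_M^{-1}=\mathcal{C}$, whose first column is identified via item~3 as the densities $c^i_1$. Your version merely makes this composition explicit through the identity $\widetilde R_j=\sum_k \widetilde c^{\,k}_j R_k$ and the cancellation $S_jS_1^{-1}=\widetilde R_j\widetilde R_1^{-1}$, which is a faithful unwinding of the paper's one-line argument.
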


If $L$ is not only $\gl$-regular, but also algebraically generic, that is, the eigenvalues of $L$ have constant multiplicities, then it easily follows from \cite{nij4} that  in a suitable coordinate system, $\Sym L$ admits a basis $\widetilde M_1,\dots, \widetilde M_n$ that consists of constant matrices.  In this case, the Killing operators $\widetilde K_i$ will also be constant, so that the corresponding quasilinear system $u_{t_i} = \widetilde K_i u_x$, $i=1,\dots, n$ will be, in fact, linear.  This leads us to the following conclusion.   

\begin{Corollary}\label{c2}
Let $\Sym L$ be the symmetry algebra of $\gl$-regular algebraically generic Nijenhuis operator $L$.  Consider an arbitrary basis $M_1,\dots, M_n$ of $\Sym L$ and the corresponding collection $K_1=\Id, K_2, \dots, K_n$ of Killing $(1,1)$-tensors constructed from this bases via  \eqref{eq:hiMi}, \eqref{eq:hiKi} (see Theorem  \ref{t4}). Then the quasilinear system 
$$
u_{t_i} = K_i u_x, \quad i=1,\dots, n,  \  x=t_1,
$$
can be linearised by a suitable reciprocal transform.
\end{Corollary}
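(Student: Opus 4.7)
The plan is to combine Corollary \ref{c1} with the existence of a ``constant'' basis of $\Sym L$ for algebraically generic $\gl$-regular Nijenhuis operators. More precisely, I would argue that in a suitable coordinate system (guaranteed by the results of \cite{nij4}) the algebra $\Sym L$ admits a basis $\widetilde M_1, \dots, \widetilde M_n$ whose elements are represented by constant matrices; correspondingly, the Killing $(1,1)$-tensors $\widetilde K_1, \dots, \widetilde K_n$ produced from this basis via \eqref{eq:hiMi}--\eqref{eq:hiKi} will also have constant matrix representations, making the associated quasilinear system $u_{\tau_i} = \widetilde K_i u_x$ linear with constant coefficients.

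First, I would invoke the structural result about the algebra $\Sym L$ for a $\gl$-regular algebraically generic Nijenhuis operator: locally, one can choose coordinates in which $L$ takes a canonical (block-companion) form with constant Jordan structure, and in these coordinates $\Sym L$ contains a basis of operators with constant entries. This follows from the explicit local description of symmetries established in the references cited in Section \ref{sect3}. Second, I would observe that $\widecheck P$ from \eqref{eq:37} is a polynomial in $L$ whose coefficients are the canonical momenta $p_1, \dots, p_n$ in an appropriate companion coordinate system. Hence in such coordinates the identity $\sum_i \widetilde h_i \widetilde M_i = \widecheck P^2$ determines the Hamiltonians $\widetilde h_i$, and the resulting $\widetilde K_i$ obtained by rewriting them as $\widetilde h_i = \tfrac12 \widetilde g(\widetilde K_i^* p, p)$ will be constant matrix Killing tensors with respect to a constant metric $\widetilde g$.

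Third, I would apply Corollary \ref{c1} to the two bases $M_1, \dots, M_n$ (the given one) and $\widetilde M_1, \dots, \widetilde M_n$ (the constant one): by this corollary, the quasilinear systems $u_{t_i} = K_i u_x$ and $u_{\tau_i} = \widetilde K_i u_x$ are related by the reciprocal transformation determined by the conservation laws $\ddd g_i = \ddd c^i_1$, where $c^i_j$ are the entries of the transition matrix expressing $\widetilde M_i$ in terms of $M_1, \dots, M_n$. Since the latter system is linear (constant coefficients $\widetilde K_i$), this precisely realises the sought linearisation.

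The only real obstacle I anticipate is verifying carefully that the existence of a basis of $\Sym L$ by constant matrices in the algebraically generic case indeed follows from the local description in \cite{nij4} (this requires treating each eigenvalue block in the splitting theorem and identifying a constant basis in each factor), and then checking that constant $\widetilde M_i$ yields constant $\widetilde K_i$. The latter is immediate once one uses the formula $\widetilde K_i = \widetilde R_i \widetilde R_1^{-1}$ from Theorem \ref{t4}, since the matrices $\widetilde R_i$ are then also constant (they are fixed linear combinations of the operators $A_j$, each of which is a polynomial in $L$ and hence has constant entries in the chosen coordinates). Once this algebraic point is verified, the rest of the argument is a direct application of Corollary \ref{c1}.
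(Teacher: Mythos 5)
Your route is exactly the paper's: Corollary \ref{c2} is obtained there by combining the observation (stated in the paragraph immediately preceding it, with reference to \cite{nij4}) that for an algebraically generic $\gl$-regular $L$ one can choose coordinates in which $\Sym L$ admits a basis $\widetilde M_1,\dots,\widetilde M_n$ of constant matrices, the assertion that the corresponding Killing tensors $\widetilde K_i$ are then constant, and Corollary \ref{c1}, which supplies the reciprocal transformation between the given system and the resulting constant-coefficient one. So the architecture of your argument is correct and identical to the paper's.

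One step of your justification, however, would fail as written: you claim the $\widetilde R_i$ are constant because they are ``fixed linear combinations of the operators $A_j$, each of which is a polynomial in $L$ and hence has constant entries in the chosen coordinates''. Neither clause holds in general. The operator $L$ cannot be made constant by any coordinate change when its eigenvalues are non-constant (e.g.\ $L=\operatorname{diag}(u^1,\dots,u^n)$), so the $A_j$ of \eqref{rec}, being polynomials in $L$ with coefficients taken from the characteristic polynomial of $L$, are genuinely $u$-dependent; likewise the transition coefficients $\widetilde c\,^i_j$ between the constant basis $\widetilde M_i$ and the non-constant standard basis $L^{n-1},\dots,\Id$ are $u$-dependent (Vandermonde-type expressions in the diagonal example). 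What is true --- and what the paper itself asserts without a detailed argument --- is that the quotients $R_iR_1^{-1}$ nevertheless come out constant: the $u$-dependence of the $\widetilde c\,^i_j$ and of the $A_j$ cancels. A self-contained proof must verify this cancellation (for instance by expanding $\widecheck P^{\,2}$ directly in the constant basis $\widetilde M_\alpha$), rather than argue that each factor is separately constant. Apart from this, the remaining point you flag --- extracting a constant basis of $\Sym L$ from the local description in \cite{nij4} in the algebraically generic case --- is precisely the content the paper delegates to that reference.
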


\subsection*{Acknowledgments.}  A.\,B. and A.\,K. were supported by the Ministry of Science and Higher Education of the Republic of Kazakhstan (grant No. AP23483476), and   V.\,M. by  the DFG (projects 455806247 and 529233771) and the ARC Discovery Programme DP210100951.

  Data sharing is not applicable to this article, as no datasets were generated or analysed during the current study. The authors declare no conflicts of interest.

\printbibliography

@book {AKN2006,
    AUTHOR = {Arnold, V. I. and Kozlov, V. V. and Neishtadt,
              A. I.},
     TITLE = {Mathematical aspects of classical and celestial mechanics},
    SERIES = {Encyclopaedia of Mathematical Sciences},
    VOLUME = {3},
   EDITION = {Third},
      NOTE = {[Dynamical systems. III],
              Translated from the Russian original by E. Khukhro},
 PUBLISHER = {Springer-Verlag, Berlin},
      YEAR = {2006},
     PAGES = {xiv+518},
   MRCLASS = {70-02 (37Jxx 70-01 70H03 70H05)},
  MRNUMBER = {2269239},
MRREVIEWER = {Ernesto\ A.\ Lacomba},
}

@article {BM2003,
    AUTHOR = {Bolsinov, A. V. and Matveev, V. S.},
     TITLE = {Geometrical interpretation of {B}enenti systems},
   JOURNAL = {J. Geom. Phys.},
  FJOURNAL = {Journal of Geometry and Physics},
    VOLUME = {44},
      YEAR = {2003},
    NUMBER = {4},
     PAGES = {489--506},
      ISSN = {0393-0440,1879-1662},
   MRCLASS = {37J35 (53D25 70G45 81R12)},
  MRNUMBER = {1943174},
MRREVIEWER = {Leo\ T.\ Butler},

       URL = {https://doi.org/10.1016/S0393-0440(02)00054-2},
}

@article {BM2015,
    AUTHOR = {Bolsinov, A. V. and Matveev, V. S.},
     TITLE = {Local normal forms for geodesically equivalent
              pseudo-{R}iemannian metrics},
   JOURNAL = {Trans. Amer. Math. Soc.},
  FJOURNAL = {Transactions of the American Mathematical Society},
    VOLUME = {367},
      YEAR = {2015},
    NUMBER = {9},
     PAGES = {6719--6749},
      ISSN = {0002-9947,1088-6850},
   MRCLASS = {53B30 (53C50)},
  MRNUMBER = {3356952},
MRREVIEWER = {Alexey\ O.\ Remizov},
  
       URL = {https://doi.org/10.1090/S0002-9947-2014-06416-7},
}

@article {KM1980,
    AUTHOR = {Kalnins, E. G. and Miller,  W. jun.},
     TITLE = {Killing tensors and variable separation for
              {H}amilton-{J}acobi and {H}elmholtz equations},
   JOURNAL = {SIAM J. Math. Anal.},
  FJOURNAL = {SIAM Journal on Mathematical Analysis},
    VOLUME = {11},
      YEAR = {1980},
    NUMBER = {6},
     PAGES = {1011--1026},
      ISSN = {0036-1410},
   MRCLASS = {53B20 (35C99 70H20)},
  MRNUMBER = {595827},
MRREVIEWER = {A.\ P.\ Stone},
       URL = {https://doi.org/10.1137/0511089},
}

@article {Obukhov2021,
    AUTHOR = {Obukhov, V.},
     TITLE = {Separation of variables in {H}amilton-{J}acobi equation for a
              charged test particle in the {S}tackel spaces of type (2.1)},
   JOURNAL = {Int. J. Geom. Methods Mod. Phys.},
  FJOURNAL = {International Journal of Geometric Methods in Modern Physics},
    VOLUME = {17},
      YEAR = {2020},
    NUMBER = {14},
     PAGES = {2050186, 20},
      ISSN = {0219-8878,1793-6977},
   MRCLASS = {83C10 (83C20 83C50)},
  MRNUMBER = {4180198},
MRREVIEWER = {Bobomurat\ Ahmedov},
       DOI = {10.1142/S0219887820501868},
       URL = {https://doi.org/10.1142/S0219887820501868},
}

@article {Osetrin2016,
    AUTHOR = {Osetrin, K. and Filippov, A. and Osetrin, E.},
     TITLE = {The spacetime models with dust matter that admit separation of
              variables in {H}amilton-{J}acobi equations of a test particle},
   JOURNAL = {Modern Phys. Lett. A},
  FJOURNAL = {Modern Physics Letters A. Particles and Fields, Gravitation,
              Cosmology, Astrophysics, Nuclear Physics, Accelerator Physics,
              Quantum Information},
    VOLUME = {31},
      YEAR = {2016},
    NUMBER = {6},
     PAGES = {1650027, 14},
      ISSN = {0217-7323,1793-6632},
   MRCLASS = {83C10 (70H20)},
  MRNUMBER = {3462790},
       DOI = {10.1142/S0217732316500279},
       URL = {https://doi.org/10.1142/S0217732316500279},
}

@article {bagrov,
    AUTHOR = {Bagrov, V. G. and Obukhov, V. V. and Osetrin, K. E.},
     TITLE = {Classification of null-{S}t\"ackel electrovac metrics with
              cosmological constant},
   JOURNAL = {Gen. Relativity Gravitation},
  FJOURNAL = {General Relativity and Gravitation},
    VOLUME = {20},
      YEAR = {1988},
    NUMBER = {11},
     PAGES = {1141--1154},
      ISSN = {0001-7701,1572-9532},
   MRCLASS = {83C15 (83C20 83C50)},
  MRNUMBER = {967388},
MRREVIEWER = {Nora\ Bret\'on B\'aez},
       DOI = {10.1007/BF00758935},
       URL = {https://doi.org/10.1007/BF00758935},
}

@book{Kress_book,
 Author = {Kalnins, E. G. and Kress, J. M. and Miller, W. jun.},
 Title = {Separation of variables and superintegrability. {The} symmetry of solvable systems},
 FSeries = {IOP Expanding Physics},
 Series = {IOP Expand. Phys.},
 Year = {2018},
 Publisher = {Bristol: IOP Publishing},
 Language = {English},
 DOI = {10.1088/978-0-7503-1314-8},
 Keywords = {35-02,37Jxx,35C05},
 zbMATH = {7088725},
 Zbl = {1416.35002}
}

@article {Kiyohara,
    AUTHOR = {Kiyohara, K.},
     TITLE = {Two classes of {R}iemannian manifolds whose geodesic flows are
              integrable},
   JOURNAL = {Mem. Amer. Math. Soc.},
  FJOURNAL = {Memoirs of the American Mathematical Society},
    VOLUME = {130},
      YEAR = {1997},
    NUMBER = {619},
     PAGES = {viii+143},
      ISSN = {0065-9266,1947-6221},
   MRCLASS = {58F17 (53C22 58F07)},
  MRNUMBER = {1396959},
MRREVIEWER = {Gabriel\ P.\ Paternain},
       DOI = {10.1090/memo/0619},
       URL = {https://doi.org/10.1090/memo/0619},
}

@article {disser,
AUTHOR ={St\"ackel, P.},
TITLE = {Die Integration der Hamilton-Jacobischen Differentialgleichung mittelst Separation der Variablen},
JOURNAL = {Habilitationsschrift, Universit\"at Halle}, 
Year= {1891}
}

@article{Neumann,
  author  = {Neumann, C.},
  title   = {De problemate quodam mechanico, quod ad primam integralium ultraellipticorum classem revocatur},
  journal = {Journal f{\"u}r die reine und angewandte Mathematik},
  volume  = {56},
  pages   = {46--63},
  year    = {1859}
}

@article{Liouville,
     author = {Liouville, J.},
     title = {M\'emoire sur l'int\'egration des \'equations diff\'erentielles du mouvement d'un nombre quelconque de points mat\'eriels},
     journal = {Journal de Math\'ematiques Pures et Appliqu\'ees},
     pages = {257--299},
     publisher = {Gauthier-Villars},
     volume = {1e s{\'e}rie, 14},
     year = {1849},
     language = {fr},
     url = {http://www.numdam.org/item/JMPA_1849_1_14__257_0/}
}

@book {Luetzen,
    AUTHOR = {L\"{u}tzen, J.},
     TITLE = {Joseph {L}iouville 1809--1882: master of pure and applied
              mathematics},
    SERIES = {Studies in the History of Mathematics and Physical Sciences},
    VOLUME = {15},
 PUBLISHER = {Springer-Verlag, New York},
      YEAR = {1990},
     PAGES = {xx+884},
      ISBN = {0-387-97180-7},
   MRCLASS = {01A70 (01A55)},
  MRNUMBER = {1066463},
MRREVIEWER = {F.\ Smithies},
       URL = {https://doi.org/10.1007/978-1-4612-0989-8},
}

@article{Jacobi,
  author  = {Jacobi, C. G. J.},
  title   = {Note von der geod{\"a}tischen Linie auf einem Ellipsoid und den verschiedenen Anwendungen einer merkw{\"u}rdigen analytischen Substitution},
  journal = {Journal f{\"u}r die reine und angewandte Mathematik},
  volume  = {19},
  pages   = {309--313},
  year    = {1839},
  doi     = {10.1515/crll.1839.19.309}
}

@article{eisenhart,
    AUTHOR = {Eisenhart, L. P.},
     TITLE = {Separable systems of {S}tackel},
   JOURNAL = {Ann. of Math. (2)},
  FJOURNAL = {Annals of Mathematics. Second Series},
    VOLUME = {35},
      YEAR = {1934},
    NUMBER = {2},
     PAGES = {284--305},
      ISSN = {0003-486X},
   MRCLASS = {DML},
  MRNUMBER = {1503163},
       URL = {https://doi.org/10.2307/1968433},
}

@book {Bbook,
    AUTHOR = {B\l aszak, M.},
     TITLE = {Quantum versus classical mechanics and integrability
              problems---towards a unification of approaches and tools},
 PUBLISHER = {Springer, Cham},
      YEAR = {2019},
     PAGES = {xiii+460},
   MRCLASS = {70S05 (53D50 53D55 70-2 70H06 81S10)},
  MRNUMBER = {3932148},
MRREVIEWER = {Jeffrey\ K.\ Lawson},
}

@article {Matveev2012,
    AUTHOR = {Matveev, V. S.},
     TITLE = {Two-dimensional metrics admitting precisely one projective
              vector field},
   JOURNAL = {Math. Ann.},
  FJOURNAL = {Mathematische Annalen},
    VOLUME = {352},
      YEAR = {2012},
    NUMBER = {4},
     PAGES = {865--909},
      ISSN = {0025-5831,1432-1807},
   MRCLASS = {53A20 (53B20 53C22 53C24)},
  MRNUMBER = {2892455},
MRREVIEWER = {Rafael\ Oswaldo\ Ruggiero},
       DOI = {10.1007/s00208-011-0659-y},
       URL = {https://doi.org/10.1007/s00208-011-0659-y},
}

@article {Kokkinos2023,
    AUTHOR = {Kokkinos, D. and Papakostas, T.},
     TITLE = {The study of the canonical forms of {K}illing tensor in vacuum
              with {$\Lambda$}},
   JOURNAL = {Gen. Relativity Gravitation},
  FJOURNAL = {General Relativity and Gravitation},
    VOLUME = {56},
      YEAR = {2024},
    NUMBER = {11},
     PAGES = {Paper No. 134, 40},
      ISSN = {0001-7701,1572-9532},
   MRCLASS = {83C20 (83C60)},
  MRNUMBER = {4819936},
MRREVIEWER = {Nikolaos\ Dimakis},
       DOI = {10.1007/s10714-024-03321-w},
       URL = {https://doi.org/10.1007/s10714-024-03321-w},
}

@article {Ferapontov3,
    AUTHOR = {Ferapontov, E. V.},
     TITLE = {Several conjectures and results in the theory of integrable
              {H}amiltonian systems of hydrodynamic type, which do not
              possess {R}iemann invariants},
   JOURNAL = {Teoret. Mat. Fiz.},
  FJOURNAL = {Teoreticheskaya i Matematicheskaya Fizika},
    VOLUME = {99},
      YEAR = {1994},
    NUMBER = {2},
     PAGES = {257--262},
      ISSN = {0564-6162,2305-3135},
   MRCLASS = {58F07 (35Q58)},
  MRNUMBER = {1308787},
MRREVIEWER = {Oleg\ M.\ Myasnichenko},
       DOI = {10.1007/BF01016140},
       URL = {https://doi.org/10.1007/BF01016140},
}

@article {80,
    AUTHOR = {Bolsinov, A. V.},
     TITLE = {Geodesically equivalent metrics and {N}ijenhuis geometry},
   JOURNAL = {Mat. Sb.},
  FJOURNAL = {Matematicheski\u i\ Sbornik},
    VOLUME = {216},
      YEAR = {2025},
    NUMBER = {5},
     PAGES = {5--32},
      ISSN = {0368-8666,2305-2783},
   MRCLASS = {53B20 (53C15 53D17 70H06)},
  MRNUMBER = {4933098},
       DOI = {10.4213/sm10220},
       URL = {https://doi.org/10.4213/sm10220},
}

@article {MagriSchw,
    AUTHOR = {Kosmann-Schwarzbach, Y. and Magri, F.},
     TITLE = {Poisson-{N}ijenhuis structures},
   JOURNAL = {Ann. Inst. H. Poincar\'e{} Phys. Th\'eor.},
  FJOURNAL = {Annales de l'Institut Henri Poincar\'e. Physique Th\'eorique},
    VOLUME = {53},
      YEAR = {1990},
    NUMBER = {1},
     PAGES = {35--81},
      ISSN = {0246-0211},
   MRCLASS = {17B55 (17B70 58F07)},
  MRNUMBER = {1077465},
MRREVIEWER = {Johannes\ Huebschmann},
       URL = {http://www.numdam.org/item?id=AIHPA_1990__53_1_35_0},
}

@incollection {Magri1995,
    AUTHOR = {Magri, F.},
     TITLE = {On the geometry of soliton equations},
      NOTE = {Geometric and algebraic structures in differential equations},
   JOURNAL = {Acta Appl. Math.},
  FJOURNAL = {Acta Applicandae Mathematicae},
    VOLUME = {41},
      YEAR = {1995},
    NUMBER = {1-3},
     PAGES = {247--270},
      ISSN = {0167-8019,1572-9036},
   MRCLASS = {58F07 (35A30 35Q53)},
  MRNUMBER = {1362129},
MRREVIEWER = {Anatoliy\ Prykarpatsky},
       DOI = {10.1007/BF00996115},
       URL = {https://doi.org/10.1007/BF00996115},
}

@article {Penrose1970,
    AUTHOR = {Walker, M. and Penrose, R.},
     TITLE = {On quadratic first integrals of the geodesic equations for
              type {$\{22\}$} spacetimes},
   JOURNAL = {Comm. Math. Phys.},
  FJOURNAL = {Communications in Mathematical Physics},
    VOLUME = {18},
      YEAR = {1970},
     PAGES = {265--274},
      ISSN = {0010-3616,1432-0916},
   MRCLASS = {83.53},
  MRNUMBER = {272351},
MRREVIEWER = {J.\ R.\ Vanstone},
       URL = {http://projecteuclid.org/euclid.cmp/1103842577},
}

@article {FP2003,
    AUTHOR = {Falqui, G. and Pedroni, M.},
     TITLE = {Separation of variables for bi-{H}amiltonian systems},
   JOURNAL = {Math. Phys. Anal. Geom.},
  FJOURNAL = {Mathematical Physics, Analysis and Geometry. An International
              Journal Devoted to the Theory and Applications of Analysis and
              Geometry to Physics},
    VOLUME = {6},
      YEAR = {2003},
    NUMBER = {2},
     PAGES = {139--179},
      ISSN = {1385-0172,1572-9656},
   MRCLASS = {37J35 (37K10 70H06 70H20)},
  MRNUMBER = {1987041},
MRREVIEWER = {Anatoly\ G.\ Meshkov},
       DOI = {10.1023/A:1024080315471},
       URL = {https://doi.org/10.1023/A:1024080315471},
}

@article {BMR2017,
    AUTHOR = {Ballesteros, A. and Marrero, J. C. and Ravanpak, Z.},
     TITLE = {Poisson-{L}ie groups, bi-{H}amiltonian systems and integrable
              deformations},
   JOURNAL = {J. Phys. A},
  FJOURNAL = {Journal of Physics. A. Mathematical and Theoretical},
    VOLUME = {50},
      YEAR = {2017},
    NUMBER = {14},
     PAGES = {145204, 25},
      ISSN = {1751-8113,1751-8121},
   MRCLASS = {81R12},
  MRNUMBER = {3628767},
       DOI = {10.1088/1751-8121/aa617b},
       URL = {https://doi.org/10.1088/1751-8121/aa617b},
}

@article {Ferapontov1991c,
    AUTHOR = {Ferapontov, E. V.},
     TITLE = {Differential geometry of nonlocal {H}amiltonian operators of
              hydrodynamic type},
   JOURNAL = {Funktsional. Anal. i Prilozhen.},
  FJOURNAL = {Akademiya Nauk SSSR. Funktsional\cprime ny\u i\ Analiz i ego
              Prilozheniya},
    VOLUME = {25},
      YEAR = {1991},
    NUMBER = {3},
     PAGES = {37--49, 95},
      ISSN = {0374-1990},
   MRCLASS = {58F05 (35Q35 58F07)},
  MRNUMBER = {1139873},
MRREVIEWER = {Ya.\ L.\ Umanskiy},
       DOI = {10.1007/BF01085489},
       URL = {https://doi.org/10.1007/BF01085489},
}

@article {Ferapontov1989d,
    AUTHOR = {Ferapontov, E. V.},
     TITLE = {Reciprocal transformations and their invariants},
   JOURNAL = {Differentsialnye Uravneniya},
  FJOURNAL = {Differentsialnye Uravneniya},
    VOLUME = {25},
      YEAR = {1989},
    NUMBER = {7},
     PAGES = {1256--1265, 1286},
      ISSN = {0374-0641},
   MRCLASS = {58G37 (35A30 73C99 76N15)},
  MRNUMBER = {1012418},
MRREVIEWER = {Eddy\ Chakurov},
}

@article {nakane,
    AUTHOR = {Nakane, M. and Fraser, C. G.},
     TITLE = {The early history of {H}amilton-{J}acobi dynamics 1834--1837},
   JOURNAL = {Centaurus},
  FJOURNAL = {Centaurus. An International Journal of the History of Science
              and its Cultural Aspects},
    VOLUME = {44},
      YEAR = {2002},
    NUMBER = {3-4},
     PAGES = {161--227},
      ISSN = {0008-8994,1600-0498},
   MRCLASS = {01A55 (70-03)},
  MRNUMBER = {1967064},
MRREVIEWER = {Antonino\ Drago},
       DOI = {10.1111/j.1600-0498.2002.tb00613.x},
       URL = {https://doi.org/10.1111/j.1600-0498.2002.tb00613.x},
}

@article {Coll2006,
    AUTHOR = {Coll, B. and Ferrando, J. Josep and S\'aez, Ju.
              A.},
     TITLE = {On the geometry of {K}illing and conformal tensors},
   JOURNAL = {J. Math. Phys.},
  FJOURNAL = {Journal of Mathematical Physics},
    VOLUME = {47},
      YEAR = {2006},
    NUMBER = {6},
     PAGES = {062503, 14},
      ISSN = {0022-2488,1089-7658},
   MRCLASS = {53C20 (53C50 53C80 83C20)},
  MRNUMBER = {2239963},
MRREVIEWER = {B.\ D.\ Muratori},
       DOI = {10.1063/1.2207717},
       URL = {https://doi.org/10.1063/1.2207717},
}

@article {Hauser1976,
    AUTHOR = {Hauser, I. and Malhiot, R. J.},
     TITLE = {On space-time {K}illing tensors with a {S}egre characteristic
              {$[(11)(11)]$}},
   JOURNAL = {J. Mathematical Phys.},
  FJOURNAL = {Journal of Mathematical Physics},
    VOLUME = {17},
      YEAR = {1976},
    NUMBER = {7},
     PAGES = {1306--1312},
      ISSN = {0022-2488,1089-7658},
   MRCLASS = {83.53},
  MRNUMBER = {411535},
MRREVIEWER = {J.\ Charles-Renaudie},
       DOI = {10.1063/1.523058},
       URL = {https://doi.org/10.1063/1.523058},
}

@article {kiosak2009,
    AUTHOR = {Kiosak, V. and Matveev, V. S.},
     TITLE = {Complete {E}instein metrics are geodesically rigid},
   JOURNAL = {Comm. Math. Phys.},
  FJOURNAL = {Communications in Mathematical Physics},
    VOLUME = {289},
      YEAR = {2009},
    NUMBER = {1},
     PAGES = {383--400},
      ISSN = {0010-3616,1432-0916},
   MRCLASS = {53C25 (53C22 53C24)},
  MRNUMBER = {2504854},
MRREVIEWER = {Luis\ Guijarro},
       DOI = {10.1007/s00220-008-0719-7},
       URL = {https://doi.org/10.1007/s00220-008-0719-7},
}

@book {MS1961,
    AUTHOR = {Moon, P. and Spencer, D. E.},
     TITLE = {Field theory handbook. {I}ncluding coordinate systems,
              differential equations and their solution},
 PUBLISHER = {Springer-Verlag, Berlin-G\"ottingen-Heidelberg},
      YEAR = {1961},
     PAGES = {vii+236},
   MRCLASS = {69.00},
  MRNUMBER = {136092},
MRREVIEWER = {W.\ Freiberger},
}

@article {Dubrovin1975,
    AUTHOR = {Dubrovin, B. A.},
     TITLE = {A periodic problem for the {K}orteweg-de {V}ries equation in a
              class of short-range potentials},
   JOURNAL = {Funkcional. Anal. i Prilo\v zen.},
  FJOURNAL = {Akademija Nauk SSSR. Funkcional\cprime nyi Analiz i ego
              Prilo\v zenija},
    VOLUME = {9},
      YEAR = {1975},
    NUMBER = {3},
     PAGES = {41--51},
      ISSN = {0374-1990},
   MRCLASS = {34B30 (35Q99 58F05)},
  MRNUMBER = {486780},
MRREVIEWER = {K.\ A.\ Pyragas},
}

@article {nij1,
    AUTHOR = {Bolsinov, A. V. and Konyaev, A. Yu. and Matveev,
              V. S.},
     TITLE = {Nijenhuis geometry},
   JOURNAL = {Adv. Math.},
  FJOURNAL = {Advances in Mathematics},
    VOLUME = {394},
      YEAR = {2022},
     PAGES = {Paper No. 108001, 52},
      ISSN = {0001-8708,1090-2082},
   MRCLASS = {53D17},
  MRNUMBER = {4355721},
MRREVIEWER = {Mohamed\ Boucetta},
  
URL = {https://doi.org/10.1016/j.aim.2021.108001}
}

@article {nijenhuis,
    AUTHOR = {Nijenhuis, A.},
     TITLE = {{$X_{n-1}$}-forming sets of eigenvectors},
      NOTE = {Nederl. Akad. Wetensch. Proc. Ser. A {\bf 54}},
   JOURNAL = {Indag. Math.},
  FJOURNAL = {},
    VOLUME = {13},
      YEAR = {1951},
     PAGES = {200--212},
   MRCLASS = {53.0X},
  MRNUMBER = {43540},
MRREVIEWER = {J.\ A.\ Schouten},
}

@article {KKM2024,
    AUTHOR = {Konyaev, A. Yu. and Kress, J. M. and Matveev,
              V. S.},
     TITLE = {When a {$(1,1)$}-tensor generates separation of variables of a
              certain metric},
   JOURNAL = {J. Geom. Phys.},
  FJOURNAL = {Journal of Geometry and Physics},
    VOLUME = {195},
      YEAR = {2024},
     PAGES = {Paper No. 105031, 9},
      ISSN = {0393-0440,1879-1662},
   MRCLASS = {53A55 (37J35 70H06)},
  MRNUMBER = {4664989},
MRREVIEWER = {I.\ S.\ Krasil\cprime shchik},
       URL = {https://doi.org/10.1016/j.geomphys.2023.105031},
}

@article {DKV2016,
    AUTHOR = {De Sole, A. and Kac, V. G. and Valeri, D.},
     TITLE = {A new scheme of integrability for (bi){H}amiltonian {PDE}},
   JOURNAL = {Comm. Math. Phys.},
  FJOURNAL = {Communications in Mathematical Physics},
    VOLUME = {347},
      YEAR = {2016},
    NUMBER = {2},
     PAGES = {449--488},
      ISSN = {0010-3616,1432-0916},
   MRCLASS = {17B80 (17B63 17B69 37K30)},
  MRNUMBER = {3545513},
MRREVIEWER = {Yuri\ B.\ Suris},
       DOI = {10.1007/s00220-016-2684-x},
       URL = {https://doi.org/10.1007/s00220-016-2684-x},
}

@article {nijapp2,
    AUTHOR = {Bolsinov, A. V. and Konyaev, A. Yu. and Matveev,
              V. S.},
     TITLE = {Applications of {N}ijenhuis geometry {II}: maximal pencils of
              multi-{H}amiltonian structures of hydrodynamic type},
   JOURNAL = {Nonlinearity},
  FJOURNAL = {Nonlinearity},
    VOLUME = {34},
      YEAR = {2021},
    NUMBER = {8},
     PAGES = {5136--5162},
      ISSN = {0951-7715,1361-6544},
   MRCLASS = {37K06 (37Kxx 53A20 53B 53B20 53D17)},
  MRNUMBER = {4281470},
MRREVIEWER = {I.\ S.\ Krasil\cprime shchik},

       URL = {https://doi.org/10.1088/1361-6544/abed39},
}

@article {nijapp3,
    AUTHOR = {Bolsinov, A. V. and Konyaev, A. Yu. and Matveev,
              V. S.},
     TITLE = {Applications of {N}ijenhuis geometry {III}: {F}robenius
              pencils and compatible non-homogeneous {P}oisson structures},
   JOURNAL = {J. Geom. Anal.},
  FJOURNAL = {Journal of Geometric Analysis},
    VOLUME = {33},
      YEAR = {2023},
    NUMBER = {6},
     PAGES = {Paper No. 193, 52},
      ISSN = {1050-6926,1559-002X},
   MRCLASS = {53D17 (37K06 37K10 37K25 37K50 53A20 53B10)},
  MRNUMBER = {4576387},
       URL = {https://doi.org/10.1007/s12220-023-01237-6},
}

@article {BMP2012,
    AUTHOR = {Bolsinov, A. V. and Matveev, V. S. and Pucacco,
              G.},
     TITLE = {Normal forms for pseudo-{R}iemannian 2-dimensional metrics
              whose geodesic flows admit integrals quadratic in momenta},
   JOURNAL = {J. Geom. Phys.},
  FJOURNAL = {Journal of Geometry and Physics},
    VOLUME = {59},
      YEAR = {2009},
    NUMBER = {7},
     PAGES = {1048--1062},
      ISSN = {0393-0440,1879-1662},
   MRCLASS = {53D25 (37J35)},
  MRNUMBER = {2536861},
MRREVIEWER = {Giovanni\ Rastelli},
       DOI = {10.1016/j.geomphys.2009.04.010},
       URL = {https://doi.org/10.1016/j.geomphys.2009.04.010},
}

@article {nij3,
    AUTHOR = {Bolsinov, A. V. and Konyaev, A. Yu. and Matveev,
              V. S.},
     TITLE = {Nijenhuis geometry {III}: gl-regular {N}ijenhuis operators},
   JOURNAL = {Rev. Mat. Iberoam.},
  FJOURNAL = {Revista Matem\'{a}tica Iberoamericana},
    VOLUME = {40},
      YEAR = {2024},
    NUMBER = {1},
     PAGES = {155--188},
      ISSN = {0213-2230,2235-0616},
   MRCLASS = {53C15 (35A30 35B06 35F20 35N10)},
  MRNUMBER = {4705683},
       URL = {https://doi.org/10.4171/rmi/1416}
}

@article {nijappl5,
    AUTHOR = {Bolsinov, A. V. and Konyaev, A. Yu. and Matveev,
              V. S.},
     TITLE = {Applications of {N}ijenhuis geometry {V}: geodesic equivalence
              and finite-dimensional reductions of integrable quasilinear
              systems},
   JOURNAL = {J. Nonlinear Sci.},
  FJOURNAL = {Journal of Nonlinear Science},
    VOLUME = {34},
      YEAR = {2024},
    NUMBER = {2},
     PAGES = {Paper No. 33, 18},
      ISSN = {0938-8974,1432-1467},
   MRCLASS = {37K06 (37K10 37K25 37K50)},
  MRNUMBER = {4696318},
       URL = {https://doi.org/10.1007/s00332-023-10008-0},
}

@article {frolicher1,
    AUTHOR = {Fr\"olicher, A. and Nijenhuis, A.},
     TITLE = {Theory of vector-valued differential forms. {I}. {D}erivations
              of the graded ring of differential forms},
      NOTE = {Nederl. Akad. Wetensch. Proc. Ser. A {\bf 59}},
   JOURNAL = {Indag. Math.},
  FJOURNAL = {},
    VOLUME = {18},
      YEAR = {1956},
     PAGES = {338--359},
   MRCLASS = {27.4X},
  MRNUMBER = {82554},
MRREVIEWER = {G.\ Papy},
}

@article {frolicher2,
    AUTHOR = {Fr\"olicher, A. and Nijenhuis, A.},
     TITLE = {Theory of vector-valued differential forms. {II}.
              {A}lmost-complex structures},
      NOTE = {Nederl. Akad. Wetensch. Proc. Ser. A {\bf 61}},
   JOURNAL = {Indag. Math.},
  FJOURNAL = {},
    VOLUME = {20},
      YEAR = {1958},
     PAGES = {414--429},
   MRCLASS = {32.00 (53.00)},
  MRNUMBER = {102095},
MRREVIEWER = {M.\ F.\ Atiyah},
}

@article {yano,
    AUTHOR = {Yano, K. and Patterson, E. M.},
     TITLE = {Vertical and complete lifts from a manifold to its cotangent
              bundle},
   JOURNAL = {J. Math. Soc. Japan},
  FJOURNAL = {Journal of the Mathematical Society of Japan},
    VOLUME = {19},
      YEAR = {1967},
     PAGES = {91--113},
      ISSN = {0025-5645,1881-1167},
   MRCLASS = {53.50},
  MRNUMBER = {206868},
MRREVIEWER = {A.\ G.\ Walker},
       DOI = {10.2969/jmsj/01910091},
       URL = {https://doi.org/10.2969/jmsj/01910091},
}

@article {magri2,
    AUTHOR = {Magri, F.},
     TITLE = {Lenard chains for classical integrable systems},
   JOURNAL = {Teoret. Mat. Fiz.},
  FJOURNAL = {Teoreticheskaya i Matematicheskaya Fizika},
    VOLUME = {137},
      YEAR = {2003},
    NUMBER = {3},
     PAGES = {424--432},
      ISSN = {0564-6162,2305-3135},
   MRCLASS = {37J35 (70H06)},
  MRNUMBER = {2084151},
 
       URL = {https://doi.org/10.1023/B:TAMP.0000007919.80743.1e},
}

@book {Kalninsbook,
    AUTHOR = {Kalnins, E. G.},
     TITLE = {Separation of variables for {R}iemannian spaces of constant
              curvature},
    SERIES = {Pitman Monographs and Surveys in Pure and Applied Mathematics},
    VOLUME = {28},
 PUBLISHER = {Longman Scientific \& Technical, Harlow; John Wiley \& Sons,
              Inc., New York},
      YEAR = {1986},
     PAGES = {viii+172},
      ISBN = {0-582-98807-1},
   MRCLASS = {53C21 (53C80 58G30 58G37)},
  MRNUMBER = {869997},
MRREVIEWER = {H.\ Kilp},
}

@article {MaBl2015,
    AUTHOR = {Marciniak, K. and B\l aszak, M.},
     TITLE = {Flat coordinates of flat {S}t\"ackel systems},
   JOURNAL = {Appl. Math. Comput.},
  FJOURNAL = {Applied Mathematics and Computation},
    VOLUME = {268},
      YEAR = {2015},
     PAGES = {706--716},
      ISSN = {0096-3003,1873-5649},
   MRCLASS = {70G45 (37J35)},
  MRNUMBER = {3399456},
       DOI = {10.1016/j.amc.2015.06.099},
       URL = {https://doi.org/10.1016/j.amc.2015.06.099},
}

@article {Konyaev2023,
    AUTHOR = {Konyaev, A. Yu.},
     TITLE = {Symmetric matrices and maximal {N}ijenhuis pencils},
   JOURNAL = {Mat. Sb.},
  FJOURNAL = {Matematicheski\u i\ Sbornik},
    VOLUME = {214},
      YEAR = {2023},
    NUMBER = {8},
     PAGES = {53--62},
      ISSN = {0368-8666,2305-2783},
   MRCLASS = {53A45 (53D17)},
  MRNUMBER = {4687817},
       DOI = {10.4213/sm9862},
       URL = {https://doi.org/10.4213/sm9862},
}

@article {separation,
    AUTHOR = {Bolsinov, A. V. and Konyaev, A. Yu. and Matveev,
              V. S.},
     TITLE = {Orthogonal separation of variables for spaces of constant
              curvature},
   JOURNAL = {Forum Math.},
  FJOURNAL = {Forum Mathematicum},
    VOLUME = {37},
      YEAR = {2025},
    NUMBER = {1},
     PAGES = {13--41},
      ISSN = {0933-7741,1435-5337},
   MRCLASS = {53D17 (37Jxx 53B50 70G45 70H15 70H20 70S10)},
  MRNUMBER = {4846649},
       URL = {https://doi.org/10.1515/forum-2023-0300},
}

@article {BMMT2018,
    AUTHOR = {Bolsinov, A. and Matveev, V. S. and Miranda, E. and
              Tabachnikov, S.},
     TITLE = {Open problems, questions and challenges in finite-dimensional
              integrable systems},
   JOURNAL = {Philos. Trans. Roy. Soc. A},
  FJOURNAL = {Philosophical Transactions of the Royal Society A.
              Mathematical, Physical and Engineering Sciences},
    VOLUME = {376},
      YEAR = {2018},
    NUMBER = {2131},
     PAGES = {20170430, 40},
      ISSN = {1364-503X,1471-2962},
   MRCLASS = {37J99 (53D17)},
  MRNUMBER = {3868426},

       URL = {https://doi.org/10.1098/rsta.2017.0430},
}

@article {l,
    AUTHOR = {Arsie, A. and Buryak, A. and Lorenzoni, P.
              and Rossi, P.},
     TITLE = {Riemannian {F}-manifolds, bi-flat {F}-manifolds, and flat
              pencils of metrics},
   JOURNAL = {Int. Math. Res. Not. IMRN},
  FJOURNAL = {International Mathematics Research Notices. IMRN},
      YEAR = {2022},
       VOLUME = {2022},
    NUMBER = {21},
     PAGES = {16730--16778},
      ISSN = {1073-7928,1687-0247},
   MRCLASS = {53D45},
  MRNUMBER = {4504906},
MRREVIEWER = {Robert\ Brouzet},
       URL = {https://doi.org/10.1093/imrn/rnab203},
}

@article {BKM2025,
    AUTHOR = {Bolsinov, A. V. and Konyaev, A. Yu. and Matveev,
              V. S.},
     TITLE = {Finite-dimensional reductions and finite-gap-type solutions of
              multicomponent integrable {PDE}s},
   JOURNAL = {Stud. Appl. Math.},
  FJOURNAL = {Studies in Applied Mathematics},
    VOLUME = {155},
      YEAR = {2025},
    NUMBER = {2},
     PAGES = {Paper No. e70100, 30},
      ISSN = {0022-2526,1467-9590},
   MRCLASS = {37Kxx},
  MRNUMBER = {4949662},
       DOI = {10.1111/sapm.70100},
       URL = {https://doi.org/10.1111/sapm.70100},
}

@article {Ferapontov1991,
    AUTHOR = {Ferapontov, E. V.},
     TITLE = {Integration of weakly nonlinear hydrodynamic systems in
              {R}iemann invariants},
   JOURNAL = {Phys. Lett. A},
  FJOURNAL = {Physics Letters. A},
    VOLUME = {158},
      YEAR = {1991},
    NUMBER = {3-4},
     PAGES = {112--118},
      ISSN = {0375-9601,1873-2429},
   MRCLASS = {35Q35 (58F07)},
  MRNUMBER = {1124347},
       DOI = {10.1016/0375-9601(91)90910-Z},
       URL = {https://doi.org/10.1016/0375-9601(91)90910-Z},
}

@article {Ferapontov1991b,
    AUTHOR = {Ferapontov, E. V.},
     TITLE = {Equations of hydrodynamic type from the standpoint of web
              theory},
   JOURNAL = {Mat. Zametki},
  FJOURNAL = {Akademiya Nauk SSSR. Matematicheskie Zametki},
    VOLUME = {50},
      YEAR = {1991},
    NUMBER = {5},
     PAGES = {97--108},
      ISSN = {0025-567X},
   MRCLASS = {35Q30 (53A60)},
  MRNUMBER = {1155562},
MRREVIEWER = {J.\ Chrastina},
       DOI = {10.1007/BF01157705},
       URL = {https://doi.org/10.1007/BF01157705},
}

@article {Tsarev1990,
    AUTHOR = {Tsar\"ev, S. P.},
     TITLE = {The geometry of {H}amiltonian systems of hydrodynamic type.
              {T}he generalized hodograph method},
   JOURNAL = {Izv. Akad. Nauk SSSR Ser. Mat.},
  FJOURNAL = {Izvestiya Akademii Nauk SSSR. Seriya Matematicheskaya},
    VOLUME = {54},
      YEAR = {1990},
    NUMBER = {5},
     PAGES = {1048--1068},
      ISSN = {0373-2436},
   MRCLASS = {58F07 (35Q35 58F05 76B99 76M99)},
  MRNUMBER = {1086085},
MRREVIEWER = {Galina\ G.\ Okuneva},
       DOI = {10.1070/IM1991v037n02ABEH002069},
       URL = {https://doi.org/10.1070/IM1991v037n02ABEH002069},
}

@article {carter1968,
    AUTHOR = {Carter, B.},
     TITLE = {Hamilton-{J}acobi and {S}chr\"odinger separable solutions of
              {E}instein's equations},
   JOURNAL = {Comm. Math. Phys.},
  FJOURNAL = {Communications in Mathematical Physics},
    VOLUME = {10},
      YEAR = {1968},
     PAGES = {280--310},
      ISSN = {0010-3616,1432-0916},
   MRCLASS = {83.53},
  MRNUMBER = {239841},
MRREVIEWER = {W.\ Israel},
       URL = {http://projecteuclid.org/euclid.cmp/1103841118},
}

@article {Woodhouse1975,
    AUTHOR = {Woodhouse, N. M. J.},
     TITLE = {Killing tensors and the separation of the {H}amilton-{J}acobi
              equation},
   JOURNAL = {Comm. Math. Phys.},
  FJOURNAL = {Communications in Mathematical Physics},
    VOLUME = {44},
      YEAR = {1975},
    NUMBER = {1},
     PAGES = {9--38},
      ISSN = {0010-3616,1432-0916},
   MRCLASS = {83.53},
  MRNUMBER = {406368},
MRREVIEWER = {J.\ Charles-Renaudie},
       URL = {http://projecteuclid.org/euclid.cmp/1103899250},
}

@article {BMR2021,
    AUTHOR = {Bolsinov, A. V. and Matveev, V. S. and Rosemann,
              S.},
     TITLE = {Local normal forms for c-projectively equivalent metrics and
              proof of the {Y}ano-{O}bata conjecture in arbitrary signature.
              {P}roof of the projective {L}ichnerowicz conjecture for
              {L}orentzian metrics},
   JOURNAL = {Ann. Sci. \'Ec. Norm. Sup\'er. (4)},
  FJOURNAL = {Annales Scientifiques de l'\'Ecole Normale Sup\'erieure.
              Quatri\`eme S\'erie},
    VOLUME = {54},
      YEAR = {2021},
    NUMBER = {6},
     PAGES = {1465--1540},
      ISSN = {0012-9593,1873-2151},
   MRCLASS = {53C55 (32Q15)},
  MRNUMBER = {4371392},
MRREVIEWER = {Josef\ Mike\v s},
       DOI = {10.24033/asens.2487},
       URL = {https://doi.org/10.24033/asens.2487},
}

@article {Pavlov2018,
    AUTHOR = {Pavlov, M. V.},
     TITLE = {Integrability of exceptional hydrodynamic-type systems},
      NOTE = {English version published in Proc. Steklov Inst. Math. {\bf
              302} (2018), no. 1, 325--335},
   JOURNAL = {Tr. Mat. Inst. Steklova},
  FJOURNAL = {Trudy Matematicheskogo Instituta Imeni V. A. Steklova},
    VOLUME = {302},
      YEAR = {2018},
     PAGES = {343--353},
      ISSN = {0371-9685,3034-1809},
   MRCLASS = {37K10 (35Q35)},
  MRNUMBER = {3894653},
       DOI = {10.1134/S0371968518030172},
       URL = {https://doi.org/10.1134/S0371968518030172},
}

@article{LPV2024,
  author  = {Lorenzoni, P. and Perletti, S. and van Gemst, K.},
  title   = {The generalised hodograph method for non-diagonalisable integrable systems of hydrodynamic type},
  journal = {arXiv preprint arXiv:2410.20925},
  year    = {2024},
  eprint  = {2410.20925},
  archivePrefix = {arXiv},
  primaryClass  = {nlin.SI},
  doi     = {10.48550/arXiv.2410.20925}
}

@article {FerapontovFordy1997,
    AUTHOR = {Ferapontov, E. V. and Fordy, A. P.},
     TITLE = {Separable {H}amiltonians and integrable systems of
              hydrodynamic type},
   JOURNAL = {J. Geom. Phys.},
  FJOURNAL = {Journal of Geometry and Physics},
    VOLUME = {21},
      YEAR = {1997},
    NUMBER = {2},
     PAGES = {169--182},
      ISSN = {0393-0440,1879-1662},
   MRCLASS = {58F05 (58F07 70H05)},
  MRNUMBER = {1427864},
MRREVIEWER = {Efthymia\ Meletlidou},
       DOI = {10.1016/S0393-0440(96)00013-7},
       URL = {https://doi.org/10.1016/S0393-0440(96)00013-7},
}

@article {BS2023,
    AUTHOR = {Szablikowski, B. M. and Blaszak, M. and
              Marciniak, K.},
     TITLE = {Stationary coupled {K}d{V} systems and their {S}t\"ackel
              representations},
   JOURNAL = {Stud. Appl. Math.},
  FJOURNAL = {Studies in Applied Mathematics},
    VOLUME = {153},
      YEAR = {2024},
    NUMBER = {1},
     PAGES = {Paper No. e12698, 54},
      ISSN = {0022-2526,1467-9590},
   MRCLASS = {58J70 (35Q53 58A20)},
  MRNUMBER = {4773432},
}

@article {ob,
    AUTHOR = {Bogoyavlenski\u i, O. I.},
     TITLE = {General algebraic identities for the {N}ijenhuis and
              {H}aantjes tensors},
   JOURNAL = {Izv. Ross. Akad. Nauk Ser. Mat.},
  FJOURNAL = {Izvestiya Rossiiskoi Akademii Nauk. Seriya Matematicheskaya},
    VOLUME = {68},
      YEAR = {2004},
    NUMBER = {6},
     PAGES = {71--84},
      ISSN = {1607-0046,2587-5906},
   MRCLASS = {53C15},
  MRNUMBER = {2108522},
MRREVIEWER = {Olga\ V.\ Radko},
       DOI = {10.1070/IM2004v068n06ABEH000511},
       URL = {https://doi.org/10.1070/IM2004v068n06ABEH000511},
}

@book {gantmaher,
    AUTHOR = {Gantmaher, F. R.},
     TITLE = {{\cyr Teoriya matrits} (Theory of matrices)},
   EDITION = {supplemented},
      NOTE = {With an appendix by V. B. Lidski\u i},
 PUBLISHER = {Izdat. ``Nauka'', Moscow},
      YEAR = {1966},
     PAGES = {576},
   MRCLASS = {15.00},
  MRNUMBER = {202725},
MRREVIEWER = {J.\ L.\ Brenner},
}

@article {ml,
    AUTHOR = {Lorenzoni, P. and Magri, F.},
     TITLE = {A cohomological construction of integrable hierarchies of
              hydrodynamic type},
   JOURNAL = {Int. Math. Res. Not.},
  FJOURNAL = {International Mathematics Research Notices},
      YEAR = {2005},
  VOLUME = {2005}, 
    NUMBER = {34},
     PAGES = {2087--2100},
      ISSN = {1073-7928,1687-0247},
   MRCLASS = {37K10 (35Q53)},
  MRNUMBER = {2181745},
MRREVIEWER = {Andrei\ K.\ Svinin},
       URL = {https://doi.org/10.1155/IMRN.2005.2087},
}

@article {IMM,
    AUTHOR = {Ibort, A. and Magri, F. and Marmo, G.},
     TITLE = {Bihamiltonian structures and {S}t\"{a}ckel separability},
   JOURNAL = {J. Geom. Phys.},
  FJOURNAL = {Journal of Geometry and Physics},
    VOLUME = {33},
      YEAR = {2000},
    NUMBER = {3-4},
     PAGES = {210--228},
      ISSN = {0393-0440,1879-1662},
   MRCLASS = {37J35 (37J05 70H06)},
  MRNUMBER = {1747040},
MRREVIEWER = {Andrey\ V.\ Tsiganov},
    
       URL = {https://doi.org/10.1016/S0393-0440(99)00051-0},
}

@article {BMM2008,
    AUTHOR = {Bryant, R. L. and Manno, G. and Matveev, V. S.},
     TITLE = {A solution of a problem of {S}ophus {L}ie: normal forms of
              two-dimensional metrics admitting two projective vector
              fields},
   JOURNAL = {Math. Ann.},
  FJOURNAL = {Mathematische Annalen},
    VOLUME = {340},
      YEAR = {2008},
    NUMBER = {2},
     PAGES = {437--463},
      ISSN = {0025-5831,1432-1807},
   MRCLASS = {53B20 (53A05)},
  MRNUMBER = {2368987},
MRREVIEWER = {Maciej\ B\l aszak},
       DOI = {10.1007/s00208-007-0158-3},
       URL = {https://doi.org/10.1007/s00208-007-0158-3},
}

@incollection {moser80,
    AUTHOR = {Moser, J.},
     TITLE = {Various aspects of integrable {H}amiltonian systems},
 BOOKTITLE = {Dynamical systems ({C}.{I}.{M}.{E}. {S}ummer {S}chool,
              {B}ressanone, 1978)},
    SERIES = {Progr. Math.},
    VOLUME = {8},
     PAGES = {233--289},
 PUBLISHER = {Birkh\"{a}user, Boston, MA},
      YEAR = {1980},
      ISBN = {3-7643-3024-4},
   MRCLASS = {58F07 (34C35 34C40 70Hxx)},
  MRNUMBER = {589592},
MRREVIEWER = {Helmut\ R\"{u}ssmann},
}

@article {nij4,
    AUTHOR = {Bolsinov, A. V. and Konyaev, A.
              Yu. and Matveev,
              V. S.},
     TITLE = {Nijenhuis geometry {IV}: conservation laws, symmetries and
              integration of certain non-diagonalisable systems of
              hydrodynamic type in quadratures},
   JOURNAL = {Nonlinearity},
  FJOURNAL = {Nonlinearity},
    VOLUME = {37},
      YEAR = {2024},
    NUMBER = {10},
     PAGES = {Paper No. 105003, 27},
      ISSN = {0951-7715,1361-6544},
   MRCLASS = {53B30 (37K06 47N20 53A20 53B10 58J70)},
  MRNUMBER = {4792045},
}

@article {Dubrovin75,
    AUTHOR = {Novikov, S. P.},
     TITLE = {A periodic problem for the {K}orteweg-de {V}ries equation.
              {I}},
   JOURNAL = {Funkcional. Anal. i Prilo\v{z}en.},
  FJOURNAL = {Akademija Nauk SSSR. Funkcional\cprime nyi Analiz i ego
              Prilo\v{z}enija},
    VOLUME = {8},
      YEAR = {1974},
    NUMBER = {3},
     PAGES = {54--66},
      ISSN = {0374-1990},
   MRCLASS = {35Q99},
  MRNUMBER = {382878},
MRREVIEWER = {H.\ Flaschka},
}

@article {MT1997,
    AUTHOR = {Matveev, V. S. and Topalov, P. I.},
     TITLE = {Trajectory equivalence and corresponding integrals},
   JOURNAL = {Regul. Chaotic Dyn.},
  FJOURNAL = {Regular \& Chaotic Dynamics. Regulyarnaya \& Khaoticheskaya
              Dinamika},
    VOLUME = {3},
      YEAR = {1998},
    NUMBER = {2},
     PAGES = {30--45},
      ISSN = {1560-3547,1468-4845},
   MRCLASS = {37J35 (53D25 70H06)},
  MRNUMBER = {1693470},
MRREVIEWER = {Iskander\ A.\ Taimanov},
  
       URL = {https://doi.org/10.1070/rd1998v003n02ABEH000069},
}

@article {DRW2004,
    AUTHOR = {Waalkens, H. and Dullin, H. R. and Richter, P. H.},
     TITLE = {The problem of two fixed centers: bifurcations, actions,
              monodromy},
   JOURNAL = {Phys. D},
  FJOURNAL = {Physica D. Nonlinear Phenomena},
    VOLUME = {196},
      YEAR = {2004},
    NUMBER = {3-4},
     PAGES = {265--310},
      ISSN = {0167-2789,1872-8022},
   MRCLASS = {70F10 (37N05 81Q20 81V55)},
  MRNUMBER = {2090354},
MRREVIEWER = {M\'ario\ Jorge Dias Carneiro},
       DOI = {10.1016/j.physd.2004.05.006},
       URL = {https://doi.org/10.1016/j.physd.2004.05.006},
}

@article{stackel1,
  author  = {St{\"a}ckel, P.},
  title   = {{\"U}ber die Bewegung eines Punktes in einer n-fachen Mannigfaltigkeit},
  journal = {Mathematische Annalen},
  volume  = {42},
  pages   = {537--563},
  year    = {1893},
  doi     = {10.1007/BF01447379}
}

@article{stackel2,
  author  = {St{\"a}ckel, P.},
  title   = {{\"U}ber quadratische Integrale der Differentialgleichungen der Dynamik},
  journal = {Annali di Matematica Pura ed Applicata},
  volume  = {26},
  pages   = {55--60},
  year    = {1897}
}

\end{document}